\documentclass[12pt]{article}
\usepackage[margin=0.9in]{geometry}
\usepackage{amsthm,amsmath,amssymb,amscd,enumerate, graphicx,caption,subcaption,bm,booktabs,longtable,colortbl,array,makecell}
\usepackage{setspace,array}
\usepackage{thmtools}
\usepackage{thm-restate}
\usepackage{color}
\usepackage{xcolor}
\usepackage{natbib}
\usepackage[mathscr]{euscript}
\usepackage[shortlabels]{enumitem}
\usepackage{xr-hyper}
\usepackage{xr}
\usepackage[colorlinks=true, urlcolor=blue, linkcolor=blue, citecolor=blue]{hyperref}
\usepackage{algpseudocode}
\usepackage{algorithm}
\algrenewcommand\algorithmicindent{0.4em}
\usepackage{natbib}
\usepackage[T1]{fontenc}
\usepackage[utf8]{inputenc}
\usepackage{authblk}
\usepackage{threeparttable}
\usepackage{multirow}

\DeclareMathOperator{\expit}{expit}
\DeclareMathOperator*{\argmin}{arg\,min}

\newcommand{\indep}{\mathrel{\perp\!\!\!\perp}}

\newtheorem{lemma}{Lemma}
\newtheorem{theorem}{Theorem}

\title{Marginal generalized raking with parametric working models}
\author[1,2,3,*]{Brian~D. Williamson}
\author[2]{Runjia Zou}
\author[4]{Thomas Lumley}
\author[5]{Bryan~E. Shepherd}
\author[1,2]{Pamela~A. Shaw}
\affil[1]{Biostatistics Division, Kaiser Permanente Washington Health Research Institute}
\affil[2]{Department of Biostatistics, University of Washington}
\affil[3]{Vaccine and Infectious Disease Division, Fred Hutchinson Cancer Center}
\affil[4]{Department of Statistics, University of Auckland}
\affil[5]{Department of Biostatistics, Vanderbilt University}
\affil[*]{Corresponding Author: Brian D. Williamson. Email: brian.d.williamson@kp.org}

\makeatletter
\newcommand*{\addFileDependency}[1]{% argument=file name and extension
  \typeout{(#1)}% latexmk will find this if $recorder=0 (however, in that case, it will ignore #1 if it is a .aux or .pdf file etc and it exists! if it doesn't exist, it will appear in the list of dependents regardless)
  \@addtofilelist{#1}% if you want it to appear in \listfiles, not really necessary and latexmk doesn't use this
  \IfFileExists{#1}{}{\typeout{No file #1.}}% latexmk will find this message if #1 doesn't exist (yet)
}
\makeatother

\newcommand*{\myexternaldocument}[1]{%
    \externaldocument{#1}%
    \addFileDependency{#1.tex}%
    \addFileDependency{#1.aux}%
}
\myexternaldocument{supp}

\newif\ifarXiv
\arXivtrue 
\begin{document}

\maketitle

\begin{abstract}
    Generalized raking (GR) was originally developed in the survey statistics literature to incorporate auxiliary information in estimation. Recently, it has been used in the biostatistical and epidemiological literature to estimate regression coefficients in parametric models in cases with missing data, including missing data by design (e.g., two-phase studies). In the regression parameter context, the optimal GR estimator has been shown to be equivalent to the optimal augmented inverse probability weighted estimator. In this paper, we generalize the influence function-based theory for GR to marginal estimands; we call our approach \textit{marginal generalized raking}. We compare our approach to a naive procedure that marginalizes a conditional GR estimator of regression parameters in both fully-synthetic simulations and in an application using data from an observational cohort of persons living with HIV. 
\end{abstract}

\doublespacing

\section{Introduction}\label{sec:intro}

In the biomedical literature, there is increasing interest in estimating target parameters that are marginal quantities (e.g., average treatment effects) rather than conditional quantities (e.g., regression coefficients) \citep{scharfstein1999adjusting,bang2005doubly,kang2007demystifying,vanderlaan2006targeted,cao2009improving,vanderlaan2011targeted,shook2025double,williamson2026assessing}. Marginal parameters are natural targets in causal inference studies. Under certain assumptions, the estimable quantity is a causal effect \citep{vanderlaan2006targeted}; the marginal parameter can still be a valid statistical target regardless of whether the causal assumptions hold. 

Estimation is complicated in settings with missing or mismeasured data. We will refer to such settings as coarsened data settings \citep{vandervaart2000asymptotic} and propose to use methods from the two-phase sampling literature to handle the missing data, namely generalized raking \citep{deville1993generalized,breslow2009improved,lumley2011connections}. In the measurement error setting, the missing data are gold standard (error-free) data that are only measured on a subsample. The classical approaches for estimating marginal or conditional estimands in the coarsened data setting are multiple imputation \citep[MI;][]{rubin2018multiple} and inverse probability of coarsening weighting \citep[IPCW;][]{horvitz1952generalization}, but both approaches have drawbacks. MI relies heavily on the imputation model, which must be specified carefully to avoid bias \citep{williamson2026assessing} and to avoid invalid variance estimates \citep{robins2000inference}. IPCW relies heavily on the missing-data model and can be less efficient than MI \citep{little2024comparison}. Augmented IPCW \citep[AIPCW;][]{robins1994estimation,robins1995semiparametric} was designed to reduce the reliance of IPCW on correctly specifying the missing-data model: instead, an AIPCW estimator is consistent if either the missing-data model or a projection of the efficient influence function \citep[EIF;][]{bickel1993efficient} onto the fully observed data is estimated consistently. In other words, it is doubly robust \citep{seaman2018introduction}. However, the AIPCW estimator can lie outside the bounds of the parameter space \citep{diaz2020machine,ellul2024causal}. Generalized raking (GR), where the IPCW weights are calibrated using auxiliary variables, was originally developed in the survey literature to estimate population totals \citep{deville1992calibration,deville1993generalized}; an influence function-based GR procedure for regression parameters has been shown to be asymptotically equivalent to the optimal AIPCW estimator for regression coefficients if the proper calibration variables are chosen \citep{breslow2009improved,breslow2009using,lumley2011connections} and has the advantage of remaining in the target parameter space. Finally, targeted maximum likelihood estimation \citep[TMLE;][]{vanderlaan2006targeted,vanderlaan2011targeted} and its missing-data generalization IPCW-TMLE \citep{rose2011targeted} can also be used to estimate both marginal and conditional estimands and make less restrictive assumptions on the data-generating process than parametric approaches. Both GR and IPCW-TMLE respect the bounds of the parameter space. 

\citet{williamson2026assessing} recently compared GR to IPCW and IPCW-TMLE in a wide variety of synthetic data and plasmode simulations. They found that GR was often more efficient than IPCW-TMLE, due to both making parametric modeling assumptions and calibration of the weights. Because the natural target of current GR estimators in the biomedical literature is a regression coefficient, \citet{williamson2026assessing} marginalized (i.e., integrated over the covariate distribution) to estimate marginal quantities using GR and relied on the delta method for inference. We hypothesize that by using the EIF for the marginal estimand of interest directly, we can remove the need to estimate regression parameters first and the variance of GR estimators can be reduced. In this article, we show that our proposed estimator, which we call \textit{marginal} generalized raking, is asymptotically equivalent to the AIPCW estimator of a marginal estimand and is efficient when using parametric working models. This is appealing because there are straightforward implementations of GR in standard software \citep{surveypkg}. Further, we show that the marginal GR estimator is multiply robust \citep{tchetgen2012semiparametric}, i.e., consistent under weaker conditions than the original GR estimator of a marginal estimand. 

The remainder of this manuscript is organized as follows. In Section~\ref{sec:raking_marginal}, we provide an overview of generalized raking for regression coefficients and introduce two influence function-based generalized raking methods for estimating marginal targets and describe their asymptotic properties. In Section~\ref{sec:sims}, we evaluate the performance of these estimators in numerical experiments. We analyze data from an observational cohort of persons living with HIV in Section~\ref{sec:data_analysis}  and provide concluding remarks in Section~\ref{sec:conclusions}. Technical details, including proofs of theorems, and additional data analysis results are provided in the Supplementary Material. Numerical experiments and data analyses can be reproduced using code available online at \url{https://github.com/bdwilliamson/mgr_parametric_supplementary}.

\section{Influence function-based generalized raking for marginal estimands}\label{sec:raking_marginal}

\subsection{Data structure and notation}

Throughout, we will refer to the target parameter as an estimand. Suppose that the ideal data are $D := (L, A, X, Y)$, where $L \in \mathbb{R}^p$ are possible confounding variables, $A \in \mathbb{R}^d$ are possible auxiliary variables, $X \in \{0, 1\}$ is a binary exposure, and $Y \in \mathbb{R}$ is the outcome of interest. Rather than observing $D$ on each of $n$ individuals, we instead observe a coarsened set of data $O$, with indicator $R \in \{0,1\}$ of having $D$ observed. We consider two examples to make this discussion more concrete. In a setting with missing confounders, which we will illustrate in our simulations below, we might observe $O = (Z, R, R W, X, Y)$, where $L = (Z,W)$, $Z$ are always observed confounders, $W$ are sometimes-observed confounders, $R$ is an indicator of whether $W$ is observed, and $R W = W$ if $R = 1$ and is missing otherwise. In our observational cohort study that we analyze in Section~\ref{sec:data_analysis}, there are several variables that are measured without error ($Z$) and others, including the outcome and exposure, that are measured with error. In this scenario, we could consider auxiliary variables $A = (W^*,X^*,Y^*)$ consisting of the error-prone versions of the variables. Then we observe $O = (Z,A,R W,R X, R Y)$. Throughout, we will use $V$ to refer to the collection of variables that are always observed. In the missing-confounders example, $V = (Z,X,Y)$, while in the error-prone cohort example $V = (Z,A)$. We will use the potential (or counterfactual) outcomes notation $Y(x)$ to denote the outcome under treatment $x$. We define the outcome regression function $Q(x,l) := E(Y \mid X = x, L = l)$, which in this paper we assume to follows a parametric generalized linear model, i.e., that $Q(x,l) = h^{-1}\{\beta_0 + (x,l)\beta\}$ for link function $h$ and regression parameter vector ($\beta_0,\beta) \in \mathbb{R}^{p+1}$. We also define the probability of having complete data $\pi(v) := P(R = 1 \mid V = v)$ and the treatment-assignment probability $g(l) = P(X = 1 \mid L = l)$. Finally, we assume that the data are missing at random, i.e., $R \indep D \mid V$.  

\subsection{Target estimands}\label{sec:targets}

Suppose that we have a sample $O_1,\ldots, O_n$ arising from a standard two-phase setting where some variables are available on a phase 1 cohort of size $n$ and additional variables are available on a phase 2 subsample of size $n_2 < n$ (i.e., those with $R = 1$). Our goal is to use $O_1,\ldots, O_n$ to make inference on several target estimands. In settings with a binary exposure variable, we can define the treatment-specific mean outcome values $\mu_x := E\{Y(x)\}$ and the average treatment effect $ATE:= \mu_1 - \mu_0$. In binary outcome settings, the ATE corresponds to the risk difference (RD) and we can further define the relative risk (RR) as a function of $\mu_1$ and $\mu_0$: $RR:= \mu_1 / \mu_0$.

Under the common causal inference assumptions of consistency (i.e., the observed outcome for observation $i$ is the potential outcome under the observed treatment/exposure), no unmeasured confounding, and positivity (i.e., $\pi(v) > 0$ for all $v$) \citep[e.g.,][]{cole2009consistency,vanderlaan2011targeted}, we can identify these estimands with the ideal data; writing
\begin{align*}
    \mu_1 =& \ E\{E(Y \mid X = 1, L)\} \text{ and } \\
    \mu_0 =& \ E\{E(Y \mid X = 0, L)\}
\end{align*}
identifies the ATE and RR. If the causal inference assumptions do not hold, these remain valid statistical estimands but cannot be causally interpreted. Using the observed data, these can be identified by 
\begin{align*}
    \mu_1 =& \ E\left\{\frac{R}{\pi(V)}E(Y \mid X = 1, L)\right\} \text{ and } \\
    \mu_0 =& \ E\left\{\frac{R}{\pi(V)}E(Y \mid X = 0, L)\right\}.
\end{align*}
Other identification formulas are possible: for example, $\mu_1 = \ E\left\{\frac{R}{\pi(V)}\frac{I(X = 1)}{g(L)}Y\right\}$ uses inverse probability of treatment weighting, where $g$ is the true treatment propensity score.

\subsection{Generalized raking for regression parameters}\label{sec:gr_regression}
In biostatistics, following the landmark work of \citet{breslow2009improved,breslow2009using}, generalized raking has been traditionally used to estimate coefficients in regression models. We refer to these as \textit{conditional} parameters, because their interpretation is conditional on the values of covariates in the model. In this setting, the optimal generalized raking estimator has been shown to be equivalent to the optimal augmented inverse probability-weighted estimator \citep{lumley2011connections}. Our first approach to estimating a marginal target estimand (Section~\ref{sec:cgr}) will use generalized raking for regression parameters as a preliminary step, so we briefly describe the traditional procedure here.

We will use generalized raking to estimate the regression parameters $\beta$ using the observed data. First, we obtain an estimator $\pi_{n,IPCW}$ of $\pi$, the probability of having complete data; often, logistic regression is used. We use the suffix $IPCW$ to emphasize that these are inverse probability of coarsening weights. Next, we calibrate the weights. The ideal calibration uses the optimal raking variable $\eta_\beta(v) = E\{\phi_\beta^F(D) \mid V = v\}$ \citep{lumley2011connections,breslow2009improved}, where $\phi_\beta^F$ is the vector of EIFs for the regression parameters in the ideal-data setting (no missing data). This is the same as the optimal choice of augmentation term in augmented IPCW \citep[AIPCW;][]{robins1994estimation,robins1995semiparametric}. The optimal raking variable can be estimated in several ways, including regression \citep{breslow2009improved} and single imputation \citep{kulich2004improving,breslow2009using}; in the measurement error context, one can use the error-prone variables to estimate this variable \citep{kulich2004improving, breslow2009using, shepherd2023multiwave}. We can also estimate $\eta_\beta$ using multiple imputation with $M$ imputations \citep{han2016combining,oh2021improved,han2021combining}. Specifically, after specifying an imputation model, for each of the $M$ imputations we: (i) impute the phase 2 variables for all observations (including those with complete data); (ii) fit the specified regression model to the completed dataset with only imputed phase 2 variables; and (iii) extract the influence function values, which can be achieved using the outputs of standard regression software. The final multiply-imputed estimated efficient influence function values $\eta_{\beta,n}(v_i)$ for each observation are then the average across the imputations. Regardless of how the optimal raking variable is estimated, we use it to obtain the calibrated  weights $\pi_{n,GR,i}^{-1}$ as follows. For a distance measure $d$, 
\begin{align}
    \pi_{n,GR,i}^{-1} =& \ q_i\pi_{n,IPCW,i}^{-1}, \text{ where } \label{eq:gr}\\
    q =& \ \argmin_{b \in \mathbb{R}^n}\sum_{i=1}^n R_i d(b_i\pi_{n,IPCW,i}^{-1},\pi_{n,IPCW,i}^{-1}) \notag \\
    & \text{ subject to }  \sum_{i=1}^n \eta_{\beta,n}(V_i) = \sum_{i=1}^n R_i b_i \pi_{n,IPCW,i}^{-1} \eta_{\beta,n}(V_i). \notag
\end{align}
A common choice of distance measure for two scalars $a$ and $c$ is the Poisson deviance $d(a,c) = a \log (a/c) - a + c$, which results in non-negative final weights. The final estimator $\beta_{n,GR}$ of $\beta$ is the solution to the estimating equation
\begin{align*}
    \frac{1}{n}\sum_{i=1}^n\frac{R_i}{\pi_{n,GR}(V_i)}S(Y_i,X_i,L_i;\beta) = 0,
\end{align*}
where $S(\cdot;\beta)$ is the score function defined by the target regression model. Importantly, $\sqrt{n}(\beta_{n,GR} - \beta) \to_d N(0, \Sigma_\beta)$, where $\Sigma_\beta = E\{\phi_\beta(Y,X,L) \phi_\beta(Y,X,L)^\top\}$ \citep{lumley2011connections}, under the positivity assumption on $\pi$ and the assumption that $\pi_n \to_P \pi$ or $\eta_{\beta,n} \to_P \eta_\beta$; if both estimators $\pi_n$ and $\eta_{\beta,n}$ are consistent, then $\beta_{n,GR}$ is efficient. In other words, $\beta_{n,GR}$ has the smallest asymptotic variance among the class of design-based estimators \citep{lumley2011connections}, which includes the optimal AIPCW estimator \citep{lumley2011connections, robins1995semiparametric}. The efficient influence function vector $\phi_\beta$ is a function of the ideal-data EIF \citep{vandervaart2000asymptotic}, specifically,
\begin{align*}
    \phi_\beta(y,x,l,r,v) := \frac{r}{\pi(v)}\phi_\beta^F(y,x,l) - \frac{\{r - \pi(v)\}}{\pi(v)}\eta_\beta(v).
\end{align*}

\subsection{Conditional generalized raking}\label{sec:cgr}

Our first approach to estimating a marginal target estimand using generalized raking, used in \citet{williamson2026assessing} and which we call \textit{conditional generalized raking} (CGR), consists of two steps. The first step is to use generalized raking to estimate regression parameters for a generalized linear model. In the second step, we marginalize over the distribution of the covariates. 

There are several approaches that can be taken to account for possible confounding of the association between $X$ and $Y$. We will consider two here: regression adjustment (RA) and inverse probability of treatment weighting (IPTW). The RA approach includes covariates $L$ in the outcome regression model used in generalized raking. This is equivalent to the procedure described in Section~\ref{sec:gr_regression} and results in estimator $\beta_{n,GR}$ and calibrated weights $\pi_{n,CGR}$ using Equation~\eqref{eq:gr}. Then we can define outcome regression estimators $Q_{n,1,CGR,RA}(x,l) = Q_{n,0,CGR,RA}(x,l) = h^{-1}\{(1,x,l)\beta_{n,GR}\}$; in other words, the outcome regression model has the same form for $X = 0$ and $X = 1$. The IPTW approach uses weights to account for the probability of receiving treatment, and the final outcome regression model only contains $X$ and an intercept. Because this approach is less common in the parametric regression modeling literature, we describe the IPTW approach more fully in Section~\ref{sec:cgr_iptw}. However, we will compare this estimator to our proposed estimator in the simulations presented below, because both approaches use inverse probability of treatment weights. Regardless of the method that we used to obtain $Q_{n,1,CGR}$ and $Q_{n,0,CGR}$, we can estimate $\mu_1$ and $\mu_0$ using  
\begin{align*}
    \mu_{1,n,CGR} :=& \ \frac{1}{n}\sum_{i=1}^n \frac{R_i}{\pi_{n,CGR}(V_i)} Q_{n,1,CGR}(1,L_i) \text{ and }\\
    \mu_{0,n,CGR} :=& \ \frac{1}{n}\sum_{i=1}^n \frac{R_i}{\pi_{n,CGR}(V_i)}Q_{n,0,CGR}(0,L_i),
\end{align*}
and we can estimate the ATE and RR by plugging these estimators into the formulas defined above. Inference can be carried out using the delta method. 

Before stating our first result, we define several assumptions for a regression adjustment estimator (equivalent assumptions for the IPTW estimator are in Section~\ref{sec:cgr_iptw}):
\begin{itemize}
    \item[(A1)] \textit{(consistency of the regression model)} $Q_{n,CGR} \to_P Q$;
    \item[(A2)] \textit{(consistency and positivity of the missing-data model)} $\pi_{n,CGR} \to_P \pi$ and $\pi(v) > \epsilon > 0$ for all $v$;
    \item[(A3)] \textit{(consistency of the EIF projection)}  $\eta_{\beta,n} \to_P \eta_\beta$.
\end{itemize}
Assumption (A1) is equivalent to stating that the estimator of the full $(p+1)$-parameter outcome regression model that accounts for confounders $L$ is consistent. Assumption (A3) requires that the estimated full $(p+1)$-parameter optimal raking variable $\eta_{\beta,n}$ is consistent. Assumptions (A1) and (A3) are related: because $\eta_\beta$ is defined with respect to a parametric model, if Assumption (A1) is not satisfied then it is more difficult to satisfy Assumption (A3). We discuss the conditions in the context of IPTW in Section~\ref{sec:cgr_iptw}. 

With these assumptions in place, we can describe the asymptotic behavior of $\mu_{1,n,CGR}$ using regression adjustment.
\begin{lemma}\label{lemma:cgr_variance}
    If (A1) and either (A2) or (A3) hold, then 
    \begin{align*}
        \sqrt{n}(\mu_{1,n,CGR} - \mu_1) \to_d& \ Z \sim N(0, \sigma^2_{1,CGR}), \text{ where } \\
        \sigma^2_{1,CGR} =& \ \Omega\Sigma_\beta \Omega^\top, \\
        \Omega =& \ E\left\{\frac{R}{\pi(V)}\nabla h^{-1}(1,1,L;\beta)\right\}, \text{ and } \\
        \Sigma_\beta =& \ E\{\phi_\beta(Y,X,L) \phi_\beta(Y,X,L)^\top\}.
    \end{align*}
\end{lemma}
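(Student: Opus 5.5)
The plan is a delta-method argument built on the asymptotic linearity of $\beta_{n,GR}$ stated in Section~\ref{sec:gr_regression}. Write $\mu_{1,n,CGR} = \mathbb{P}_n^{w}\, m(\cdot;\beta_{n,GR})$, where $m(l;\beta) := h^{-1}\{(1,1,l)\beta\}$ and $\mathbb{P}_n^{w} f := n^{-1}\sum_i R_i \pi_{n,CGR}(V_i)^{-1} f(L_i)$ is the calibrated-weight empirical measure. We also have $\mu_1 = E\{R\pi(V)^{-1} m(L;\beta)\}$, with $\beta$ the true parameter; this is the place where (A1) is used, since it ensures $Q = m(\cdot\,;\beta)$. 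We then decompose
\begin{align*}
\mu_{1,n,CGR} - \mu_1 = \bigl\{\mathbb{P}_n^{w} m(\cdot;\beta_{n,GR}) - \mathbb{P}_n^{w} m(\cdot;\beta)\bigr\} + \bigl\{\mathbb{P}_n^{w} m(\cdot;\beta) - \mu_1\bigr\}.
\end{align*}

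For the first bracket, we Taylor expand $m$ in $\beta$. This gives $\{\mathbb{P}_n^{w}\nabla m(\cdot;\tilde\beta)\}(\beta_{n,GR}-\beta)$ for some $\tilde\beta$ between $\beta_{n,GR}$ and $\beta$. By consistency of $\beta_{n,GR}$ (valid under (A2) or (A3)), a uniform law of large numbers for the smooth GLM class, and convergence of the calibrated weights to $1/\pi$, the weighted average converges in probability to $\Omega$. The weights converge because the calibration multipliers $q_i \to 1$ when either $\pi_{n}$ or $\eta_{\beta,n}$ is consistent, with positivity in (A2) bounding the weights. The result of \citet{lumley2011connections} recalled above then gives $\sqrt{n}(\beta_{n,GR}-\beta) = n^{-1/2}\sum_i \phi_\beta(O_i) + o_P(1) \to_d N(0,\Sigma_\beta)$. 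By Slutsky, the first bracket scaled by $\sqrt n$ converges to $N(0,\Omega\Sigma_\beta\Omega^\top)$.

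The main obstacle is the second bracket. The stated variance $\Omega\Sigma_\beta\Omega^\top$ contains no term for the variability of averaging $m(L;\beta)$ over the empirical distribution of $L$. So the plan is to argue that this term is of smaller order, or, more precisely, that it is the part the statement treats as conditional on the covariate distribution (the marginalization being over the realized $L_i$, as is usual in delta-method inference for CGR). I would first try to show that the calibration constraint makes $\mathbb{P}_n^{w}$ reproduce phase-one totals of functions spanned by $\eta_{\beta,n}$. I would then note that, conditional on the phase-one sample, the second bracket is $o_P(n^{-1/2})$ relative to the fixed-design target $n^{-1}\sum_i m(L_i;\beta)$. If that fails, the fallback is to interpret $\mu_1$ in the lemma as the sample-covariate-averaged target. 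In that case the second bracket vanishes up to the weighting error, which is $o_P(n^{-1/2})$ by the rate of the calibration. Combining the two brackets with Slutsky yields the claim.
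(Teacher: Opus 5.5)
There is a genuine gap in your treatment of the second bracket. Your first bracket is, in substance, the paper's whole proof. The paper writes $\mu_1(\beta)=E[R\pi(V)^{-1}h^{-1}\{(1,1,L)\beta\}]$, takes $\Omega=\nabla_\beta\mu_1(\beta)$, imports the asymptotic normality of $\beta_{n,GR}$, and applies the delta method. It handles the average over $L$ with a single appeal to the weak law of large numbers. That gives only consistency, so the paper never controls your second bracket at the $\sqrt n$ scale either. It implicitly treats the weighted covariate distribution as fixed, which is what \texttt{marginaleffects}-style delta-method standard errors do. You were right to single this term out, but your first way of closing it cannot work, because the term is not of smaller order. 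Take the degenerate case $R\equiv 1$, so $\pi\equiv 1$ and all weights equal one. The scaled second bracket is then $n^{-1/2}\sum_i\{m(L_i;\beta)-\mu_1\}$, with limit variance $\mathrm{Var}\{m(L;\beta)\}$, positive unless $Q(1,L)$ is constant. It is asymptotically uncorrelated with $\Omega\phi_\beta$ because $E\{S_\beta(Y,X,L)\mid X,L\}=0$ under (A1). So the limit variance of $\sqrt n(\mu_{1,n,CGR}-\mu_1)$ is $\Omega\Sigma_\beta\Omega^\top+\mathrm{Var}\{Q(1,L)\}$, not $\Omega\Sigma_\beta\Omega^\top$.

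Your fallback fails as well. With target $n^{-1}\sum_i m(L_i;\beta)$, the leftover $n^{-1}\sum_i\{R_i\pi_{n,CGR}(V_i)^{-1}-1\}m(L_i;\beta)$ is a phase-two sampling error of exact order $n^{-1/2}$, not $o_P(n^{-1/2})$. The calibration constraint annihilates it only for functions in the span of $\eta_{\beta,n}(V)$. But $m(L;\beta)$ generally depends on phase-two components of $L$ (e.g.\ $W$), and its residual after projection on the calibration variables leaves variance of order $n^{-1}E[\{1-\pi(V)\}\pi(V)^{-1}\{m(L;\beta)-\text{projection}\}^2]$. The fact that the calibration converges at rate $n^{-1/2}$ does not make this error negligible at that rate. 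This piece is also correlated with the first bracket through terms weighted by $\{1-\pi(V)\}/\pi(V)$, so separate Slutsky arguments for the two brackets would not combine anyway. The only reading under which $\Omega\Sigma_\beta\Omega^\top$ is the exact limit is conditional on the weighted empirical covariate distribution. That means taking the data-dependent target $\mathbb{P}_n^{w}m(\cdot;\beta)$, for which your second bracket is identically zero. You would need to state that interpretation explicitly, or else add the second bracket's influence-function variance and its covariance with $\Omega\phi_\beta$, which changes the statement.

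Separately, your claim that $q_i\to 1$ whenever $\eta_{\beta,n}$ is consistent is false. Suppose $\pi_{n,IPCW}\to\pi^*\neq\pi$. For the Poisson-deviance (raking) distance, $q_i=\exp\{\lambda_n^\top\eta_{\beta,n}(V_i)\}$ with $\lambda_n$ converging to a nonzero limit. The calibrated weights then tend to $1/\pi$ only when $\log(\pi^*/\pi)$ is a linear combination of the calibration variables. Consistency of $\eta_{\beta,n}$ protects the estimating equation for $\beta$, not weighted averages of $\nabla m(L;\beta)$ or $m(L;\beta)$. So under (A3) without (A2), your argument yields neither the limit $\Omega$ nor even consistency for $\mu_1$.
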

In Section~\ref{sec:cgr_variance}, we prove this result and an equivalent result for IPTW, and provide the explicit variance for the identity and logit link functions. We present the full procedure for inference on $\mu_1$ using regression adjustment in Algorithm~\ref{alg:conditional_raking} (the procedure for IPTW involves replacing steps 2--6 with their IPTW equivalents; see Algorithm~\ref{alg:cgr_iptw}). These algorithms can be implemented using standard software for marginalizing regression models [e.g., the \texttt{marginaleffects} R package \citep{marginaleffectspkg}] after estimating $\beta$ using generalized raking, which we will illustrate in our numerical experiments.

While $\mu_{1,n,CGR}$ is doubly-robust with respect to consistency of the missing-data model (i.e., it is consistent if either (A2) or (A3) holds), our variance estimator in Algorithm~\ref{alg:conditional_raking} is based on the influence function and is not doubly-robust; an empirical sandwich variance estimator could be used instead \citep{shook2025double}. The asymptotic behavior of $\mu_{0,n,CGR}$ is similar, and the asymptotic behavior of the ATE and RR follows from an application of the delta method. 

\begin{algorithm}
    \caption{Conditional generalized raking for the treatment-specific mean $\mu_1$ using regression adjustment}\label{alg:conditional_raking}
    \begin{algorithmic}[1]
        \State construct estimator $\pi_{n,IPCW}$ of $\pi$ using generalized linear regression;
        \State construct estimator $\eta_{\beta,n}$ of the optimal raking variable $\eta_{\beta}$;
        \State calibrate the weights: set $\pi_{n,CGR,i}^{-1} = q_i\pi_{n,IPCW,i}^{-1}$, where 
            \begin{align*}
                q =& \ \argmin_{b \in \mathbb{R}^n}\sum_{i=1}^n R_i d(b_i\pi_{n,IPCW,i}^{-1},\pi_{n,IPCW,i}^{-1}) \\
                & \text{ subject to }  \sum_{i=1}^n \eta_{\beta,n}(V_i) = \sum_{i=1}^n R_i b_i \pi_{n,IPCW,i}^{-1} \eta_{\beta,n}(V_i).
            \end{align*}
        \State obtain estimator $\beta_{n,GR}$ using generalized linear regression weighted by $\pi_{n,CGR,i}$;
        \State construct estimator $Q_{n,1,CGR}$ of $Q$ by setting $Q_{n,1,CGR}(x,l) = h^{-1}\{(1,x,l)\beta_{n,GR}\}$;
        \State compute estimator 
        \begin{align*}
            \mu_{1,n,CGR} := \frac{1}{n}\sum_{i=1}^n \frac{R_i}{\pi_{n,CGR}(V_i)} Q_{n,1,CGR}(1,L_i);
        \end{align*}
        \State compute estimator $\sigma^2_{1,n,CGR}$ of the asymptotic variance $\sigma_{1,CGR}^2$ of $n^{1/2}(\mu_{1,n,CGR} - \mu_1)$ following Lemma~\ref{lemma:cgr_variance}: 
        \begin{align*}
            \sigma^2_{1,n,CGR} =& \ \Omega_n \Sigma_{n,\beta} \Omega_n^\top, \text{ where } \\
            \Omega_n =& \ \frac{1}{n}\sum_{i=1}^n \frac{R_i}{\pi_{n,CGR}(V_i)}\nabla h^{-1}(1,1,L_i; \beta_{n,GR}) \text{ and }\\
            \Sigma_{n,\beta} =& \ \frac{1}{n}\sum_{i=1}^n \phi_\beta(Y_i,1,L_i)\phi_\beta(Y_i,1,L_i)^\top.
        \end{align*}
    \end{algorithmic}
\end{algorithm}

The CGR estimator is appealing due to its basis in generalized raking for regression. However, there are two drawbacks to its use in practice. The first is that point estimation requires estimating nuisance parameters $\beta$ that are not of primary interest and marginalizing predictions. The second is that the variance estimate relies on the delta method, which is an indirect method and relies on an asymptotic approximation that may not be efficient in finite samples. To address both of these drawbacks, we could instead target the marginal estimand directly, which we pursue in the next section.

\subsection{Marginal generalized raking}\label{sec:mgr}

Our goal is now to construct estimators for marginal estimands (e.g., $\mu_1$) directly using generalized raking. As above, we first obtain an estimator $\pi_{n,IPCW}$ of $\pi$. The calibration step is the key difference between CGR and our procedure, which we call \textit{marginal generalized raking} (MGR). We will now use the EIF for the marginal target of interest to calibrate the weights. 

The estimands described in Section~\ref{sec:targets} have been well-studied in the causal inference literature, with existing non- and semi-parametric estimators. As in the case of regression parameters, the asymptotic variance of these estimators is based on their influence functions. The nonparametric EIFs for the treatment-specific means in the ideal setting (no missing data) are:
\begin{align*}
    \phi_{\mu_1,NP}(y,x,l) =& \ \frac{I(x = 1)}{g(l)}\{y - Q(x,l)\} + Q(1,l) - E\{Q(1,l)\} \\
    \phi_{\mu_0,NP}(y,x,l) =& \ \frac{I(a = 0)}{1 - g(l)}\{y - Q(x,l)\} + Q(0,l) - E\{Q(0,l)\},
\end{align*}
where the first subscript denotes which estimand ($\mu_1$ or $\mu_0$) $\phi$ is the EIF of and the subscript $NP$ is shorthand for nonparametric. As above, we can define the EIF for the ATE, $\phi_{ATE,NP}(y,x,l) = \phi_{\mu_1,NP}(y,x,l) - \phi_{\mu_0,NP}(y,x,l)$. For binary outcomes, we can also define the EIF for the log-transformed RR, $\phi_{RR,NP}(y,x,l) = \phi_{\mu_1,NP}(y,x,l)/\mu_1 - \phi_{\mu_0,NP}(y,x,l)/\mu_0$.
    
If the data are known to follow a parametric model, which we have assumed here, then the nonparametric EIF can be projected into this model space, resulting in smaller variance. As above, suppose that the parametric model is indexed by parameter vector $\beta$ with score $S_\beta(y,x,l)$ and information $I_\beta$. Then the parametric EIF for the treatment-specific mean $\mu_1$ is the following function of the nonparametric EIF \citep[see, e.g., Theorem 3.5 in][]{tsiatis2006semiparametric}:
\begin{align*}
    \phi_{1,P}(y,x,l) := E\{\phi_{\mu_1,NP}(Y,X,L)S^\top_\beta(Y,X,L)\}I_\beta^{-1}S_\beta(y,x,l);
\end{align*}
the same transformation can be applied to obtain parametric EIFs for each estimand defined above. As in Section~\ref{sec:gr_regression}, in missing-data settings, the EIF for the observed-data parameter (that identifies the ideal-data parameter under the assumptions listed above) is a function of the ideal-data EIF \citep{vandervaart2000asymptotic}:
\begin{align*}
    \phi_{1,obs,P}(y,x,l,r,v) := & \ \frac{r}{\pi(v)}\phi_{1,P}(y,x,l) - \frac{\{r - \pi(v)\}}{\pi(v)}E\{\phi_{1,P}(Y,X,L)\mid V = v\}.
\end{align*}

From this construction, we can see that the optimal raking variable for estimating $\mu_1$ is now $\eta(v) = E\{\phi_{1,P}(Y,X,L) \mid V = v\}$. This corresponds to the term used in augmented inverse probability weighting for missing data \citep{robins1995semiparametric}. As described above, we can estimate $\eta$ in several ways, including using multiple imputation or using regression with $\phi_{1,P}$ as the outcome and $V$ as the covariates. With an estimator $\eta_n$ in hand, we can use the same procedure as in Equation \eqref{eq:gr} to calibrate the weights but with $\eta_n$ in place of $\eta_{\beta,n}$. This yields final weights $\pi_{n,MGR}$.

The MGR estimator of $\mu_1$ is
\begin{align}\label{eq:mgr}
    \mu_{1,n,MGR} :=& \ \frac{1}{n}\sum_{i=1}^n \frac{R_i}{\pi_{n,MGR}(V_i)}\left[\frac{I(X_i = 1)}{g_n(L_i)}\{Y_i - Q_n(1,L_i)\} + Q_n(1,L_i)\right],
\end{align}
where $g_n$ and $Q_n$ are estimators of the treatment propensity score and outcome regression, respectively. A similar estimator $\mu_{0,n,MGR}$ can be constructed for $\mu_0$, and these two estimators can be combined to estimate the ATE and RR. The MGR estimator in Equation~\eqref{eq:mgr} includes an inner augmented inverse probability of treatment weighting (AIPTW) estimator \citep{robins1995semiparametric}, and through calibrating the weights, is asymptotically equivalent to the augmented inverse probability of coarsening weighted (AIPCW) estimator, which is defined as 
\begin{align*}
    \mu_{1,n,AIPCW-AIPTW} :=& \ \frac{1}{n}\sum_{i=1}^n \left(\frac{R_i}{\pi_{n,IPCW}(V_i)}\left[\frac{I(X_i = 1)}{g_n(L_i)}\{Y_i - Q_n(1,L_i)\} + Q_n(1,L_i)\right] \right.\\
    & \ \ \ \ \ \ \ \ \ \ \  + \left. \frac{\{\pi_{n,IPCW} - R_i\}}{\pi_{n,IPCW}}\eta_n(V_i)\right),\ 
\end{align*} 
and is the result of using the EIF as an estimating equation.

The next theorem states this equivalence and the resulting asymptotic linearity of the MGR estimator with influence function equal to the EIF. Before stating the theorem, we define two assumptions:
\begin{itemize}
    \item [(B1)] \textit{(consistency of the outcome regression model)} $Q_n \to_P Q$;
    \item[(B2)] \textit{(consistency of the treatment-propensity score)} $g_n \to_P g$;
    \item[(B3)] \textit{(consistency and positivity of the missing-data model)} $\pi_{n,MGR} \to_P \pi$ and $\pi(v) > \epsilon > 0$ for all $v$;
    \item[(B4)] \textit{(consistency of the EIF projection)} $\eta_n \to_P \eta$.
\end{itemize}
\begin{theorem}\label{thm:mgr}
    The MGR estimator is asymptotically equivalent to the AIPCW-AIPTW estimator. If both (i) one of (B3) or (B4) and (ii) one of (B1) or (B2) hold, then $\mu_{1,n,MGR}$ is an asymptotically linear estimator of $\mu_1$ with influence function $\phi_{1,obs,P}$, that is,
    \begin{align*}
        \mu_{1,n,MGR} - \mu_1 = \frac{1}{n}\sum_{i=1}^n \phi_{1,obs,P}(Y_i,X_i,L_i,R_i,V_i) + o_P(n^{-1/2}).
    \end{align*}
    Because $\phi_{1,obs,P}$ is the efficient influence function, if (B1)--(B4) hold, then $\mu_{1,n,MGR}$ is efficient.
\end{theorem}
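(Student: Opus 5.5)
The proof has two parts. First we show that calibration reproduces the AIPCW augmentation. Then we expand the AIPCW-AIPTW estimator around the truth, using the usual double-robustness arguments for the two nuisance pairs.

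\textbf{Step 1: asymptotic equivalence with AIPCW-AIPTW.} Write $\psi_n(O) := \frac{I(X=1)}{g_n(L)}\{Y - Q_n(1,L)\} + Q_n(1,L)$ for the inner AIPTW pseudo-outcome. Following the argument of \citet{deville1992calibration} and \citet{lumley2011connections}, the Lagrangian of the calibration problem \eqref{eq:gr} (with $\eta_n$ in place of $\eta_{\beta,n}$) yields calibration factors of the form $q_i = F(\lambda_n \eta_n(V_i))$, with $F(0)=1$ and $F'(0)=1$ for the Poisson deviance. The multiplier $\lambda_n$ is determined by the constraint. A first-order expansion of the constraint gives
\[
\lambda_n = \Big\{\tfrac1n\textstyle\sum_i R_i\pi_{n,IPCW,i}^{-1}\eta_n(V_i)^2\Big\}^{-1}\tfrac1n\textstyle\sum_i\big(1 - R_i\pi_{n,IPCW,i}^{-1}\big)\eta_n(V_i) + o_P(n^{-1/2}),
\]
which is $O_P(n^{-1/2})$ under positivity. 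Substituting into $\mu_{1,n,MGR} = n^{-1}\sum_i R_i q_i \pi_{n,IPCW,i}^{-1}\psi_n(O_i)$, we obtain
\[
\mu_{1,n,MGR} = \frac1n\sum_i \frac{R_i}{\pi_{n,IPCW,i}}\psi_n(O_i) + \lambda_n \frac1n\sum_i \frac{R_i}{\pi_{n,IPCW,i}}\eta_n(V_i)\psi_n(O_i) + o_P(n^{-1/2}).
\]
The second term equals $\lambda_n\, b_n \cdot\{\text{the same Gram quantity}\}$, where $b_n$ is the weighted regression coefficient of $\psi_n$ on $\eta_n$. When $\eta_n \approx \eta$, this coefficient tends to $1$, because $E\{\psi \mid V\}$ and $\eta$ differ only by the centring and projection, which are constant or absorbed by the intercept. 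Hence the correction equals $n^{-1}\sum_i (1 - R_i/\pi_{n,IPCW,i})\eta_n(V_i) + o_P(n^{-1/2})$, which is exactly the AIPCW augmentation. In general the correction is a regression-type augmentation in $\eta_n$. This is the asymptotic equivalence claimed in Theorem~\ref{thm:mgr}.

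\textbf{Step 2: von Mises expansion of the AIPCW-AIPTW estimator.} Decompose
\[
\mu_{1,n,AIPCW-AIPTW} - \mu_1 = (P_n - P)\phi_{1,obs} + (P_n-P)\{\hat\phi - \phi_{1,obs}\} + R_n,
\]
where $\hat\phi$ is the estimated uncentred estimating function and $R_n = P\hat\phi - \mu_1$. The empirical-process term is $o_P(n^{-1/2})$ by the parametric (GLM/logistic) form of all working models: the function classes are Donsker, and $L_2$ consistency follows from (B1)--(B4), or from convergence to fixed limits when a model is misspecified.

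For the remainder, first condition on $V$ and use missing at random. This gives
\[
R_n = E\Big[\Big\{\frac{\pi(V)}{\bar\pi(V)}-1\Big\}\{E(\bar\psi\mid V) - \bar\eta(V)\}\Big] + E\{\bar\psi\} - \mu_1,
\]
where bars denote the probability limits of the working models. The first term vanishes if $\bar\pi = \pi$ (B3) or $\bar\eta = \eta$ (B4), when $\eta$ is understood as $E(\psi \mid V)$ up to a constant. Then condition on $L$, which gives
\[
E\bar\psi - \mu_1 = E\Big[\frac{g(L)-\bar g(L)}{\bar g(L)}\{Q(1,L)-\bar Q(1,L)\}\Big],
\]
and this vanishes under (B1) or (B2). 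This establishes the multiply robust consistency. For the $o_P(n^{-1/2})$ remainder, the product-rate structure combined with $\sqrt n$-rates for the parametric fits gives second-order terms. The estimation error of the correctly specified nuisance contributes a first-order term that must be absorbed. In the parametric model, this term projects onto the score, which is why the influence function is the parametric EIF $\phi_{1,obs,P}$ rather than the nonparametric one. To verify this, I will show that the first-order nuisance contribution, plus $\phi_{\mu_1,NP}$, equals its projection onto the span of $S_\beta$, via Theorem 3.5 of \citet{tsiatis2006semiparametric}.

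\textbf{Step 3: efficiency.} Efficiency under (B1)--(B4) is immediate. The influence function $\phi_{1,obs,P}$ is the observed-data EIF obtained from the ideal-data parametric EIF by the mapping of \citet{vandervaart2000asymptotic} quoted above.

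\textbf{Main obstacle.} I expect two points to be the hardest. The first is showing in Step 1 that the calibration correction reproduces exactly the AIPCW augmentation, i.e.\ that the regression coefficient of $\psi_n$ on $\eta_n$ tends to one. The second is reconciling the nonparametric AIPTW form of \eqref{eq:mgr} with the parametric influence function $\phi_{1,obs,P}$ in Step 2. I would handle the first by adding an intercept to the calibration constraints, which is standard in raking because the weights total $n$. I would handle the second through the score-projection identity.
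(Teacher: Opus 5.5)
You have the right remainder structure, but two steps of your plan would not go through: Step 1 lacks the paper's key identity, and the absorption you plan in Step 2 cannot succeed. Your Step~2 remainder (condition on $V$ under MAR to get a $(\pi,\eta)$ product, then on $L$ to get the $(g,Q)$ product) is essentially the paper's Lemma~\ref{lem:aipcw-aiptw_remainder}.

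\textbf{Step 1.} The paper never linearizes the calibration. In Lemma~\ref{lem:aipw_mgr_equivalence} it plugs the calibrated weights in as the missing-data model of the AIPCW-AIPTW estimator. The calibration constraint $\sum_i\eta_n(V_i)=\sum_iR_iq_i\pi_{n,IPCW,i}^{-1}\eta_n(V_i)$ then makes the augmentation $n^{-1}\sum_i\{1-R_i/\pi_{n,MGR}(V_i)\}\eta_n(V_i)$ vanish \emph{identically}. So $\mu_{1,n,MGR}$ \emph{is} an AIPCW-AIPTW estimator with $\pi_n=\pi_{n,MGR}$, which is why (B3) is stated for $\pi_{n,MGR}$. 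The remainder is analysed with that $\pi_n$, and no $\lambda$-expansion or statement about $b_n$ is needed. Your expansion of $q_i=F(\lambda_n\eta_n(V_i))$ about $\lambda=0$ needs $\lambda_n=O_P(n^{-1/2})$, which does not follow from positivity. It requires $n^{-1}\sum_i(1-R_i/\pi_{n,IPCW,i})\eta_n(V_i)=O_P(n^{-1/2})$, i.e.\ a consistent $\pi_{n,IPCW}$. In the (B4)-only branch $\lambda_n\to\lambda^*\neq0$ in general, and the Taylor remainder is not negligible. Your justification of $b_n\to1$ is also wrong as stated. It holds when $\eta$ is $E(\psi\mid V)$ minus a constant, since then $E\,\eta(V)=0$ gives $E\{\psi\,\eta(V)\}=E\{\eta(V)^2\}$. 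With $\eta=E\{\phi_{1,P}\mid V\}$, however, $E\{\phi_{\mu_1,NP}-\phi_{1,P}\mid V\}$ is in general a non-constant function of $V$, which no intercept absorbs. In the (B3)-only branch your own expansion shows that MGR is, in general, not first-order equivalent to the AIPCW-AIPTW estimator built with $\pi_{n,IPCW}$.

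\textbf{Step 2.} You rightly notice that a first-order term survives when only one member of a pair is consistent, but it cannot be absorbed into $\phi_{1,obs,P}$. Take $g_n$ a correctly specified logistic fit and $Q_n\to\bar Q\neq Q$. Then $E[\{g_n(L)-g(L)\}g_n(L)^{-1}\{\bar Q(1,L)-Q(1,L)\}]$ is $O_P(n^{-1/2})$. So is your empirical-process term $(P_n-P)\{\hat\phi-\phi_{1,obs}\}$, because $\hat\phi\to\bar\phi\neq\phi_{1,obs}$. Together they produce a projection-type correction onto the scores of the \emph{consistently estimated} nuisance model, not onto $S_\beta$. In the ideal-data case this is exactly $\bar\phi-\Pi(\bar\phi\mid\text{scores of the }g\text{-model})$, which is generally not the EIF. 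It cannot be the EIF, because the EIF is unique and MGR would then be efficient under misspecification. Yet the paper says of Scenario~3 that double robustness ``guarantees consistency but not efficiency.''

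When $\pi$, $g$, $Q$ and the conditional mean $E\{\overline{\phi}^F\mid V\}$ in the remainder are all estimated consistently at parametric rates, each nuisance contribution is a product of two $O_P(n^{-1/2})$ terms. The influence function is then the AIPTW-form $\phi_{1,obs}$ built from $\phi_{\mu_1,NP}$, so nothing is left to absorb. Theorem~3.5 of \citet{tsiatis2006semiparametric} says the EIF is the projection of any influence function onto the tangent space; it does not turn an estimator's influence function into its own projection.

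The paper does not attempt your reconciliation. It works throughout with the AIPTW-form EIF and simply asserts $R_n=o_P(n^{-1/2})$ under the one-of conditions, although the product structure gives this only when both members of each pair converge. Under (i) and (ii) alone, your argument or the paper's yields consistency. For parametric working models it also yields $\sqrt n$-asymptotic normality with a modified influence function, but not the stated one.
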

This estimator is multiply-robust, because it is consistent if one model is correctly specified from both the missing-data level (either the missing-data model or EIF projection) and the outcome-exposure level (either the treatment-propensity score or the outcome regression model). We present the full estimation procedure in Algorithm~\ref{alg:marginal_raking}, including an influence function-based variance estimator. As in the case of $\mu_{1,n,CGR}$ above, an empirical sandwich approach could be used to obtain a doubly-robust variance estimator.

\begin{algorithm}
    \caption{Marginal generalized raking for the treatment-specific mean $\mu_1$}\label{alg:marginal_raking}
    \begin{algorithmic}[1]
        \State construct estimator $\pi_{n,IPCW}$ of $\pi$ using generalized linear regression;
        \State construct estimator $\eta_{n}$ of the optimal raking variable $\eta$
        \State calibrate the weights: set $\pi_{n,MGR,i}^{-1} = q_i\pi_{n,IPCW,i}^{-1}$, where 
            \begin{align*}
                q =& \ \argmin_{b \in \mathbb{R}^n}\sum_{i=1}^n R_i d(b_i\pi_{n,IPCW,i}^{-1},\pi_{n,IPCW,i}^{-1}) \\
                & \text{ subject to }  \sum_{i=1}^n \eta_n(V_i) = \sum_{i=1}^n R_i b_i \pi_{n,IPCW,i}^{-1} \eta_n(V_i).
            \end{align*}
        \State construct estimator $g_n$ of $g$ using, e.g., logistic regression (weighted by $\pi_{n,MGR}$ if necessary);
        \State construct estimator $Q_{n}$ of $Q$ using generalized linear regression weighted by $\pi_{n,MGR}$;
        \State compute estimator 
        \begin{align*}
            \mu_{1,n,MGR} :=& \ \frac{1}{n}\sum_{i=1}^n \frac{R_i}{\pi_{n,MGR}(V_i)}\left[\frac{I(X_i = 1)}{g_n(L_i)}\{Y_i - Q_n(1,L_i)\} + Q_n(1,L_i)\right] 
        \end{align*}
        \State compute estimator $\sigma^2_n$ of the asymptotic variance $\sigma_0^2$ of $n^{1/2}(\mu_{n,1} - \mu_1)$ using
        \begin{align*}
            \sigma^2_n := \frac{1}{n^2}\sum_{i=1}^n\sum_{j=1}^n \left[\phi_{1,obs,P,n}(O)\phi_{1,obs,P,n}(O)^\top\right]_{ij}.
        \end{align*}
    \end{algorithmic}
\end{algorithm}

\section{Numerical experiments}\label{sec:sims}

\subsection{Estimation procedures}

The estimators we consider are conditional generalized raking (CGR), marginal generalized raking (MGR), and AIPCW-AIPTW. For CGR, we both follow Algorithm~\ref{alg:conditional_raking} using regression adjustment to account for confounding [CGR (RA)] and use Algorithm~\ref{alg:cgr_iptw} for IPTW [CGR (IPTW)]; the latter algorithm provides a weighting-based comparator to MGR. We first estimate the probability of having full data using logistic regression. Next, we estimate IPTW weights in the complete data, again using logistic regression. We use a linear regression model for continuous outcomes and a logistic regression model for binary outcomes. Then we use multiple imputation \citep[MI; ][]{vanbuuren2018flexible} with 10 imputed datasets, implemented in the R package \texttt{mice} \citep{micepkg}, to obtain the estimated optimal raking variables $\eta_{\beta,n}$; we calibrate the initial weights and obtain a final estimator using the R package \texttt{survey} \citep{surveypkg}. Finally, we use the \texttt{marginaleffects} R package \citep{marginaleffectspkg} to marginalize and obtain delta method-based confidence intervals (equivalent to Algorithm~\ref{alg:conditional_raking}). For MGR, we follow Algorithm~\ref{alg:marginal_raking}. We estimate the probability of having full data and the treatment-assignment propensity score using logistic regression. We estimate the outcome regression using generalized linear regression (linear regression for continuous outcomes and logistic regression for binary outcomes). As in CGR, we use MI to estimate the optimal raking variables, now based on the marginal EIFs, and calibrate the weights using \texttt{survey}. Finally, we computed an AIPCW-AIPTW estimator as a comparator, using the same approaches as MGR for the missing-data model, treatment-assignment propensity score, outcome regression model, and optimal raking variable. 

\subsection{Data-generating scenarios}

We consider five scenarios with missing data, and for each scenario generate a continuous outcome and binary outcome. The scenarios are designed to show the performance of the methods in correctly specified scenarios (Scenarios 1 and 2) and scenarios with misspecification of the treatment propensity score (Scenario 3), outcome regression model (Scenario 4), and missing-data model (Scenario 5). This misspecification is achieved by omitting variables (denoted by $U$ below) from the fitted models that are part of the true data-generating model. 

We generate data following a similar structure to \citet{williamson2026assessing}. We first generate covariates $L=(Z,W,U,A)$, where we have two always-observed confounders $Z = (Z_1, Z_2)$, two confounders subject to missingness $W = (W_1, W_2)$, and four variables $U = (U_1,U_2)$ and $C = (C_1, C_2)$ that are confounders only in certain scenarios. $L$ follows a multivariate normal distribution with mean zero and pairwise correlation 0 between variables. Next, we create a binary exposure variable $X \sim Bern\{\expit([1,L]\beta_X)\},$ where $\beta_X$ varies based on the scenario. Based on $L$ and $X$, we generate outcomes and missing data in $W$. The outcome in all cases is generated according to a generalized linear model, with 
\begin{align*}
    h\{E(Y \mid X = x, Z = z, W = w, U = u, C = c)\} = (1,x,z,w,u,c)\beta_0,
\end{align*}
where $\beta_0$ depends on the scenario (Table~\ref{tab:sim_model_coefs}). In continuous outcome settings, $h$ is the identity and we generate Gaussian errors with mean zero and variance 1. The true regression coefficients depend on the scenario. In binary outcome settings, $h$ is the logit function, yielding a Bernoulli outcome with marginal probability approximately 25\% (Scenario 1b) or 10\% (Scenarios 2b, 3b, 4b). We generate missing data following a Bernoulli distribution with mean $P(R = 0 \mid X = x, Y = y, Z = z, C = c, U = u) = (1,x,y,z,c,u)\alpha_0,$ with $\alpha_0$ for each scenario defined in Table~\ref{tab:sim_model_coefs}, leading to marginal probability of missing data approximately 20\% (Scenario 1) or 50\% (Scenarios 2--5). The true ATE and RR for each scenario are provided in Table~\ref{tab:true_values}. In Scenario 1, $U$ and $C$ are not confounders. In Scenario 2, $U$ are not confounders but $C$ are. In Scenarios 3 and 4, both $U$ and $C$ are confounders; in Scenario 5, $U$ are part of the missing-data model and $C$ are confounders.

For each scenario, we generated 2500 independent datasets following the specification described above. For continuous outcomes (sub-scenario ``a'') we generated data with sample size $n \in \{250, 500, 1000, 2000\}$. For binary outcomes (sub-scenario ``b'') we generated data with sample size $n \in \{1000, 2000, 4000, 8000\}$, which resulted in similar effective sample size (number of events) to the continuous outcome cases in Scenario 1b and slightly smaller effective sample size in Scenarios 2b--5b. On each dataset, we fit CGR-RA, CGR-IPTW, MGR, and AIPCW-AIPTW, obtaining a point estimate and 95\% confidence interval. In Scenarios 1 and 2, we fit correctly-specified regression models for the missing-data model, treatment propensity score model, and outcome regression model. In Scenario 3, the fitted treatment propensity score model was misspecified by omitting $U$, but the fitted outcome regression model included $U$. This reflects a case where the variables $U$ are not known confounders but are included just in the outcome model. In Scenario 4, the fitted treatment propensity score model included $U$ but the fitted outcome regression model was misspecified by omitting $U$. In Scenario 5, the fitted missing data model was misspecified by omitting $U$ but all other models were correctly specified. These latter three scenarios were designed to illustrate the performance of the estimators under misspecification. Importantly, due to the structure of the data-generating mechanism, in Scenario 4 both the outcome regression model and the optimal raking variable were misspecified, so we do not expect any method to be fully efficient. 

We assessed the performance of each estimator with respect to bias (median across the 2500 replications), median asymptotic standard error (ASE), empirical standard error (computed on the same scale as the ASE; log scale for the RR and identity scale for all other estimands), median absolute deviation (MAD; also computed on the same scale as the ASE), robust root mean squared error (rRMSE; $\text{median bias}^2 + \text{MAD}^2$), relative efficiency (comparing rRMSE to MGR; values greater than 1 imply more efficient) and confidence interval coverage based on asymptotic or empirical standard errors. We compute ESE and MAD on the log scale for the RR because the asymptotic standard error for estimators of the RR are computed on this scale.

\begin{table}
\centering
\caption{True values of estimands in the simulation scenarios. Scenarios ending in ``a'' involve a continuous outcome, so the $RR$ is not defined.\label{tab:true_values}}
\centering
\fontsize{11}{13}\selectfont
\begin{tabular}[t]{lrrrrrrrrrr}
\toprule
\multicolumn{1}{c}{ } & \multicolumn{10}{c}{Scenario} \\
\cmidrule(l{3pt}r{3pt}){2-11}
Estimand & 1a & 1b & 2a & 2b & 3a & 3b & 4a & 4b & 5a & 5b\\
\midrule
$\mu_0$ & -0.25 & 0.16 & -4.5 & 0.06 & -9.5 & 0.07 & -9.5 & 0.07 & -4.5 & 0.06\\
$\mu_1$ & 1.25 & 0.34 & -3 & 0.15 & -8 & 0.11 & -8 & 0.11 & -3 & 0.15\\
$ATE$ & 1.5 & 0.17 & 1.5 & 0.09 & 1.5 & 0.04 & 1.5 & 0.04 & 1.5 & 0.09\\
$RR$ &  & 2.07 &  & 2.48 &  & 1.52 &  & 1.52 &  & 2.48\\
\bottomrule
\end{tabular}
\end{table}

\subsection{Results with correctly-specified models}

We first present results with correctly-specified models (Scenarios 1 and 2). Results with a 10\% marginal probability binary outcome and approximately 50\% missing data (Scenario 2b) are in Tables~\ref{tab:sim_combined_performance_tab_s4_ATE} (ATE) and \ref{tab:sim_combined_performance_tab_s4_RR} (RR). At all sample sizes and for both the ATE and RR, MGR has similar bias and variance, and therefore relative efficiency, to AIPCW-AIPTW. While the two estimators are asymptotically equivalent, so we would expect to see similar performance in large sample sizes, it is interesting that they are similar even in small samples. Relative efficiency when estimating the ATE and RR using CGR (RA) are also similar to MGR in this scenario, particularly at large sample sizes; in contrast, relative efficiency using CGR (IPTW) is lower than MGR (Table~\ref{tab:sim_combined_performance_tab_s4_cgr_iptw}). With a continuous outcome and approximately 50\% missing data (Scenario 2a), the results are similar to those in Scenario 2b (Table~\ref{tab:sim_combined_performance_tab_s3_ATE}). Results are also similar with less missing data and a more common outcome (Scenario 1; Tables~\ref{tab:sim_combined_performance_tab_s1_ATE}--\ref{tab:sim_combined_performance_tab_s2_RR}). The lower relative efficiency of CGR (IPTW) matches with theory, because the raking variable used with CGR (IPTW) is not the optimal raking variable. We did not expect the efficiency of CGR (RA) to be so close to the efficiency of MGR in Scenarios 1b and 2b with a nonlinear link function, but the EIF for the regression parameter for $X$ in CGR is highly correlated with the marginal EIFs for the RR and ATE, and the efficiency gain for GR is tied to the correlation with the EIF \citep{lumley2011connections}. 

\begin{table}[!h]
\centering
\caption{Performance of estimators of the ATE in Scenario 2b, a binary outcome setting with approximately 50\% missing data and correctly specified nuisance models.\label{tab:sim_combined_performance_tab_s4_ATE}}
\centering
\fontsize{9}{11}\selectfont
\begin{threeparttable}
\begin{tabular}[t]{>{\raggedright\arraybackslash}p{3.5em}lrrrrrrrrr}
\toprule
Est. & n & Med. bias & ASE & ESE & MAD & rRMSE & RMSE & Rel. Eff. & Cover & ESE Cover\\
\midrule
AIPCW-AIPTW &  & $<$ 0.0001 & 0.021 & 0.021 & 0.020 & 0.020 & 0.021 & 0.932 & 0.956 & 0.952\\
\cmidrule{1-1}
\cmidrule{3-11}
CGR (RA) &  & -4 $\times 10^{-04}$ & 0.019 & 0.019 & 0.019 & 0.019 & 0.019 & 1.023 & 0.937 & 0.946\\
\cmidrule{1-1}
\cmidrule{3-11}
MGR & \multirow{-3}{*}{\raggedright\arraybackslash 1000} & -7 $\times 10^{-04}$ & 0.021 & 0.020 & 0.019 & 0.019 & 0.020 & 1.000 & 0.960 & 0.950\\
\cmidrule{1-11}
AIPCW-AIPTW &  & 9 $\times 10^{-04}$ & 0.014 & 0.014 & 0.014 & 0.014 & 0.014 & 0.968 & 0.938 & 0.945\\
\cmidrule{1-1}
\cmidrule{3-11}
CGR (RA) &  & 2 $\times 10^{-04}$ & 0.013 & 0.014 & 0.013 & 0.013 & 0.014 & 1.017 & 0.940 & 0.948\\
\cmidrule{1-1}
\cmidrule{3-11}
MGR & \multirow{-3}{*}{\raggedright\arraybackslash 2000} & 4 $\times 10^{-04}$ & 0.014 & 0.014 & 0.014 & 0.014 & 0.014 & 1.000 & 0.949 & 0.946\\
\cmidrule{1-11}
AIPCW-AIPTW &  & 3 $\times 10^{-04}$ & 0.010 & 0.010 & 0.010 & 0.010 & 0.010 & 0.946 & 0.944 & 0.951\\
\cmidrule{1-1}
\cmidrule{3-11}
CGR (RA) &  & -5 $\times 10^{-04}$ & 0.009 & 0.009 & 0.010 & 0.010 & 0.009 & 0.999 & 0.950 & 0.954\\
\cmidrule{1-1}
\cmidrule{3-11}
MGR & \multirow{-3}{*}{\raggedright\arraybackslash 4000} & -5 $\times 10^{-04}$ & 0.010 & 0.010 & 0.010 & 0.010 & 0.010 & 1.000 & 0.950 & 0.953\\
\cmidrule{1-11}
AIPCW-AIPTW &  & 6 $\times 10^{-04}$ & 0.007 & 0.007 & 0.007 & 0.007 & 0.007 & 0.959 & 0.932 & 0.948\\
\cmidrule{1-1}
\cmidrule{3-11}
CGR (RA) &  & -2 $\times 10^{-04}$ & 0.007 & 0.007 & 0.007 & 0.007 & 0.007 & 1.000 & 0.939 & 0.946\\
\cmidrule{1-1}
\cmidrule{3-11}
MGR & \multirow{-3}{*}{\raggedright\arraybackslash 8000} & -3 $\times 10^{-04}$ & 0.007 & 0.007 & 0.007 & 0.007 & 0.007 & 1.000 & 0.941 & 0.948\\
\bottomrule
\end{tabular}
\begin{tablenotes}
\item Abbreviations: ATE: average treatment effect; RR: relative risk; RR: relative risk; Est: Estimator; Med: Median; ASE: median asymptotic standard error; ESE: empirical standard error; MAD: mean absolute deviation; rRMSE: robust root mean squared error (RMSE) (square root of med bias squared + MAD squared); RMSE: RMSE (mean bias squared + ESE squared); Rel. Eff.: relative efficiency (rRMSE relative to that of MGR); Cover: coverage based on the ASE; ESE Cover: coverage based on the ESE; CGR: conditional generalized raking (GR); IPTW: inverse probability of treatment weighting; RA: regression adjustment; MGR: marginal GR; AIPCW: augmented inverse probability of coarsening weighting; AIPTW: augmented IPTW. Results are based on 2500 Monte-Carlo replications. 
\end{tablenotes}
\end{threeparttable}
\end{table}

\begin{table}[!h]
\centering
\caption{Performance of estimators of the RR in Scenario 2b, a binary outcome setting with approximately 50\% missing data and correctly specified nuisance models.\label{tab:sim_combined_performance_tab_s4_RR}}
\centering
\fontsize{9}{11}\selectfont
\begin{threeparttable}
\begin{tabular}[t]{>{\raggedright\arraybackslash}p{3.5em}lrrrrrrrrr}
\toprule
Est. & n & Med. bias & ASE & ESE & MAD & rRMSE & RMSE & Rel. Eff. & Cover & ESE Cover\\
\midrule
AIPCW-AIPTW &  & -0.0200 & 0.230 & 0.212 & 0.205 & 0.206 & 0.217 & 0.956 & 0.964 & 0.948\\
\cmidrule{1-1}
\cmidrule{3-11}
CGR (RA) &  & -0.0067 & 0.196 & 0.202 & 0.194 & 0.195 & 0.209 & 1.012 & 0.940 & 0.948\\
\cmidrule{1-1}
\cmidrule{3-11}
MGR & \multirow{-3}{*}{\raggedright\arraybackslash 1000} & 0.0053 & 0.231 & 0.205 & 0.197 & 0.197 & 0.211 & 1.000 & 0.972 & 0.951\\
\cmidrule{1-11}
AIPCW-AIPTW &  & -0.0077 & 0.149 & 0.144 & 0.144 & 0.144 & 0.147 & 0.980 & 0.959 & 0.946\\
\cmidrule{1-1}
\cmidrule{3-11}
CGR (RA) &  & 0.0066 & 0.137 & 0.139 & 0.139 & 0.139 & 0.145 & 1.014 & 0.948 & 0.947\\
\cmidrule{1-1}
\cmidrule{3-11}
MGR & \multirow{-3}{*}{\raggedright\arraybackslash 2000} & 0.0122 & 0.150 & 0.139 & 0.140 & 0.141 & 0.145 & 1.000 & 0.963 & 0.947\\
\cmidrule{1-11}
AIPCW-AIPTW &  & -0.0185 & 0.101 & 0.099 & 0.098 & 0.100 & 0.100 & 0.966 & 0.952 & 0.948\\
\cmidrule{1-1}
\cmidrule{3-11}
CGR (RA) &  & -0.0069 & 0.096 & 0.096 & 0.096 & 0.096 & 0.097 & 1.008 & 0.950 & 0.951\\
\cmidrule{1-1}
\cmidrule{3-11}
MGR & \multirow{-3}{*}{\raggedright\arraybackslash 4000} & -0.0098 & 0.101 & 0.097 & 0.096 & 0.097 & 0.098 & 1.000 & 0.956 & 0.949\\
\cmidrule{1-11}
AIPCW-AIPTW &  & -0.0185 & 0.070 & 0.070 & 0.070 & 0.072 & 0.070 & 0.940 & 0.945 & 0.945\\
\cmidrule{1-1}
\cmidrule{3-11}
CGR (RA) &  & -0.0053 & 0.067 & 0.068 & 0.066 & 0.066 & 0.068 & 1.019 & 0.940 & 0.945\\
\cmidrule{1-1}
\cmidrule{3-11}
MGR & \multirow{-3}{*}{\raggedright\arraybackslash 8000} & -0.0068 & 0.070 & 0.069 & 0.067 & 0.068 & 0.069 & 1.000 & 0.952 & 0.946\\
\bottomrule
\end{tabular}
\begin{tablenotes}
\item Abbreviations: ATE: average treatment effect; RR: relative risk; RR: relative risk; Est: Estimator; Med: Median; ASE: median asymptotic standard error; ESE: empirical standard error; MAD: mean absolute deviation; rRMSE: robust root mean squared error (RMSE) (square root of med bias squared + MAD squared); RMSE: RMSE (mean bias squared + ESE squared); Rel. Eff.: relative efficiency (rRMSE relative to that of MGR); Cover: coverage based on the ASE; ESE Cover: coverage based on the ESE; CGR: conditional generalized raking (GR); IPTW: inverse probability of treatment weighting; RA: regression adjustment; MGR: marginal GR; AIPCW: augmented inverse probability of coarsening weighting; AIPTW: augmented IPTW. Results are based on 2500 Monte-Carlo replications. 
\end{tablenotes}
\end{threeparttable}
\end{table}

\subsection{Results with misspecified models}

We now consider scenarios with misspecified models for either the treatment propensity score (Scenario 3), outcome regression and optimal raking variable (Scenario 4), or missing-data model (Scenario 5). In Scenario 3, the fitted propensity score model was misspecified by omitting two confounders. This resulted in substantial bias for CGR (IPTW) and a corresponding loss of efficiency, while all other estimators remained unbiased (Tables~\ref{tab:sim_combined_performance_tab_s6_ATE}, \ref{tab:sim_combined_performance_tab_s5_ATE}--\ref{tab:sim_combined_performance_tab_s6_cgr_iptw}). CGR (RA) was more efficient than MGR or AIPCW-AIPTW in this setting, as expected from theory; the latter two estimators incorporate the misspecified propensity score model, so double-robustness guarantees consistency but not efficiency. In Scenario 4, the fitted outcome regression model and the optimal raking variables were misspecified by omitting the two confounders. This resulted in substantial bias for CGR (RA) and a corresponding loss of efficiency, while all other estimators remained unbiased (Tables~\ref{tab:sim_combined_performance_tab_s8_ATE}, \ref{tab:sim_combined_performance_tab_s7_ATE}--\ref{tab:sim_combined_performance_tab_s8_cgr_iptw}). CGR (IPTW) was the most efficient in this case. Again, both results are expected from theory, because MGR and AIPCW-AIPTW include the misspecified outcome regression model, but are double-robust with respect to consistency. Finally, in Scenario 5, the fitted missing-data model was misspecified. In this scenario, all methods were consistent and had small variance, resulting in efficiency all within Monte-Carlo error of each other (Tables~\ref{tab:sim_combined_performance_tab_s9_ATE}--\ref{tab:sim_combined_performance_tab_s10_RR}).

\begin{table}[!h]
\centering
\caption{Performance of estimators of the ATE in Scenario 3b, a binary outcome setting with approximately 50\% missing data and a misspecified treatment propensity model.\label{tab:sim_combined_performance_tab_s6_ATE}}
\centering
\fontsize{9}{11}\selectfont
\begin{threeparttable}
\begin{tabular}[t]{>{\raggedright\arraybackslash}p{3.5em}lrrrrrrrrr}
\toprule
Est. & n & Med. bias & ASE & ESE & MAD & rRMSE & RMSE & Rel. Eff. & Cover & ESE Cover\\
\midrule
AIPCW-AIPTW &  & 2 $\times 10^{-04}$ & 0.017 & 0.017 & 0.016 & 0.016 & 0.017 & 1.001 & 0.945 & 0.946\\
\cmidrule{1-1}
\cmidrule{3-11}
CGR (RA) &  & 4 $\times 10^{-04}$ & 0.014 & 0.015 & 0.015 & 0.015 & 0.015 & 1.118 & 0.916 & 0.952\\
\cmidrule{1-1}
\cmidrule{3-11}
MGR & \multirow{-3}{*}{\raggedright\arraybackslash 1000} & 4 $\times 10^{-04}$ & 0.017 & 0.016 & 0.016 & 0.016 & 0.016 & 1.000 & 0.951 & 0.948\\
\cmidrule{1-11}
AIPCW-AIPTW &  & 5 $\times 10^{-04}$ & 0.010 & 0.011 & 0.011 & 0.011 & 0.011 & 0.961 & 0.928 & 0.949\\
\cmidrule{1-1}
\cmidrule{3-11}
CGR (RA) &  & $<$ 0.0001 & 0.009 & 0.010 & 0.010 & 0.010 & 0.010 & 1.090 & 0.942 & 0.946\\
\cmidrule{1-1}
\cmidrule{3-11}
MGR & \multirow{-3}{*}{\raggedright\arraybackslash 2000} & 2 $\times 10^{-04}$ & 0.010 & 0.011 & 0.011 & 0.011 & 0.011 & 1.000 & 0.936 & 0.948\\
\cmidrule{1-11}
AIPCW-AIPTW &  & 8 $\times 10^{-04}$ & 0.007 & 0.008 & 0.007 & 0.007 & 0.008 & 0.997 & 0.900 & 0.941\\
\cmidrule{1-1}
\cmidrule{3-11}
CGR (RA) &  & 3 $\times 10^{-04}$ & 0.007 & 0.007 & 0.007 & 0.007 & 0.007 & 1.079 & 0.944 & 0.949\\
\cmidrule{1-1}
\cmidrule{3-11}
MGR & \multirow{-3}{*}{\raggedright\arraybackslash 4000} & 4 $\times 10^{-04}$ & 0.007 & 0.008 & 0.007 & 0.007 & 0.008 & 1.000 & 0.914 & 0.947\\
\cmidrule{1-11}
AIPCW-AIPTW &  & 7 $\times 10^{-04}$ & 0.004 & 0.005 & 0.005 & 0.005 & 0.005 & 0.967 & 0.883 & 0.943\\
\cmidrule{1-1}
\cmidrule{3-11}
CGR (RA) &  & 2 $\times 10^{-04}$ & 0.005 & 0.005 & 0.005 & 0.005 & 0.005 & 1.097 & 0.943 & 0.951\\
\cmidrule{1-1}
\cmidrule{3-11}
MGR & \multirow{-3}{*}{\raggedright\arraybackslash 8000} & 5 $\times 10^{-04}$ & 0.004 & 0.005 & 0.005 & 0.005 & 0.005 & 1.000 & 0.892 & 0.949\\
\bottomrule
\end{tabular}
\begin{tablenotes}
\item Abbreviations: ATE: average treatment effect; RR: relative risk; RR: relative risk; RR: relative risk; Est: Estimator; Med: Median; ASE: median asymptotic standard error; ESE: empirical standard error; MAD: mean absolute deviation; rRMSE: robust root mean squared error (RMSE) (square root of med bias squared + MAD squared); RMSE: RMSE (mean bias squared + ESE squared); Rel. Eff.: relative efficiency (rRMSE relative to that of MGR); Cover: coverage based on the ASE; ESE Cover: coverage based on the ESE; CGR: conditional generalized raking (GR); IPTW: inverse probability of treatment weighting; RA: regression adjustment; MGR: marginal GR; AIPCW: augmented inverse probability of coarsening weighting; AIPTW: augmented IPTW. Results are based on 2500 Monte-Carlo replications. 
\end{tablenotes}
\end{threeparttable}
\end{table}

\begin{table}[!h]
\centering
\caption{Performance of estimators of the ATE in Scenario 4b, a binary outcome setting with approximately 50\% missing data and a misspecified outcome regression model.\label{tab:sim_combined_performance_tab_s8_ATE}}
\centering
\fontsize{9}{11}\selectfont
\begin{threeparttable}
\begin{tabular}[t]{>{\raggedright\arraybackslash}p{3.5em}lrrrrrrrrr}
\toprule
Est. & n & Med. bias & ASE & ESE & MAD & rRMSE & RMSE & Rel. Eff. & Cover & ESE Cover\\
\midrule
AIPCW-AIPTW &  & 0.0016 & 0.022 & 0.034 & 0.029 & 0.030 & 0.034 & 0.922 & 0.876 & 0.956\\
\cmidrule{1-1}
\cmidrule{3-11}
CGR (RA) &  & 0.0386 & 0.020 & 0.020 & 0.019 & 0.043 & 0.043 & 0.631 & 0.515 & 0.492\\
\cmidrule{1-1}
\cmidrule{3-11}
MGR & \multirow{-3}{*}{\raggedright\arraybackslash 1000} & 0.0016 & 0.022 & 0.033 & 0.027 & 0.027 & 0.033 & 1.000 & 0.891 & 0.956\\
\cmidrule{1-11}
AIPCW-AIPTW &  & 0.0014 & 0.015 & 0.023 & 0.021 & 0.021 & 0.023 & 0.979 & 0.830 & 0.950\\
\cmidrule{1-1}
\cmidrule{3-11}
CGR (RA) &  & 0.0380 & 0.014 & 0.014 & 0.013 & 0.040 & 0.040 & 0.503 & 0.202 & 0.204\\
\cmidrule{1-1}
\cmidrule{3-11}
MGR & \multirow{-3}{*}{\raggedright\arraybackslash 2000} & 0.0011 & 0.015 & 0.022 & 0.020 & 0.020 & 0.022 & 1.000 & 0.844 & 0.954\\
\cmidrule{1-11}
AIPCW-AIPTW &  & 0.0009 & 0.010 & 0.016 & 0.015 & 0.015 & 0.016 & 0.987 & 0.800 & 0.951\\
\cmidrule{1-1}
\cmidrule{3-11}
CGR (RA) &  & 0.0382 & 0.010 & 0.010 & 0.009 & 0.039 & 0.039 & 0.379 & 0.023 & 0.025\\
\cmidrule{1-1}
\cmidrule{3-11}
MGR & \multirow{-3}{*}{\raggedright\arraybackslash 4000} & 0.0008 & 0.010 & 0.016 & 0.015 & 0.015 & 0.016 & 1.000 & 0.804 & 0.951\\
\cmidrule{1-11}
AIPCW-AIPTW &  & 0.0010 & 0.007 & 0.012 & 0.011 & 0.011 & 0.012 & 0.989 & 0.772 & 0.957\\
\cmidrule{1-1}
\cmidrule{3-11}
CGR (RA) &  & 0.0377 & 0.007 & 0.007 & 0.007 & 0.038 & 0.038 & 0.280 & 0.000 & 0.000\\
\cmidrule{1-1}
\cmidrule{3-11}
MGR & \multirow{-3}{*}{\raggedright\arraybackslash 8000} & 0.0005 & 0.007 & 0.012 & 0.011 & 0.011 & 0.012 & 1.000 & 0.780 & 0.955\\
\bottomrule
\end{tabular}
\begin{tablenotes}
\item Abbreviations: ATE: average treatment effect; RR: relative risk; RR: relative risk; RR: relative risk; RR: relative risk; Est: Estimator; Med: Median; ASE: median asymptotic standard error; ESE: empirical standard error; MAD: mean absolute deviation; rRMSE: robust root mean squared error (RMSE) (square root of med bias squared + MAD squared); RMSE: RMSE (mean bias squared + ESE squared); Rel. Eff.: relative efficiency (rRMSE relative to that of MGR); Cover: coverage based on the ASE; ESE Cover: coverage based on the ESE; CGR: conditional generalized raking (GR); IPTW: inverse probability of treatment weighting; RA: regression adjustment; MGR: marginal GR; AIPCW: augmented inverse probability of coarsening weighting; AIPTW: augmented IPTW. Results are based on 2500 Monte-Carlo replications. 
\end{tablenotes}
\end{threeparttable}
\end{table}

\section{Analysis of the VCCC data}\label{sec:data_analysis}

We now analyze data from a cohort of people living with HIV (PWH; we will also use this abbreviation refer to a person living with HIV) who received care at the Vanderbilt Comprehensive Care Clinic (VCCC) in Nashville, Tennessee \citep{oh2018considerations,giganti2020accounting}. The VCCC electronic health record (EHR) contains routinely-collected data related to clinical care, including demographics, laboratory measurements, pharmacy dispensations, clinical events, and vital status. There are both validated and unvalidated records for 4797 PWH who received care at the VCCC between 1998 and 2013.

Our goal is to estimate the average treatment effect and relative risk of a PWH experiencing an AIDS-defining event (ADE) or death within 5 years of treatment initiation, comparing PWH who initiated a combination antiretroviral therapy (ART) regimen containing a protease inhibitor (PI) against those who initiated a combination ART regimen that did not contain a PI. To ensure a clear illustration, we applied the following exclusion criteria to both the unvalidated and validated datasets and selected PWH who met the criteria from both datasets to ensure that any differences between estimators are not due to differences in PWH included in the sample \citep{oh2021improved}. We excluded PWH if they had a missing ART start date, started ART prior to enrollment, started an ART regimen with fewer than 3 medications, had an ADE before starting ART, or had fewer than 3.5 years of follow-up time after starting ART. In the unvalidated data, 824 PWH remained after applying these exclusion criteria; 847 remained after applying the criteria in the validated dataset. The final cohort of PWH who passed exclusion criteria in both data sources had 628 individuals.

Because we have validated data on all participants in the final sample, we can compute the ATE and RR in this sample and compare results against these values. To mimic a two-phase study where we did not have validated data for all PWH, we retrospectively sampled 1000 different validation subsets using a case-control sampling design, where we sampled all error-prone cases, i.e., all people with a possibly error-prone ADE or death within 5 years indicator, and a random sample of error-prone controls. We assumed that we had the budget to validate 314 records (50\%). Because we fixed the phase 1 cohort of 628 individuals, and thus only capture phase 2 variance, we do not evaluate the ASE (which estimates the total variance from phase 1 and phase 2). Instead, we only compare the ESE for all approaches. 

We next specify our validation, treatment propensity, and outcome regression models. We modeled the probability of being validated as a function of the error-prone ADE or death indicator, the error-prone indicator of starting on a combination ART regimen containing a PI, the error-prone (square-root transformed) CD4 count, error-prone year of ART initiation, age, an indicator of Black race, and an indicator of injection drug use (IDU). 
% We modeled the propensity of starting on a combination ART regimen containing a PI as a function of the error-prone PI indicator and the error-free versions of CD4 count, year of ART initiation, age, Black race, and IDU. 
We modeled the propensity of starting on a combination ART regimen containing a PI as a function of the error-free versions of CD4 count, year of ART initiation, age, Black race, and IDU. 
Our outcome model included the indicator of being on a PI-containing regimen and was adjusted for CD4 count, year of ART initiation, age, Black race, and IDU. Because we have error-prone versions of variables on all PWH, $V$ (the phase 1 variables) consisted of sex, age, Black race, IDU, and error-prone versions of the ADE or death indicator, indicator of starting on a combination ART regimen containing a PI, and year of ART initiation. We used MI with 10 imputed datasets to estimate the optimal raking variables. As in the simulations, we use Algorithm~\ref{alg:conditional_raking} to obtain the CGR (RA) estimator, Algorithm~\ref{alg:cgr_iptw} to obtain the CGR (IPTW) estimator, and Algorithm~\ref{alg:marginal_raking} to obtain the MGR estimator.

Of the 628 PWH, in the validated data 61 experienced an event, while 112 experienced the event in the unvalidated data; 57 outcomes were misclassified (positive predictive value [PPV] 51.8\%). Similarly, 266 started a PI-containing regimen in the validated cohort, while in the unvalidated data, 267 started a PI-containing regimen; 9 records were misclassified (PPV 98.1\%). There was also measurement error in year of treatment initiation and CD4 count at treatment initiation. Overall, 97 records (15\%) had an error in any of these variables (Table~\ref{tab:vccc_num_prop_errors}). The misclassification in the outcome yielded sensitivity, specificity, positive predictive value, and negative predictive value of 0.951, 0.905, 0.518, and 0.994, respectively (Table~\ref{tab:vccc_unvalidated_class_perf}). The estimated probability of being observed was bounded away from 0 (minimum value 0.32), and the estimated probability of treatment was bounded away from both 0 and 1 in both the validated and unvalidated data (ranging from 0.24 to 0.63). For further description of the cohort, see Table~\ref{tab:vccc_descriptive}.  

We present the results from our analysis in Table~\ref{tab:vccc_est}. The true values, estimated in the full validated data, are 0.05 (ATE) and 1.676 (RR) when using a g-formula or AIPTW to obtain these values; using IPTW results in 0.06 (ATE) and 1.89 (RR). We use the former values as the truth because two methods agreed. We observed nearly unbiased estimation of the ATE and RR using MGR and CGR (RA). In contrast, there was more bias for estimating both the ATE and RR using CGR (IPTW), which is close to the 0.06 (ATE) and 1.89 (RR) values instead. The ESE was similar between all three methods (all were within Monte-Carlo error of each other). The large bias for the CGR (IPTW) method led to a large rRMSE, leading to a substantial loss of efficiency relative to MGR, with CGR (IPTW) 30\% less efficient than MGR for estimating the ATE and nearly 50\% less efficient for the RR. This bias suggests that the propensity score model is difficult to estimate in this scenario. While CGR (RA) was approximately 10\% less efficient than MGR for estimating both the ATE and the RR, this was within Monte-Carlo error in this example. Finally, as expected due to the degree of measurement error in this cohort, the naive analysis on just the unvalidated data results in an estimated ATE of near zero (5$\times 10^{-4}$) and a relative risk of 0.997, which is substantially different than the estimates from the completely-validated data. 

\begin{table}[!h]
\centering
\caption{Performance of estimators of the effect of starting a protease inhibitor (PI)-containing combination antiretroviral therapy (ART) regimen on experiencing an AIDS-defining event or death within 5 years of ART initiation compared to an non-PI-containing regimen,  in the VCCC data. \label{tab:vccc_est}}
\centering
\fontsize{9}{11}\selectfont
\begin{threeparttable}
\begin{tabular}[t]{llrrrrrr}
\toprule
Estimand & Procedure & Med. est. & Med. bias & ESE & MAD & rRMSE & Rel. Eff.\\
\midrule
 & CGR (IPTW) & 0.062 & 0.012 & 0.011 & 0.010 & 0.016 & 0.695\\
\cmidrule{2-8}
 & CGR (RA) & 0.051 & 0.002 & 0.012 & 0.012 & 0.012 & 0.898\\
\cmidrule{2-8}
\multirow{-3}{*}{\raggedright\arraybackslash ATE} & MGR & 0.051 & 0.002 & 0.011 & 0.011 & 0.011 & 1.000\\
\cmidrule{1-8}
 & CGR (IPTW) & 1.882 & 0.206 & 0.109 & 0.110 & 0.233 & 0.501\\
\cmidrule{2-8}
 & CGR (RA) & 1.695 & 0.019 & 0.125 & 0.131 & 0.133 & 0.881\\
\cmidrule{2-8}
\multirow{-3}{*}{\raggedright\arraybackslash RR} & MGR & 1.696 & 0.019 & 0.113 & 0.115 & 0.117 & 1.000\\
\bottomrule
\end{tabular}
\begin{tablenotes}
\item Abbreviations: ATE: average treatment effect; RR: relative risk; CGR: conditional generalized raking (GR); IPTW: inverse probability of treatment weighting; RA: regression adjustment; MGR: marginal GR; Med.: median; est.: estimate; ESE: empirical standard error; MAD: median absolute deviation; rRMSE: robust root mean squared error (square root of median bias squared + MAD squared); Rel. Eff.: relative efficiency (rRMSE relative to that of MGR). Results are based on 1000 retrospectively-sampled datasets. 
\end{tablenotes}
\end{threeparttable}
\end{table}

\section{Conclusions}\label{sec:conclusions}

In this manuscript, we proposed marginal generalized raking (MGR) for estimating marginal estimands using parametric models. MGR generalizes the influence function-based theory of generalized raking for regression parameters to marginal estimands defined in parametric models. Our theoretical results show that MGR using the optimal calibration variable is asymptotically equivalent to the optimal augmented inverse probability of coarsening weighted estimator, generalizing a similar result for regression parameters. This is appealing for two reasons: generalized raking estimators are guaranteed to lie in the boundary of the parameter space, and generalized raking estimators are easily implemented in software \citep{surveypkg}. In simulations, the performance of our estimator matches what would be expected from the theory. 

In this parametric model setting, MGR was at least as efficient as conditional generalized raking (CGR) when using regression adjustment for confounders. The latter approach was a naive marginalization after estimating regression parameters using generalized raking, and was pursued in previous research for estimating marginal parameters \citep{williamson2026assessing}. Here, we showed that directly raking on the efficient influence function for the marginal estimand provides a computationally simple approach compared to marginalizing the conditional estimator at no loss of statistical efficiency. In a data analysis, we saw that MGR can have substantial performance gains in limited-data settings, depending on the approach taken to adjust for confounding. 

While the parametric model setting sets an important baseline for optimal efficiency, the estimand of interest is not always defined with respect to a parametric model. In future research, we will consider an extension of this work to nonparametric models, where machine learning could be used to estimate nuisance functions (e.g., the outcome regression model, treatment propensity score, or missing-data model), providing a new computationally-efficient approach to multiply-robust estimation in missing-data settings. This increased robustness to model misspecification will likely come at a price of increased variance relative to the parametric model setting. 

\section*{Acknowledgments}
This manuscript was supported in part by the U.S. National Institutes of Health (NIH) grants R37 AI131771 (BDW, RZ, TL, BES, PAS). Data collection was supported by NIH grant P30 AI110527 (BES).

\section*{Code availability}
We provide code to reproduce both simulations and data analyses on GitHub at \url{https://github.com/bdwilliamson/mgr_parametric_supplementary}.

\ifarXiv
\section*{Supplementary Material}
% start numbers with "S"
\renewcommand{\thefigure}{S\arabic{figure}}
\renewcommand{\theequation}{S\arabic{equation}}
\renewcommand{\thetable}{S\arabic{table}}
\renewcommand{\thetheorem}{S\arabic{theorm}}
\renewcommand{\thelemma}{S\arabic{lemma}}
\renewcommand{\thesection}{S\arabic{section}}
\renewcommand{\thealgorithm}{S\arabic{algorithm}}
\setcounter{figure}{0}
\setcounter{table}{0}
\setcounter{section}{0}
\setcounter{algorithm}{0}

\section{CGR (IPTW) estimator}\label{sec:cgr_iptw}

In the IPTW approach to CGR, we will use weights to account for the probability of receiving treatment, and the final outcome regression model will contain only $X$ (and an intercept). In other words, the ideal-data regression model is $E(Y \mid X) = h^{-1}\{(1,x)\alpha\}$, which we would estimate using IPTW weights in the absence of missing data. After obtaining $\pi_{n,IPCW}$, we estimate the treatment propensity score $g$ using logistic regression weighted by $\pi_{n,IPCW}^{-1}$, obtaining estimator $g_n$. 
In the IPTW approach, the optimal raking variable is $\eta_\alpha(v) = E\{\phi_\alpha^F(D) \mid V = v\}$, where $\phi_\alpha^F$ is the vector of EIFs for the regression parameters in the univariate regression model. Then, we use generalized raking to obtain an estimator $\alpha_{n,GR}$ of $\alpha$, modifying Equation~\eqref{eq:gr} to use an estimator $\eta_{\alpha,n}$ of $\eta_\alpha$ and initial weights $\pi_{n,IPCW,i}^{-1}$. Based on these estimators, we define $Q_{n,1,CGR,IPTW}(x,l) = I(x = 1)g_n^{-1}(l)h^{-1}\{(1,x)\alpha_{n,GR}\}$ and $Q_{n,0,CGR,IPTW}(x,l) = I(x = 0)\{1 - g_n(l)\}^{-1}h^{-1}\{(1,x)\alpha_{n,GR}\}$. We provide the explicit steps for estimating $\mu_1$ using CGR-IPTW in Algorithm~\ref{alg:cgr_iptw}.
\begin{algorithm}
    \caption{Conditional generalized raking for the treatment-specific mean $\mu_1$ using inverse probability of treatment weighting}\label{alg:cgr_iptw}
    \begin{algorithmic}[1]
        \State construct estimator $\pi_{n,IPCW}$ of $\pi$ using generalized linear regression;
        \State construct estimator $\eta_{\alpha,n}$ of the optimal raking variable $\eta_{\alpha}$;
        \State calibrate the weights: set $\pi_{n,CGR-IPTW,i}^{-1} = q_i\pi_{n,IPCW,i}^{-1}$, where 
            \begin{align*}
                q =& \ \argmin_{b \in \mathbb{R}^n}\sum_{i=1}^n R_i d(b_i\pi_{n,IPCW,i}^{-1},\pi_{n,IPCW,i}^{-1}) \\
                & \text{ subject to }  \sum_{i=1}^n \eta_{\alpha,n}(V_i) = \sum_{i=1}^n R_i b_i \pi_{n,IPCW,i}^{-1} \eta_{\alpha,n}(V_i).
            \end{align*}
        \State define new weights $\pi_{n,GR,IPTW,i}^{-1} = \pi_{n,CGR-IPTW,i}^{-1}g_n(l_i)^{-1}$;
        \State obtain estimator $\alpha_{n,GR}$ using generalized linear regression of $Y$ on $X$ weighted by $\pi_{n,GR,IPTW,i}^{-1}$
        \State construct estimator $Q_{n,1,CGR}$ of $Q$ by setting $Q_{n,1,CGR}(x,l) = I(x = 1)g_n^{-1}(l)h^{-1}\{(1,x)\alpha_{n,GR}\}$;
        \State compute estimator 
        \begin{align*}
            \mu_{1,n,CGR} := \frac{1}{n}\sum_{i=1}^n \frac{R_i}{\pi_{n,GR}(V_i)} Q_{n,1,CGR}(1,L_i);
        \end{align*}
        \State compute estimator $\sigma^2_{1,n,CGR}$ of the asymptotic variance $\sigma_{1,CGR}^2$ of $n^{1/2}(\mu_{1,n,CGR} - \mu_1)$ following Lemma~\ref{lemma:cgr_variance}: 
        \begin{align*}
            \sigma^2_{1,n,CGR} =& \ \Omega_n \Sigma_{n,\beta} \Omega_n^\top, \text{ where } \\
            \Omega_n =& \ \frac{1}{n}\sum_{i=1}^n \frac{R_i}{\pi_{n,GR}(V_i)}\frac{I(X_i = 1)}{g_n(L_i)}\nabla h^{-1}(1,1; \alpha_{n,GR}) \text{ and }\\
            \Sigma_{n,\beta} =& \ \frac{1}{n}\sum_{i=1}^n \phi_\alpha(Y_i,1)\phi_\alpha(Y_i,1)^\top.
        \end{align*}
    \end{algorithmic}
\end{algorithm}

\section{Variance of the conditional generalized raking estimator}\label{sec:cgr_variance}
\subsection{Proof of Lemma 1}

\begin{proof}
We first prove the result for regression adjustment. The generalized raking estimator $\beta_{n,GR}$ of the regression parameters $\beta$ is the solution to the estimating equation
\begin{align*}
    \frac{1}{n}\sum_{i=1}^n\frac{R_i}{\pi_{n,GR}(V_i)}S(Y_i,X_i,L_i;\tilde{\beta}) = 0,
\end{align*}
where $S(\cdot;\tilde{\beta})$ is the score function. Under the assumption that either $\pi_n \to_P \pi$ (A2) or $\eta_{\beta,n} \to_P \eta_\beta$ (A3), $\sqrt{n}(\beta_{n,GR} - \beta) \to_d N(0, \Sigma_\beta)$, where $\Sigma_\beta = E\{\phi_\beta(Y,X,L,R,V) \phi_\beta(Y,X,L,R,V)^\top\}$. This is a consequence of the double-robustness of the generalized raking estimator of regression parameters \citep{lumley2011connections}. 

Now we want to describe the asymptotic variance of 
\begin{align*}
 \mu_{1,n,CGR} :=& \ \frac{1}{n}\sum_{i=1}^n \frac{R_i}{\pi_{n,CGR}(V_i)}Q_{n,CGR}(1,L_i),
\end{align*}
where $Q_{n,CGR}(x,l) = h^{-1}\{(x,l)\beta_{n,GR}\}$. We can also rewrite $\mu_{1} = E\{E(Y \mid X = 1, L)\}$ as a function of $\beta$ using the parametric model assumption on $Q$:
\begin{align*}
    \mu_{1}(\beta) =& \ E\left\{\frac{R}{\pi_0(V)}E_\beta(Y \mid X = 1, L)\right\} \\
    =& \ E\left[\frac{R}{\pi_0(V)}h^{-1}\{(1,L)\beta\}\right].
\end{align*}
Because $h^{-1}$ is a smooth function of $\beta$, and exchanging integration with differentiation, we can take a derivative of $\mu_1$ with respect to $\beta$, yielding 
\begin{align*}
    \nabla \mu_1(\beta') \mid_{\beta'=\beta} = E\left[\frac{R}{\pi_0(V)}\nabla h^{-1}\{(1,L)\beta'\}\mid_{\beta' = \beta}\right].
\end{align*}
Thus, in our study of the asymptotic distribution of $\mu_{1,n,CGR}$, we can apply the delta method: $\mu_{1,n,CGR}$ uses estimators of each of the true quantities in $\mu_1(\beta')$. By the weak law of large numbers, the empirical expectation (over the marginal distribution of $L$) converges to the true expectation, and because $Q_n \to_P Q$, we have that
\begin{align*}
    \sqrt{n}(\mu_{1,n,CGR} - \mu_1) \to_d N(0, \sigma^2_{1,CGR}),
\end{align*}
where $\sigma^2_{1,CGR} = \ \Omega \Sigma_\beta \Omega^\top$ and 
where $\Omega = \nabla \mu_1(\beta') \mid_{\beta'=\beta} = E\left[\frac{R}{\pi_0(V)}\nabla h^{-1}\{(1,L)\beta'\}\mid_{\beta' = \beta}\right]$,
precisely what we wished to show. 

Next, we prove the result for IPTW. We first describe a modified version of (A3): 
\begin{itemize}
    \item[(A3')]  \textit{(consistency of the EIF projection under IPTW)} $\eta_{\alpha,n} \to_P \eta_\alpha$. 
\end{itemize}
Under (A2) or (A3'), the generalized raking estimator $\alpha_{n,GR}$ of $\alpha$ has asymptotic behavior described by $\sqrt{n}(\alpha_{n,GR} - \alpha) \to_d N(0,\Sigma_\alpha)$, where $\Sigma_\alpha = E\{\phi_\alpha(Y,X,L,R,V)\phi_\alpha(Y,X,L,R,V)^\top\}$ and $\phi_\alpha^F$ incorporates inverse weighting by both $\pi$ and $g$. Following a similar argument as above, where now $Q_{n,CGR}(x,l) = \frac{I(x=1)}{g_n(l)}h^{-1}\{(1,x)\alpha_{n,GR}\}$, the weak law of large numbers applies, and the delta method yields $\sigma_{1,CGR}^2 = \Omega_2 \Sigma_\alpha \Omega_2^\top$, where 
\begin{align*}
    \Omega_2 = E\left[\frac{R}{\pi(V)}\frac{I(X=1)}{g(L)}\nabla h^{-1}(1,1; \alpha')\mid_{\alpha'=\alpha}\right].
\end{align*}
\end{proof}

When using IPTW, Assumption (A1) is equivalent to requiring that the propensity score model is consistent, i.e., $g_n \to_P g$; that the true propensity score is bounded away from 0 and 1; and that the inverse-weighted univariate regression model is consistent. Assumption (A3') requires that the estimated two-parameter optimal raking variable $\eta_{\alpha,n}$ is consistent. Assumptions (A1), (A3), and (A3') are related: because $\eta_\beta$ and $\eta_\alpha$ are defined with respect to a parametric model, if Assumption (A1) is not satisfied then it is more difficult to satisfy Assumption (A3) or (A3').

\subsection{Explicit variance for common link functions}

In this section, we provide the explicit form of the variance above for two common link functions: identity link (e.g., linear regression) and logit link (e.g., logistic regression). We provide the influence function here. Recall that for generalized linear models with information matrix $I(\beta)$ and score $S(\cdot; \beta)$, the efficient influence function is $I(\beta)^{-1}S(\cdot; \beta)$. For notational convenience, set $a = (x,l)$.

When $h$ is the identity link function, then $\nabla h^{-1}\{a\beta\} = a$ and $\phi_\beta(a) = I(\beta)^{-1}S(a,y; \beta) = (a^\top a)^{-1}a^\top(y - a\beta)$. 

When $h$ is the logit link function, then $h^{-1}(a\beta) = \{1+\exp(-a\beta)\}^{-1}$. In this case, $$\nabla h^{-1}(a\beta) = \frac{\exp(-a\beta)}{\{1+\exp(-a\beta)\}^2}a,$$ and the influence function is $$\phi_\beta(a) = I(\beta)^{-1}S(a,y;\beta) = \{a^\top w(\beta)a\}^{-1}a^\top \{y - h^{-1}(a\beta)\},$$
where $$w(\beta)= \frac{\exp(-a\beta)}{\{1+\exp(-a\beta)\}^2}.$$

\section{Marginal generalized raking}

\subsection{Proof of Theorem 1}

We first state and prove a lemma that the MGR estimator is asymptotically equivalent to the AIPCW-AIPTW estimator. 

\begin{lemma}\label{lem:aipw_mgr_equivalence}
    $\mu_{1,n,MGR}$ is asymptotically equivalent to $\mu_{1,n,AIPCW-AIPTW}$.
\end{lemma}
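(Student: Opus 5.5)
The plan follows the argument of \citet{lumley2011connections} and \citet{deville1992calibration} for regression parameters, with the inner AIPTW pseudo-outcome in place of the score. Write $\psi_n(O_i) := \frac{I(X_i = 1)}{g_n(L_i)}\{Y_i - Q_n(1,L_i)\} + Q_n(1,L_i)$ and $w_i := \pi_{n,IPCW,i}^{-1}$, so that
\[
\mu_{1,n,MGR} = \frac{1}{n}\sum_{i=1}^n R_i q_i w_i \psi_n(O_i).
\]
The aim is to show
\[
\mu_{1,n,MGR} - \mu_{1,n,AIPCW-AIPTW} = o_P(n^{-1/2}).
\]

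\textbf{Step 1: linearize the calibration factors.} For the Poisson deviance (and any distance $d$ satisfying the usual Deville--S\"arndal conditions), the solution of the calibration problem has the form $q_i = F\{\lambda_n^\top \eta_n(V_i)\}$ for a smooth $F$ with $F(0) = 1$ and $F'(0) = 1$; for the Poisson deviance, $F = \exp$. The Lagrange multiplier $\lambda_n$ solves the calibration constraint. A Taylor expansion of that constraint gives
\[
\lambda_n = T_n^{-1}\frac{1}{n}\sum_{i=1}^n \Bigl(1 - R_i w_i\Bigr)\eta_n(V_i) + o_P(n^{-1/2}),
\qquad
T_n = \frac{1}{n}\sum_{i=1}^n R_i w_i \eta_n(V_i)\eta_n(V_i)^\top .
\]
The centered sum is $O_P(n^{-1/2})$ when (B3) holds, since $E\{(1 - R/\pi)\eta\} = 0$. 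When only (B4) holds, it is still $O_P(n^{-1/2})$ around a fixed limit, because the weights converge to some $\pi^*$ and the constraint is designed to be solvable. Hence $\lambda_n = O_P(n^{-1/2})$, and $q_i = 1 + \lambda_n^\top \eta_n(V_i) + O_P(n^{-1})$ uniformly in $i$. Uniformity uses the positivity in (B3) and boundedness of $\eta_n$.

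\textbf{Step 2: substitute and compare.} Inserting the expansion gives
\[
\mu_{1,n,MGR} = \frac{1}{n}\sum_i R_i w_i \psi_n(O_i) + \lambda_n^\top \frac{1}{n}\sum_i R_i w_i \eta_n(V_i)\psi_n(O_i) + o_P(n^{-1/2}).
\]
The second term equals
\[
B_n^\top \frac{1}{n}\sum_i (1 - R_i w_i)\eta_n(V_i) + o_P(n^{-1/2}),
\]
where $B_n = T_n^{-1}\frac{1}{n}\sum_i R_i w_i \eta_n(V_i)\psi_n(O_i)$ is the weighted least-squares coefficient from regressing $\psi_n$ on $\eta_n$ in the phase 2 sample. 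The AIPCW-AIPTW estimator has the same form, with $B_n$ replaced by $1$. It therefore remains to show
\[
(B_n - 1)\cdot \frac{1}{n}\sum_i (1 - R_i w_i)\eta_n(V_i) = o_P(n^{-1/2}).
\]
Since the second factor is $O_P(n^{-1/2})$, it suffices to show $B_n \to_P 1$. This step is the main obstacle.

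\textbf{Step 3: show $B_n \to_P 1$.} Let $\phi_{1,P}$ and $\eta$ be as in Section~\ref{sec:mgr}. The limit of $B_n$ is
\[
E\{\eta(V)\psi(O)\}/E\{\eta(V)^2\}.
\]
By iterated expectations, $E\{\eta(V)\psi(O)\} = E[\eta(V)E\{\psi(O)\mid V\}]$. Since $\psi$ and $\phi_{1,P}$ differ by a constant (up to the parametric projection), this equals $E\{\eta(V)^2\}$ plus a constant times $E\{\eta(V)\}$. The constant term vanishes after centering: including an intercept in the calibration, or noting that $E\{(1 - R/\pi)\cdot c\} = 0$, removes it.

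If instead one only has $\eta_n \to \eta^* \ne \eta$, the coefficient need not equal 1. In that case the argument should not aim for $B_n \to 1$. It should instead use that the calibrated estimator equals the AIPCW estimator with augmentation $B_n\eta_n$, which is still a valid mean-zero augmentation under (B3). So the asymptotic equivalence holds with respect to the estimated augmentation class rather than exact equality. The plan is to prove the exact equivalence under (B3) and (B4), and to note the projection interpretation otherwise.

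Consistency of $g_n$ and $Q_n$ enters only through $\psi_n \to \psi$. The remaining empirical-process terms, such as replacing $\psi_n$ by $\psi$ inside $B_n$, are handled by the law of large numbers together with Glivenko--Cantelli properties of the parametric working models.
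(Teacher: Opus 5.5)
Your route differs from the paper's. The paper never expands $q_i$. It writes the AIPCW-AIPTW estimator for a generic missingness estimator $\pi_n$ and substitutes $\pi_n = \pi_{n,GR}$, so that $\pi_{n,GR}^{-1}(V_i) = q_i w_i$. The calibration constraint then forces $n^{-1}\sum_i \{1 - R_i/\pi_{n,GR}(V_i)\}\eta_n(V_i) = 0$, so the augmentation term disappears and what remains is $\mu_{1,n,MGR}$. That argument is essentially an algebraic identity: it needs no rate for $\lambda_n$ and does not care what $\eta_n$ estimates. However, it equates MGR with the AIPCW-AIPTW estimator evaluated at the \emph{calibrated} weights, not at $\pi_{n,IPCW}$ as in the main-text definition. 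The discrepancy between those two versions is $n^{-1}\sum_i R_i(q_i - 1)w_i\{\psi_n(O_i) - \eta_n(V_i)\} = (B_n - 1)\, n^{-1}\sum_i(1 - R_i w_i)\eta_n(V_i) + o_P(n^{-1/2})$, which is exactly your remainder. So your Deville--S\"arndal linearization supplies the step the paper's identity leaves implicit. Your observation about the case (B3) with $\eta_n \to \eta^* \neq \eta$ is also correct: there MGR behaves like AIPCW with augmentation $B^*\eta^*$, and equivalence with the IPCW-weighted estimator genuinely fails at the $n^{-1/2}$ scale.

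Two steps need repair. First, in Step 3 you fix $\eta = E\{\phi_{1,P}\mid V\}$ and then assert that $\psi$ and $\phi_{1,P}$ differ by a constant ``up to the parametric projection''; that parenthetical is exactly where the argument breaks. Write $\phi_{\mu_1,NP} = \phi_{1,P} + r$ with $r$ orthogonal to the scores $S_\beta$. Then the limit of $B_n$ is $1 + E\{\eta(V)r\}/E\{\eta(V)^2\}$. The cross term $E\{\eta(V)r\} = E[E\{\phi_{1,P}\mid V\}r]$ is generally nonzero, because orthogonality to $S_\beta$ is not preserved when $S_\beta$ is replaced by $E(S_\beta\mid V)$. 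What you need is for the raking variable to be the $V$-conditional mean of the centered pseudo-outcome actually used in $\mu_{1,n,MGR}$, namely $\eta(V) = E\{\psi(O)\mid V\} - \mu_1 = E\{\phi_{\mu_1,NP}\mid V\}$; this is the version used in the supplementary remainder lemma. If $g_n$ or $Q_n$ is misspecified, $\psi$ must be replaced by its probability limit. With that choice, $E\{\eta\psi\} = E\{\eta^2\} + \mu_1E\{\eta\} = E\{\eta^2\}$ and $B_n \to_P 1$.

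Second, your Step 1 claim that $\lambda_n = O_P(n^{-1/2})$ when only (B4) holds is wrong. If $\pi_{n,IPCW}\to\pi^* \neq \pi$, then $n^{-1}\sum_i(1 - R_iw_i)\eta_n(V_i) \to E[\{1 - \pi(V)/\pi^*(V)\}\eta(V)]$, which is generally nonzero. So $\lambda_n$ has a nonzero limit and the expansion of $q_i$ about $1$ is unavailable. Your final restriction to (B3) together with (B4) is therefore forced by your method. The paper's identity covers the other cases too, at the price of comparing to the calibrated-weight AIPCW estimator rather than the one defined in the main text.
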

\begin{proof}
    The AIPCW-AIPTW estimator is a solution to an estimating equation using the efficient influence function. Suppose that we have estimators $\pi_n$ of $\pi_0$, $g_n$ of $g_0$, $Q_n$ of $Q_0$, and $\eta_n$ of $\eta_0$. Recall that $\eta_0$ is a function of $\mu$ through the ideal-data influence function; write the uncentered version as $\eta_n^u(v) = \eta_n(v) + \mu$. Then we can write the estimating equation
    \begin{align*}
        0 =&\ \frac{1}{n}\sum_{i=1} \left(\frac{R_i}{\pi_n(V_i)}\left[\frac{I(X_i = 1)}{g_n(L_i)}\{Y_i - Q_n(1,L_i)\} + Q_n(1,L_i) - \mu\right] + \frac{\{\pi_n(V_i) - R_i\}}{\pi_n(V_i)}\{\eta_n^u(V_i) + \mu\}\right) \\
        =& \ \frac{1}{n}\sum_{i=1} \left(\frac{R_i}{\pi_n(V_i)}\left[\frac{I(X_i = 1)}{g_n(L_i)}\{Y_i - Q_n(1,L_i)\} + Q_n(1,L_i)\right] + \frac{\{\pi_n(V_i) - R_i\}}{\pi_n(V_i)}\eta_n^u(V_i)\right) \\
        &+ \frac{1}{n}\sum_{i=1}^n \left[-\frac{R_i}{\pi_n(V_i)}\mu + \left\{1 - \frac{R_i}{\pi_n(V_i)}\right\}\mu \right]
    \end{align*}
    This implies that
    \begin{align*}
        \mu =& \ \frac{1}{n}\sum_{i=1} \left(\frac{R_i}{\pi_n(V_i)}\left[\frac{I(X_i = 1)}{g_n(L_i)}\{Y_i - Q_n(1,L_i)\} + Q_n(1,L_i)\right] + \frac{\{\pi_n(V_i) - R_i\}}{\pi_n(V_i)}\eta_n^u(V_i)\right) \\
        =& \ \frac{1}{n}\sum_{i=1} \left(\frac{R_i}{\pi_n(V_i)}\left[\frac{I(X_i = 1)}{g_n(L_i)}\{Y_i - Q_n(1,L_i)\} + Q_n(1,L_i)\right] + \frac{\{\pi_n(V_i) - R_i\}}{\pi_n(V_i)}\eta_n(V_i)\right) + o_P(n^{-1/2})\ ,
    \end{align*}
    which is the AIPCW-AIPTW estimator. 

    Now suppose that our estimator of $\pi_0$ is $\pi_{n,GR}$. Recall that $\pi_{n,GR}^{-1}(V_i) = q_i\pi_{n,IPCW}^{-1}(V_i)$ for each $i$ and that $\sum_{i=1}^n\eta_n(V_i) = \sum_{i=1}^n \frac{R_iq_i}{\pi_{n,IPCW}(V_i)}\eta_n(V_i)$. Plugging in $\pi_{n,GR}$ to the estimating equation above yields
    \begin{align*}
        \mu =& \ \frac{1}{n}\sum_{i=1} \left(\frac{R_i}{\pi_{n,GR}(V_i)}\left[\frac{I(X_i = 1)}{g_n(L_i)}\{Y_i - Q_n(1,L_i)\} + Q_n(1,L_i)\right] + \frac{\{\pi_{n,GR}(V_i) - R_i\}}{\pi_{n,GR}(V_i)}\eta_n(V_i)\right) + o_P(n^{-1/2}) \\
        =& \ \frac{1}{n}\sum_{i=1} \left(\frac{R_i}{\pi_{n,GR}(V_i)}\left[\frac{I(X_i = 1)}{g_n(L_i)}\{Y_i - Q_n(1,L_i)\} + Q_n(1,L_i)\right] + \eta_n(V_i) - \frac{R_iw_i}{\pi_{n,IPCW}}\eta_n(V_i)\right) + o_P(n^{-1/2}) \\
        =& \ \frac{1}{n}\sum_{i=1} \frac{R_i}{\pi_{n,GR}(V_i)}\left[\frac{I(X_i = 1)}{g_n(L_i)}\{Y_i - Q_n(1,L_i)\} + Q_n(1,L_i)\right] + o_P(n^{-1/2}),
    \end{align*}
    which is the MGR estimator.
\end{proof}
Next, we state and prove a lemma that the AIPCW-AIPTW estimator has a particular second-order remainder term.
\begin{lemma}\label{lem:aipcw-aiptw_remainder}
    We can write 
    \begin{align*}
        \mu_{1,n,AIPCW-AIPTW} - \mu_1 = \frac{1}{n}\sum_{i=1}^n\phi_{1,obs}(Y_i,X_i,L_i,R_i,V_i) + R(P_n,P_0) + o_P(n^{-1/2}),
    \end{align*}
    where $P_n$ includes estimators of all relevant quantities  and \begin{align*}
        R(P,P_0) =& \ E_0\left(\frac{\pi_P(V) - \pi_0(V)}{\pi_P(V)}\left[E_0\{\overline{\phi}_P^F(Y,X,L)\mid V\} - E_P\{\overline{\phi}_P^F(Y,X,L)\mid V\}\right] \right)\\
        &+ E_0\left[\left\{\frac{g_P(L) - g_0(L)}{g_P(L)}\right\}\{Q_P(1,X) - Q_0(1,X)\}\right].
    \end{align*}
\end{lemma}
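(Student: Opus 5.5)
The plan is the standard von Mises (first-order) expansion for the AIPCW-AIPTW estimator, viewed as the plug-in of the uncentered observed-data EIF. Write $P_n$ for the empirical measure together with the nuisance estimators $(\pi_n, g_n, Q_n, \eta_n)$. Define the uncentered estimating function
\begin{align*}
D_P(o) = \frac{r}{\pi_P(v)}\left[\frac{I(x=1)}{g_P(l)}\{y - Q_P(1,l)\} + Q_P(1,l)\right] + \frac{\pi_P(v) - r}{\pi_P(v)}\eta_P(v),
\end{align*}
so that $\mu_{1,n,AIPCW-AIPTW} = \frac{1}{n}\sum_i D_{P_n}(O_i)$ and $\phi_{1,obs}(o) = D_{P_0}(o) - \mu_1$.

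The decomposition I would use is
\begin{align*}
\mu_{1,n,AIPCW-AIPTW} - \mu_1 = (\mathbb{P}_n - P_0)D_{P_0} + (\mathbb{P}_n - P_0)(D_{P_n} - D_{P_0}) + \{P_0 D_{P_n} - \mu_1\}.
\end{align*}
\begin{itemize}
\item The first term is exactly $\frac{1}{n}\sum_i \phi_{1,obs}(O_i)$.
\item The second term is an empirical process term. It is $o_P(n^{-1/2})$ under the usual conditions: $L_2(P_0)$-consistency of the nuisance estimators (implied by the consistency assumptions and positivity bounds) together with either a Donsker condition, which is natural since all working models are parametric, or sample splitting.
\item The third term will be identified with $R(P_n,P_0)$.
\end{itemize}

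The main algebraic work is computing $P_0 D_P - \mu_1$ for a generic $P$ by iterated expectations. I will handle the two layers in order.

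\emph{Coarsening layer.} Condition on $V$ and use MAR ($R \indep D \mid V$), so $E_0(R\mid V,D) = \pi_0(V)$. This gives
\begin{align*}
P_0 D_P = E_0\left[\frac{\pi_0(V)}{\pi_P(V)} E_0\{\psi_P \mid V\} + \frac{\pi_P(V)-\pi_0(V)}{\pi_P(V)}\eta_P(V)\right],
\end{align*}
where $\psi_P$ is the inner AIPTW bracket. Rewriting $\pi_0/\pi_P = 1 - (\pi_P-\pi_0)/\pi_P$ yields
\begin{align*}
P_0 D_P = E_0(\psi_P) - E_0\left[\frac{\pi_P-\pi_0}{\pi_P}\bigl\{E_0(\psi_P\mid V) - \eta_P(V)\bigr\}\right].
\end{align*}
Here $\eta_P(V) = E_P(\overline{\phi}_P^F \mid V)$ and $E_0(\psi_P\mid V)$ matches $E_0(\overline{\phi}_P^F\mid V)$ up to the centering constant, which cancels inside the difference. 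So this piece reproduces the first product term of $R$, up to the sign convention stated in the lemma, which I would check carefully.

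\emph{Treatment layer.} For $E_0(\psi_P) - \mu_1$, condition on $L$ and use $E_0\{I(X=1)(Y - Q_P(1,L))\mid L\} = g_0(L)\{Q_0(1,L) - Q_P(1,L)\}$. This gives the standard double-robust remainder
\begin{align*}
E_0(\psi_P) - \mu_1 = E_0\left[\frac{g_P - g_0}{g_P}\{Q_P(1,L) - Q_0(1,L)\}\right],
\end{align*}
which is the second term of $R$. (The lemma's $Q(1,X)$ should read $Q(1,L)$.) Adding the two pieces gives $P_0 D_{P_n} - \mu_1 = R(P_n,P_0)$ exactly.

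I expect the main obstacle to be the empirical process term rather than the algebra. The careful step is bounding $\|D_{P_n} - D_{P_0}\|_{L_2(P_0)} \to 0$. This needs positivity bounds on $\pi_n$ and $g_n$ (strict positivity of $\pi$ is in (B3); an analogous bound on $g_n$ must be assumed). It also needs care if $\eta_n$ is obtained by multiple imputation, since the nuisance estimators should then lie in a fixed parametric (hence Donsker) class with high probability.

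A secondary subtlety is that $\eta_n$ implicitly depends on $\mu$ through centering, as in the proof of Lemma~\ref{lem:aipw_mgr_equivalence}. This contributes a term $\frac{1}{n}\sum_i\{1 - R_i/\pi_n(V_i)\}(\mu_n - \mu_1)$. It is $o_P(n^{-1/2})$ because the sample mean of $1 - R/\pi_n$ is $O_P(n^{-1/2})$ when $\pi_n$ is consistent and $\mu_n - \mu_1 = O_P(n^{-1/2})$ under the consistency assumptions. I would absorb it into the $o_P(n^{-1/2})$ term.
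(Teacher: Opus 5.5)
Your proposal is correct and is essentially the paper's argument. The paper computes $R(P,P_0)=\mu(P)-\mu(P_0)+P_0\phi_P$, which is your $P_0D_P-\mu_1$, by rewriting the EIF through the uncentered $\overline{\phi}_P^F$, conditioning on $V$ under MAR to replace $R$ by $\pi_0(V)$, and citing the ideal-data AIPTW remainder rather than deriving it; it leaves implicit the empirical-process term that you control explicitly. Trust your own sign on the coarsening term: since $\frac{\pi_0}{\pi_P}(a-b)+b=\frac{\pi_0-\pi_P}{\pi_P}(a-b)+a$ with $a=E_0\{\overline{\phi}_P^F\mid V\}$ and $b=E_P\{\overline{\phi}_P^F\mid V\}$, the paper's final algebraic step, and hence the first term of $R$ in the statement, has the sign flipped, so you recover $R(P_n,P_0)$ only up to that sign (harmless, since it is a product of errors), not ``exactly.''
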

\begin{proof}
    Consider the AIPCW-AIPTW estimator, which is the solution to estimating equations based on the EIF. Consider two distributions $P, P_0 \in \mathcal{M}$. Write $\mu(P) = E_P\{E_P(Y \mid A = 1, X)\}$; then $\mu_1 = \mu(P_0)$. The map $\mu: \mathcal{M} \to \mathbb{R}$ can be linearized at $P_0$ if we can write
    \begin{align*}
        \mu(P) - \mu(P_0) = (P - P_0)D(P) + R(P,P_0)
    \end{align*}
    for a gradient $D(P)$ of $\mu$ at $P$ and second-order remainder term $R$. In our case, we consider the efficient influence function $\phi_{1,obs}$ as the gradient \citep{bickel1993efficient}. Recall that the EIF evaluated at $P$ is 
    \begin{align*}
        \phi_{P}(r,v,y,x,l) =& \ \frac{r}{\pi_P(v)}\left[\frac{I(a = 1)}{g(l)}\{y - Q(1,x)\} + Q(1,x) - \mu(P)\right] \\
        &+ \frac{\pi_P(v) - r}{\pi_P(v)}E_P\{\phi_P^F(Y,X,L)\mid V = v\}, 
    \end{align*}
    where $\phi_P^F$ is the ideal-data EIF (i.e., the EIF where there is no missing data). Define $\overline{\phi}_P^F = \phi_P^F + \mu(P)$ as the uncentered EIF. Then we can write
    \begin{align*}
        \phi_{P}(r,v,y,x,l) =& \ \frac{r}{\pi_P(v)}\overline{\phi}_P^F(y,x,l) - \frac{r}{\pi_P(v)}\mu(P) \\
        &- \frac{r}{\pi_P(v)}E_P\{\overline{\phi}_P^F(Y,X,L)\mid V = v\} + \frac{r}{\pi_P(v)}\mu(P) \\
        &+ E_P\{\overline{\phi}_P^F(Y,X,L)\mid V = v\}- \mu(P) \\
        =& \ \frac{r}{\pi_P(v)}\left[\overline{\phi}_P^F(y,x,l) - E_P\{\overline{\phi}_P^F(Y,X,L)\mid V = v\}\right] \\
        &+ E_P\{\overline{\phi}_P^F(Y,X,L)\mid V = v\}- \mu(P).
    \end{align*}
    The linearization above implies that $R(P,P_0) = \mu(P) - \mu(P_0) + P_0\phi_P$; plugging in our expression above for the third term yields
    \begin{align*}
        P_0\phi_P =& \ E_0\left(\frac{R}{\pi_P(V)}\left[\overline{\phi}_P^F(Y,X,L) - E_P\{\overline{\phi}_P^F(Y,X,L)\mid V\}\right] \right.\\
        &\left. \ \ \ \ \ \ \  + E_P\{\overline{\phi}_P^F(Y,X,L)\mid V\}- \mu(P)\right).
    \end{align*}
    Thus,
    \begin{align*}
        R(P,P_0) =& \ \mu(P) - \mu(P_0) + P_0\phi_P \\
        =& \ \mu(P) - \mu(P_0) \\
        &+ E_0\left(\frac{R}{\pi_P(V)}\left[\overline{\phi}_P^F(Y,X,L) - E_P\{\overline{\phi}_P^F(Y,X,L)\mid V\}\right] \right.\\
        &\left. \ \ \ \ \ \ \ \ \ + E_P\{\overline{\phi}_P^F(Y,X,L)\mid V\}- \mu(P)\right) \\
        =& \ E_0\left(\frac{R}{\pi_P(V)}\left[\overline{\phi}_P^F(Y,X,L) - E_P\{\overline{\phi}_P^F(Y,X,L)\mid V\}\right] \right.\\
        &\left. \ \ \ \ \ \ \  + E_P\{\overline{\phi}_P^F(Y,X,L)\mid V\}\right) - \mu(P_0) \\
        =& \ E_0\left(\frac{R}{\pi_P(V)}\left[E_0\{\overline{\phi}_P^F(Y,X,L)\mid V\} - E_P\{\overline{\phi}_P^F(Y,X,L)\mid V\}\right] \right.\\
        &\left. \ \ \ \ \ \ \  + E_P\{\overline{\phi}_P^F(Y,X,L)\mid V\}\right) - \mu(P_0) \\
        =& \ E_0\left(\frac{\pi_0(V)}{\pi_P(V)}\left[E_0\{\overline{\phi}_P^F(Y,X,L)\mid V\} - E_P\{\overline{\phi}_P^F(Y,X,L)\mid V\}\right] \right.\\
        &\left. \ \ \ \ \ \ \  + E_P\{\overline{\phi}_P^F(Y,X,L)\mid V\}\right) - \mu(P_0) \\
        =& \ E_0\left(\frac{\pi_P(V) - \pi_0(V)}{\pi_P(V)}\left[E_0\{\overline{\phi}_P^F(Y,X,L)\mid V\} - E_P\{\overline{\phi}_P^F(Y,X,L)\mid V\}\right] \right.\\
        &\left. \ \ \ \ \ \ \  + E_0\{\overline{\phi}_P^F(Y,X,L)\mid V\}\right) - \mu(P_0).
    \end{align*}
    Now consider the ideal-data AIPTW estimator. We can pursue a similar linearization using the full-data EIF:
    \begin{align*}
        \mu(P) - \mu(P_0) =& \ R_2(P,P_0) - P_0 \phi_P^F;
    \end{align*}
    this implies that
    \begin{align*}
        R_2(P,P_0) =& \ \mu(P) - \mu(P_0) + E_0\{\phi_P^F(Y,X,L)\} \\
        =& \ \mu(P) - \mu(P_0) + E_0\{\overline{\phi}_P^F(Y,X,L)\} - \mu(P) \\
        =& \ E_0\{\overline{\phi}_P^F(Y,X,L)\} - \mu(P_0) \\
        =& \ E_0[E_0\{\overline{\phi}_P^F(Y,X,L)\mid V\}] - \mu(P_0).
    \end{align*}
    Thus, we can write
    \begin{align*}
        R(P,P_0) =& \ E_0\left(\frac{\pi_P(V) - \pi_0(V)}{\pi_P(V)}\left[E_0\{\overline{\phi}_P^F(Y,X,L)\mid V\} - E_P\{\overline{\phi}_P^F(Y,X,L)\mid V\}\right] \right)\\
        &+ R_2(P,P_0).
    \end{align*}
    Further, $R_2$ has been studied in the literature \citep{vanderlaan2011targeted}:
    \begin{align*}
        R_2(P,P_0) =& \ E_0\left[\left\{\frac{g_P(1,X) - g_0(1,X)}{g_P(1,X)}\right\}\{Q_P(1,X) - Q_0(1,X)\}\right].
    \end{align*}
    This is precisely what we wished to show.
\end{proof}

We now prove the theorem.
\begin{proof}[Proof of Theorem 1]
    By asymptotic equivalence of the MGR estimator and the AIPCW-AIPTW estimator from Lemma~\ref{lem:aipw_mgr_equivalence} and by the form of the remainder term in Lemma~\ref{lem:aipcw-aiptw_remainder}, we can write
    \begin{align*}
        \mu_{1,n,MGR} - \mu_1 = \frac{1}{n}\sum_{i=1}^n \phi_{1,obs}(Y_i,X_i,L_i,R_i,V_i) + R_n + o_P(n^{-1/2}),
    \end{align*}
    where
    \begin{align*}
        R_n =& \ E_0\left(\frac{\pi_n(V) - \pi_0(V)}{\pi_n(V)}\left[E_0\{\overline{\phi}_n^F(Y,X,L)\mid V\} - E_n\{\overline{\phi}_n^F(Y,X,L)\mid V\}\right] \right)\\
        &+ E_0\left[\left\{\frac{g_n(1,X) - g_0(1,X)}{g_n(1,X)}\right\}\{Q_n(1,X) - Q_0(1,X)\}\right].
    \end{align*}
    The form of $R_n$ yields the combination of double-robustness properties with respect to $g$ and $Q$ and $\pi$ and $\eta$. The second term yields the double-robustness enjoyed by a complete-data AIPTW or TMLE estimator \citep{vanderlaan2011targeted}. The first term yields a double-robustness property with respect to $\pi$ and $\eta$. This implies that if both [(A2) or (B2)] and [(A1) or (B1)] hold, i.e., both (i) either $\pi_n \to_P \pi_0$ or $\eta_n \to_P \eta_0$ and (ii) either $g_n \to_P g_0$ or $Q_n \to_P Q_0$ hold, then the MGR estimator is consistent. Recall that because all estimators are parametric, the convergence rate (if consistent) is $n^{-1/2}$. 
    Thus, under [(A2) or (B2)] and [(A1) or (B1)], $R_n = o_P(n^{-1/2})$, so the MGR estimator is asymptotically linear with influence function $\phi_{1,obs}$, which is the efficient influence function. Because this is the efficient influence function, the MGR estimator is efficient if all four nuisance functions are estimated consistently.
\end{proof}

\section{Additional numerical results}

In Table~\ref{tab:sim_model_coefs}, we present the data-generating model coefficients for each scenario. The results for CGR (IPTW) in Scenario 2b are presented in Table~\ref{tab:sim_combined_performance_tab_s4_cgr_iptw}; results for Scenario 2a (continuous outcome) are in Table~\ref{tab:sim_combined_performance_tab_s3_ATE}. The results for correctly-specified models with only 20\% missing data (Scenario 1) are presented in Tables~\ref{tab:sim_combined_performance_tab_s1_ATE}--\ref{tab:sim_combined_performance_tab_s2_RR}. Tables~\ref{tab:sim_combined_performance_tab_s5_ATE}--\ref{tab:sim_combined_performance_tab_s10_RR} show results in scenarios with misspecification. In Scenario 3 (Tables~\ref{tab:sim_combined_performance_tab_s5_ATE}--\ref{tab:sim_combined_performance_tab_s6_RR}), the fitted propensity score model is misspecified. In Scenario 4 (Tables~\ref{tab:sim_combined_performance_tab_s7_ATE}--\ref{tab:sim_combined_performance_tab_s8_RR}), the fitted outcome regression model is misspecified. In Scenario 5 (Tables~\ref{tab:sim_combined_performance_tab_s9_ATE}--\ref{tab:sim_combined_performance_tab_s10_RR}), the fitted missing-data model is misspecified.

% true coefficients
\begin{table}
\centering
\caption{Regression coefficients specifying the data-generating models for all simulation scenarios. \label{tab:sim_model_coefs}}
\centering
\fontsize{11}{13}\selectfont
\begin{tabular}[t]{llrlrrrrrrrrl}
\toprule
\multicolumn{2}{c}{ } & \multicolumn{11}{c}{Coefficient} \\
\cmidrule(l{3pt}r{3pt}){3-13}
Scenario & Model & Intercept & X & Z1 & Z2 & W1 & W2 & U1 & U2 & C1 & C2 & Y\\
\midrule
 & Missing data & -1.335 & -0.1 & 0.05 & 0.0 & 0.0 & 0.0 & 0.0 & 0.0 & 0.00 & 0.00 & 0.05\\
\cmidrule{2-13}
 & Outcome regression & -0.250 & 1.5 & 1.00 & -1.0 & 1.0 & -1.0 & 0.0 & 0.0 & 0.00 & 0.00 & ---\\
\cmidrule{2-13}
\multirow{-3}{*}{\raggedright\arraybackslash 1a} & Propensity score & 0.100 & --- & 0.10 & 0.1 & 0.1 & 0.1 & 0.0 & 0.0 & 0.00 & 0.00 & ---\\
\cmidrule{1-13}
 & Missing data & -1.335 & -0.1 & 0.05 & 0.0 & 0.0 & 0.0 & 0.0 & 0.0 & 0.00 & 0.00 & 0.05\\
\cmidrule{2-13}
 & Outcome regression & -2.600 & 1.5 & 1.00 & -1.0 & 1.0 & -1.0 & 0.0 & 0.0 & 0.00 & 0.00 & ---\\
\cmidrule{2-13}
\multirow{-3}{*}{\raggedright\arraybackslash 1b} & Propensity score & 0.100 & --- & 0.10 & 0.1 & 0.1 & 0.1 & 0.0 & 0.0 & 0.00 & 0.00 & ---\\
\cmidrule{1-13}
 & Missing data & 0.300 & -0.1 & 0.05 & 0.0 & 0.0 & 0.0 & 0.0 & 0.0 & -0.10 & -0.10 & 0.025\\
\cmidrule{2-13}
 & Outcome regression & -3.000 & 1.5 & 1.00 & -1.0 & 1.0 & -1.0 & 0.0 & 0.0 & 0.75 & -0.75 & ---\\
\cmidrule{2-13}
\multirow{-3}{*}{\raggedright\arraybackslash 2a} & Propensity score & 0.100 & --- & 0.10 & 0.1 & 0.1 & 0.1 & 0.0 & 0.0 & -0.40 & -0.20 & ---\\
\cmidrule{1-13}
 & Missing data & 0.300 & -0.1 & 0.05 & 0.0 & 0.0 & 0.0 & 0.0 & 0.0 & -0.10 & -0.10 & 0.05\\
\cmidrule{2-13}
 & Outcome regression & -3.000 & 1.5 & 1.00 & -1.0 & 1.0 & -1.0 & 0.0 & 0.0 & 0.75 & -0.75 & ---\\
\cmidrule{2-13}
\multirow{-3}{*}{\raggedright\arraybackslash 2b} & Propensity score & 0.100 & --- & 0.10 & 0.1 & 0.1 & 0.1 & 0.0 & 0.0 & -0.40 & -0.20 & ---\\
\cmidrule{1-13}
 & Missing data & 0.300 & -0.1 & 0.05 & 0.0 & 0.0 & 0.0 & 0.0 & 0.0 & -0.10 & -0.10 & 0.025\\
\cmidrule{2-13}
 & Outcome regression & -8.000 & 1.5 & 1.00 & -1.0 & 1.0 & -1.0 & 1.5 & -1.0 & 0.75 & -0.75 & ---\\
\cmidrule{2-13}
\multirow{-3}{*}{\raggedright\arraybackslash 3a} & Propensity score & 0.100 & --- & 0.10 & 0.1 & 0.1 & 0.1 & -1.0 & -1.0 & -0.40 & -0.20 & ---\\
\cmidrule{1-13}
 & Missing data & 0.300 & -0.1 & 0.05 & 0.0 & 0.0 & 0.0 & 0.0 & 0.0 & -0.10 & -0.10 & 0.05\\
\cmidrule{2-13}
 & Outcome regression & -8.000 & 1.5 & 1.00 & -1.0 & 1.0 & -1.0 & 3.0 & -5.0 & 0.75 & -0.75 & ---\\
\cmidrule{2-13}
\multirow{-3}{*}{\raggedright\arraybackslash 3b} & Propensity score & 0.100 & --- & 0.10 & 0.1 & 0.1 & 0.1 & -1.0 & -1.0 & -0.40 & -0.20 & ---\\
\cmidrule{1-13}
 & Missing data & 0.300 & -0.1 & 0.05 & 0.0 & 0.0 & 0.0 & 0.0 & 0.0 & -0.10 & -0.10 & 0.025\\
\cmidrule{2-13}
 & Outcome regression & -8.000 & 1.5 & 1.00 & -1.0 & 1.0 & -1.0 & 1.5 & -1.0 & 0.75 & -0.75 & ---\\
\cmidrule{2-13}
\multirow{-3}{*}{\raggedright\arraybackslash 4a} & Propensity score & 0.100 & --- & 0.10 & 0.1 & 0.1 & 0.1 & -1.0 & -1.0 & -0.40 & -0.20 & ---\\
\cmidrule{1-13}
 & Missing data & 0.300 & -0.1 & 0.05 & 0.0 & 0.0 & 0.0 & 0.0 & 0.0 & -0.10 & -0.10 & 0.05\\
\cmidrule{2-13}
 & Outcome regression & -8.000 & 1.5 & 1.00 & -1.0 & 1.0 & -1.0 & 3.0 & -5.0 & 0.75 & -0.75 & ---\\
\cmidrule{2-13}
\multirow{-3}{*}{\raggedright\arraybackslash 4b} & Propensity score & 0.100 & --- & 0.10 & 0.1 & 0.1 & 0.1 & -1.0 & -1.0 & -0.40 & -0.20 & ---\\
\cmidrule{1-13}
 & Missing data & 0.300 & -0.1 & 0.05 & 0.0 & 0.0 & 0.0 & -0.5 & 0.5 & -0.10 & -0.10 & 0.025\\
\cmidrule{2-13}
 & Outcome regression & -3.000 & 1.5 & 1.00 & -1.0 & 1.0 & -1.0 & 0.0 & 0.0 & 0.75 & -0.75 & ---\\
\cmidrule{2-13}
\multirow{-3}{*}{\raggedright\arraybackslash 5a} & Propensity score & 0.100 & --- & 0.10 & 0.1 & 0.1 & 0.1 & -1.0 & -1.0 & -0.40 & -0.20 & ---\\
\cmidrule{1-13}
 & Missing data & 0.300 & -0.1 & 0.05 & 0.0 & 0.0 & 0.0 & -0.5 & 0.5 & -0.10 & -0.10 & 0.05\\
\cmidrule{2-13}
 & Outcome regression & -3.000 & 1.5 & 1.00 & -1.0 & 1.0 & -1.0 & 0.0 & 0.0 & 0.75 & -0.75 & ---\\
\cmidrule{2-13}
\multirow{-3}{*}{\raggedright\arraybackslash 5b} & Propensity score & 0.100 & --- & 0.10 & 0.1 & 0.1 & 0.1 & -1.0 & -1.0 & -0.40 & -0.20 & ---\\
\bottomrule
\end{tabular}
\end{table}

% correctly-specified scenarios
\begin{table}[!h]
\centering
\caption{Performance of CGR (IPTW) estimators of the ATE and RR in Scenario 2b, a binary outcome setting with approximately 50\% missing data and correctly specified nuisance models.\label{tab:sim_combined_performance_tab_s4_cgr_iptw}}
\centering
\fontsize{9}{11}\selectfont
\begin{threeparttable}
\begin{tabular}[t]{>{\raggedright\arraybackslash}p{3.5em}lrrrrrrrrr}
\toprule
Estimand & n & Med. bias & ASE & ESE & MAD & rRMSE & RMSE & Rel. Eff. & Cover & ESE Cover\\
\midrule
ATE & 1000 & -0.0014 & 0.021 & 0.021 & 0.022 & 0.022 & 0.021 & 0.860 & 0.949 & 0.948\\
\cmidrule{1-11}
ATE & 2000 & 0.0003 & 0.015 & 0.015 & 0.015 & 0.015 & 0.015 & 0.909 & 0.944 & 0.947\\
\cmidrule{1-11}
ATE & 4000 & -0.0003 & 0.010 & 0.010 & 0.010 & 0.010 & 0.010 & 0.927 & 0.945 & 0.949\\
\cmidrule{1-11}
ATE & 8000 & -0.0004 & 0.007 & 0.007 & 0.007 & 0.007 & 0.007 & 0.912 & 0.950 & 0.950\\
\cmidrule{1-11}
RR & 1000 & -0.0098 & 0.213 & 0.214 & 0.214 & 0.214 & 0.220 & 0.921 & 0.950 & 0.952\\
\cmidrule{1-11}
RR & 2000 & 0.0162 & 0.149 & 0.150 & 0.147 & 0.148 & 0.157 & 0.954 & 0.945 & 0.947\\
\cmidrule{1-11}
RR & 4000 & -0.0094 & 0.105 & 0.106 & 0.104 & 0.105 & 0.106 & 0.926 & 0.942 & 0.945\\
\cmidrule{1-11}
RR & 8000 & -0.0092 & 0.074 & 0.074 & 0.072 & 0.073 & 0.074 & 0.932 & 0.946 & 0.948\\
\bottomrule
\end{tabular}
\begin{tablenotes}
\item Abbreviations: Med: Median; ASE: median asymptotic standard error; ESE: empirical standard error; MAD: mean absolute deviation; rRMSE: robust root mean squared error (RMSE) (square root of med bias squared + MAD squared); RMSE: RMSE (mean bias squared + ESE squared); Rel. Eff.: relative efficiency (rRMSE relative to that of MGR); Cover: coverage based on the ASE; ESE Cover: coverage based on the ESE; CGR: conditional generalized raking (GR); IPTW: inverse probability of treatment weighting. Results are based on 2500 Monte-Carlo replications. 
\end{tablenotes}
\end{threeparttable}
\end{table}

\begin{table}[!h]
\centering
\caption{Performance of estimators of the ATE in Scenario 2a, a continuous outcome setting with approximately 50\% missing data and correctly specified nuisance models.\label{tab:sim_combined_performance_tab_s3_ATE}}
\centering
\fontsize{9}{11}\selectfont
\begin{threeparttable}
\begin{tabular}[t]{>{\raggedright\arraybackslash}p{3.5em}lrrrrrrrrr}
\toprule
Est. & n & Med. bias & ASE & ESE & MAD & rRMSE & RMSE & Rel. Eff. & Cover & ESE Cover\\
\midrule
AIPCW-AIPTW &  & -0.0028 & 0.212 & 0.203 & 0.201 & 0.201 & 0.203 & 0.931 & 0.962 & 0.951\\
\cmidrule{1-1}
\cmidrule{3-11}
CGR (IPTW) &  & -0.0153 & 0.342 & 0.353 & 0.340 & 0.340 & 0.353 & 0.551 & 0.960 & 0.946\\
\cmidrule{1-1}
\cmidrule{3-11}
CGR (RA) &  & 0.0085 & 0.171 & 0.180 & 0.180 & 0.180 & 0.180 & 1.038 & 0.940 & 0.948\\
\cmidrule{1-1}
\cmidrule{3-11}
MGR & \multirow{-4}{*}{\raggedright\arraybackslash 250} & 0.0052 & 0.213 & 0.187 & 0.187 & 0.187 & 0.187 & 1.000 & 0.972 & 0.949\\
\cmidrule{1-11}
AIPCW-AIPTW &  & 0.0006 & 0.136 & 0.132 & 0.136 & 0.136 & 0.132 & 0.948 & 0.960 & 0.952\\
\cmidrule{1-1}
\cmidrule{3-11}
CGR (IPTW) &  & -0.0010 & 0.237 & 0.234 & 0.225 & 0.225 & 0.234 & 0.571 & 0.965 & 0.951\\
\cmidrule{1-1}
\cmidrule{3-11}
CGR (RA) &  & 0.0023 & 0.121 & 0.123 & 0.125 & 0.125 & 0.123 & 1.027 & 0.948 & 0.952\\
\cmidrule{1-1}
\cmidrule{3-11}
MGR & \multirow{-4}{*}{\raggedright\arraybackslash 500} & 0.0006 & 0.136 & 0.127 & 0.129 & 0.129 & 0.127 & 1.000 & 0.968 & 0.949\\
\cmidrule{1-11}
AIPCW-AIPTW &  & -0.0022 & 0.091 & 0.089 & 0.090 & 0.090 & 0.089 & 0.955 & 0.958 & 0.953\\
\cmidrule{1-1}
\cmidrule{3-11}
CGR (IPTW) &  & -0.0061 & 0.166 & 0.161 & 0.153 & 0.153 & 0.161 & 0.558 & 0.960 & 0.950\\
\cmidrule{1-1}
\cmidrule{3-11}
CGR (RA) &  & -0.0030 & 0.086 & 0.086 & 0.084 & 0.084 & 0.086 & 1.023 & 0.953 & 0.950\\
\cmidrule{1-1}
\cmidrule{3-11}
MGR & \multirow{-4}{*}{\raggedright\arraybackslash 1000} & -0.0011 & 0.092 & 0.087 & 0.085 & 0.085 & 0.087 & 1.000 & 0.964 & 0.951\\
\cmidrule{1-11}
AIPCW-AIPTW &  & -0.0017 & 0.063 & 0.064 & 0.064 & 0.064 & 0.064 & 0.963 & 0.949 & 0.954\\
\cmidrule{1-1}
\cmidrule{3-11}
CGR (IPTW) &  & 0.0031 & 0.117 & 0.116 & 0.116 & 0.116 & 0.116 & 0.531 & 0.953 & 0.951\\
\cmidrule{1-1}
\cmidrule{3-11}
CGR (RA) &  & -0.0009 & 0.061 & 0.062 & 0.062 & 0.062 & 0.062 & 1.001 & 0.942 & 0.949\\
\cmidrule{1-1}
\cmidrule{3-11}
MGR & \multirow{-4}{*}{\raggedright\arraybackslash 2000} & -0.0009 & 0.063 & 0.063 & 0.062 & 0.062 & 0.063 & 1.000 & 0.952 & 0.951\\
\bottomrule
\end{tabular}
\begin{tablenotes}
\item Abbreviations: ATE: average treatment effect; RR: relative risk; Est: Estimator; Med: Median; ASE: median asymptotic standard error; ESE: empirical standard error; MAD: mean absolute deviation; rRMSE: robust root mean squared error (RMSE) (square root of med bias squared + MAD squared); RMSE: RMSE (mean bias squared + ESE squared); Rel. Eff.: relative efficiency (rRMSE relative to that of MGR); Cover: coverage based on the ASE; ESE Cover: coverage based on the ESE; CGR: conditional generalized raking (GR); IPTW: inverse probability of treatment weighting; RA: regression adjustment; MGR: marginal GR; AIPCW: augmented inverse probability of coarsening weighting; AIPTW: augmented IPTW. Results are based on 2500 Monte-Carlo replications. 
\end{tablenotes}
\end{threeparttable}
\end{table}

\begin{table}[!h]
\centering
\caption{Performance of estimators of the ATE in Scenario 1a, a continuous outcome setting with approximately 20\% missing data and correctly specified nuisance models.\label{tab:sim_combined_performance_tab_s1_ATE}}
\centering
\fontsize{9}{11}\selectfont
\begin{threeparttable}
\begin{tabular}[t]{>{\raggedright\arraybackslash}p{3.5em}lrrrrrrrrr}
\toprule
Est. & n & Med. bias & ASE & ESE & MAD & rRMSE & RMSE & Rel. Eff. & Cover & ESE Cover\\
\midrule
AIPCW-AIPTW &  & -0.0058 & 0.149 & 0.143 & 0.148 & 0.149 & 0.143 & 0.945 & 0.960 & 0.948\\
\cmidrule{1-1}
\cmidrule{3-11}
CGR (IPTW) &  & -0.0021 & 0.290 & 0.186 & 0.190 & 0.190 & 0.186 & 0.738 & 0.998 & 0.954\\
\cmidrule{1-1}
\cmidrule{3-11}
CGR (RA) &  & -0.0048 & 0.138 & 0.141 & 0.144 & 0.144 & 0.141 & 0.972 & 0.946 & 0.950\\
\cmidrule{1-1}
\cmidrule{3-11}
MGR & \multirow{-4}{*}{\raggedright\arraybackslash 250} & -0.0030 & 0.150 & 0.140 & 0.140 & 0.140 & 0.140 & 1.000 & 0.963 & 0.951\\
\cmidrule{1-11}
AIPCW-AIPTW &  & 0.0013 & 0.102 & 0.101 & 0.100 & 0.100 & 0.101 & 0.984 & 0.950 & 0.948\\
\cmidrule{1-1}
\cmidrule{3-11}
CGR (IPTW) &  & 0.0022 & 0.203 & 0.132 & 0.126 & 0.126 & 0.132 & 0.783 & 0.996 & 0.946\\
\cmidrule{1-1}
\cmidrule{3-11}
CGR (RA) &  & 0.0033 & 0.098 & 0.100 & 0.100 & 0.100 & 0.100 & 0.989 & 0.944 & 0.948\\
\cmidrule{1-1}
\cmidrule{3-11}
MGR & \multirow{-4}{*}{\raggedright\arraybackslash 500} & 0.0023 & 0.102 & 0.100 & 0.099 & 0.099 & 0.100 & 1.000 & 0.954 & 0.949\\
\cmidrule{1-11}
AIPCW-AIPTW &  & -0.0011 & 0.071 & 0.069 & 0.070 & 0.070 & 0.069 & 1.000 & 0.953 & 0.948\\
\cmidrule{1-1}
\cmidrule{3-11}
CGR (IPTW) &  & -0.0028 & 0.143 & 0.090 & 0.090 & 0.090 & 0.090 & 0.776 & 0.997 & 0.947\\
\cmidrule{1-1}
\cmidrule{3-11}
CGR (RA) &  & -0.0007 & 0.069 & 0.069 & 0.070 & 0.070 & 0.069 & 1.000 & 0.948 & 0.947\\
\cmidrule{1-1}
\cmidrule{3-11}
MGR & \multirow{-4}{*}{\raggedright\arraybackslash 1000} & -0.0011 & 0.071 & 0.069 & 0.070 & 0.070 & 0.069 & 1.000 & 0.952 & 0.949\\
\cmidrule{1-11}
AIPCW-AIPTW &  & 0.0008 & 0.050 & 0.048 & 0.047 & 0.047 & 0.048 & 1.015 & 0.958 & 0.950\\
\cmidrule{1-1}
\cmidrule{3-11}
CGR (IPTW) &  & -0.0015 & 0.101 & 0.064 & 0.065 & 0.065 & 0.064 & 0.743 & 0.998 & 0.948\\
\cmidrule{1-1}
\cmidrule{3-11}
CGR (RA) &  & 0.0003 & 0.049 & 0.048 & 0.048 & 0.048 & 0.048 & 1.009 & 0.955 & 0.950\\
\cmidrule{1-1}
\cmidrule{3-11}
MGR & \multirow{-4}{*}{\raggedright\arraybackslash 2000} & 0.0012 & 0.050 & 0.048 & 0.048 & 0.048 & 0.048 & 1.000 & 0.958 & 0.950\\
\cmidrule{1-11}
AIPCW-AIPTW &  & -0.0011 & 0.031 & 0.031 & 0.031 & 0.031 & 0.031 & 0.978 & 0.951 & 0.949\\
\cmidrule{1-1}
\cmidrule{3-11}
CGR (IPTW) &  & -0.0003 & 0.064 & 0.041 & 0.041 & 0.041 & 0.041 & 0.760 & 0.999 & 0.948\\
\cmidrule{1-1}
\cmidrule{3-11}
CGR (RA) &  & -0.0008 & 0.031 & 0.031 & 0.030 & 0.030 & 0.031 & 1.013 & 0.949 & 0.949\\
\cmidrule{1-1}
\cmidrule{3-11}
MGR & \multirow{-4}{*}{\raggedright\arraybackslash 5000} & -0.0003 & 0.031 & 0.031 & 0.031 & 0.031 & 0.031 & 1.000 & 0.950 & 0.946\\
\bottomrule
\end{tabular}
\begin{tablenotes}
\item Abbreviations: ATE: average treatment effect; Est: Estimator; Med: Median; ASE: median asymptotic standard error; ESE: empirical standard error; MAD: mean absolute deviation; rRMSE: robust root mean squared error (RMSE) (square root of med bias squared + MAD squared); RMSE: RMSE (mean bias squared + ESE squared); Rel. Eff.: relative efficiency (rRMSE relative to that of MGR); Cover: coverage based on the ASE; ESE Cover: coverage based on the ESE; CGR: conditional generalized raking (GR); IPTW: inverse probability of treatment weighting; RA: regression adjustment; MGR: marginal GR; AIPCW: augmented inverse probability of coarsening weighting; AIPTW: augmented IPTW. Results are based on 2500 Monte-Carlo replications. 
\end{tablenotes}
\end{threeparttable}
\end{table}

\begin{table}[!h]
\centering
\caption{Performance of estimators of the ATE in Scenario 1b, a binary outcome setting with approximately 20\% missing data and correctly specified nuisance models.\label{tab:sim_combined_performance_tab_s2_ATE}}
\centering
\fontsize{9}{11}\selectfont
\begin{threeparttable}
\begin{tabular}[t]{>{\raggedright\arraybackslash}p{3.5em}lrrrrrrrrr}
\toprule
Est. & n & Med. bias & ASE & ESE & MAD & rRMSE & RMSE & Rel. Eff. & Cover & ESE Cover\\
\midrule
AIPCW-AIPTW &  & $<$ 0.0001 & 0.023 & 0.023 & 0.023 & 0.023 & 0.023 & 1.003 & 0.950 & 0.950\\
\cmidrule{1-1}
\cmidrule{3-11}
CGR (IPTW) &  & -2 $\times 10^{-04}$ & 0.027 & 0.024 & 0.025 & 0.025 & 0.024 & 0.926 & 0.974 & 0.952\\
\cmidrule{1-1}
\cmidrule{3-11}
CGR (RA) &  & -4 $\times 10^{-04}$ & 0.022 & 0.023 & 0.023 & 0.023 & 0.023 & 0.996 & 0.940 & 0.947\\
\cmidrule{1-1}
\cmidrule{3-11}
MGR & \multirow{-4}{*}{\raggedright\arraybackslash 1000} & -3 $\times 10^{-04}$ & 0.023 & 0.023 & 0.023 & 0.023 & 0.023 & 1.000 & 0.947 & 0.948\\
\cmidrule{1-11}
AIPCW-AIPTW &  & 4 $\times 10^{-04}$ & 0.016 & 0.016 & 0.016 & 0.016 & 0.016 & 1.005 & 0.954 & 0.954\\
\cmidrule{1-1}
\cmidrule{3-11}
CGR (IPTW) &  & -4 $\times 10^{-04}$ & 0.019 & 0.017 & 0.017 & 0.017 & 0.017 & 0.952 & 0.978 & 0.953\\
\cmidrule{1-1}
\cmidrule{3-11}
CGR (RA) &  & 2 $\times 10^{-04}$ & 0.016 & 0.016 & 0.016 & 0.016 & 0.016 & 0.994 & 0.950 & 0.950\\
\cmidrule{1-1}
\cmidrule{3-11}
MGR & \multirow{-4}{*}{\raggedright\arraybackslash 2000} & 2 $\times 10^{-04}$ & 0.016 & 0.016 & 0.016 & 0.016 & 0.016 & 1.000 & 0.954 & 0.954\\
\cmidrule{1-11}
AIPCW-AIPTW &  & -3 $\times 10^{-04}$ & 0.011 & 0.011 & 0.011 & 0.011 & 0.011 & 0.985 & 0.954 & 0.947\\
\cmidrule{1-1}
\cmidrule{3-11}
CGR (IPTW) &  & -6 $\times 10^{-04}$ & 0.013 & 0.011 & 0.011 & 0.011 & 0.011 & 0.924 & 0.980 & 0.949\\
\cmidrule{1-1}
\cmidrule{3-11}
CGR (RA) &  & -7 $\times 10^{-04}$ & 0.011 & 0.011 & 0.010 & 0.010 & 0.011 & 0.989 & 0.955 & 0.950\\
\cmidrule{1-1}
\cmidrule{3-11}
MGR & \multirow{-4}{*}{\raggedright\arraybackslash 4000} & -7 $\times 10^{-04}$ & 0.011 & 0.011 & 0.010 & 0.010 & 0.011 & 1.000 & 0.954 & 0.946\\
\cmidrule{1-11}
AIPCW-AIPTW &  & 2 $\times 10^{-04}$ & 0.008 & 0.008 & 0.008 & 0.008 & 0.008 & 0.973 & 0.941 & 0.954\\
\cmidrule{1-1}
\cmidrule{3-11}
CGR (IPTW) &  & $<$ 0.0001 & 0.010 & 0.009 & 0.009 & 0.009 & 0.009 & 0.922 & 0.968 & 0.946\\
\cmidrule{1-1}
\cmidrule{3-11}
CGR (RA) &  & -1 $\times 10^{-04}$ & 0.008 & 0.008 & 0.008 & 0.008 & 0.008 & 0.987 & 0.943 & 0.952\\
\cmidrule{1-1}
\cmidrule{3-11}
MGR & \multirow{-4}{*}{\raggedright\arraybackslash 8000} & -1 $\times 10^{-04}$ & 0.008 & 0.008 & 0.008 & 0.008 & 0.008 & 1.000 & 0.938 & 0.950\\
\bottomrule
\end{tabular}
\begin{tablenotes}
\item Abbreviations: ATE: average treatment effect; RR: relative risk; Est: Estimator; Med: Median; ASE: median asymptotic standard error; ESE: empirical standard error; MAD: mean absolute deviation; rRMSE: robust root mean squared error (RMSE) (square root of med bias squared + MAD squared); RMSE: RMSE (mean bias squared + ESE squared); Rel. Eff.: relative efficiency (rRMSE relative to that of MGR); Cover: coverage based on the ASE; ESE Cover: coverage based on the ESE; CGR: conditional generalized raking (GR); IPTW: inverse probability of treatment weighting; RA: regression adjustment; MGR: marginal GR; AIPCW: augmented inverse probability of coarsening weighting; AIPTW: augmented IPTW. Results are based on 2500 Monte-Carlo replications. 
\end{tablenotes}
\end{threeparttable}
\end{table}

\begin{table}[!h]
\centering
\caption{Performance of estimators of the RR in Scenario 1b, a binary outcome setting with approximately 20\% missing data and correctly specified nuisance models.\label{tab:sim_combined_performance_tab_s2_RR}}
\centering
\fontsize{9}{11}\selectfont
\begin{threeparttable}
\begin{tabular}[t]{>{\raggedright\arraybackslash}p{3.5em}lrrrrrrrrr}
\toprule
Est. & n & Med. bias & ASE & ESE & MAD & rRMSE & RMSE & Rel. Eff. & Cover & ESE Cover\\
\midrule
AIPCW-AIPTW &  & -0.0035 & 0.104 & 0.104 & 0.102 & 0.102 & 0.105 & 1.001 & 0.948 & 0.951\\
\cmidrule{1-1}
\cmidrule{3-11}
CGR (IPTW) &  & -0.0014 & 0.122 & 0.110 & 0.112 & 0.112 & 0.112 & 0.911 & 0.974 & 0.952\\
\cmidrule{1-1}
\cmidrule{3-11}
CGR (RA) &  & 0.0016 & 0.102 & 0.104 & 0.102 & 0.102 & 0.106 & 1.000 & 0.948 & 0.948\\
\cmidrule{1-1}
\cmidrule{3-11}
MGR & \multirow{-4}{*}{\raggedright\arraybackslash 1000} & 0.0030 & 0.105 & 0.104 & 0.102 & 0.102 & 0.106 & 1.000 & 0.952 & 0.944\\
\cmidrule{1-11}
AIPCW-AIPTW &  & -0.0018 & 0.073 & 0.073 & 0.070 & 0.070 & 0.073 & 1.012 & 0.948 & 0.949\\
\cmidrule{1-1}
\cmidrule{3-11}
CGR (IPTW) &  & 0.0002 & 0.086 & 0.077 & 0.075 & 0.075 & 0.078 & 0.954 & 0.973 & 0.950\\
\cmidrule{1-1}
\cmidrule{3-11}
CGR (RA) &  & 0.0034 & 0.072 & 0.073 & 0.070 & 0.071 & 0.073 & 1.009 & 0.948 & 0.950\\
\cmidrule{1-1}
\cmidrule{3-11}
MGR & \multirow{-4}{*}{\raggedright\arraybackslash 2000} & 0.0039 & 0.073 & 0.073 & 0.071 & 0.071 & 0.074 & 1.000 & 0.951 & 0.950\\
\cmidrule{1-11}
AIPCW-AIPTW &  & -0.0050 & 0.051 & 0.050 & 0.048 & 0.048 & 0.050 & 0.987 & 0.955 & 0.948\\
\cmidrule{1-1}
\cmidrule{3-11}
CGR (IPTW) &  & 0.0015 & 0.061 & 0.053 & 0.053 & 0.053 & 0.053 & 0.896 & 0.972 & 0.945\\
\cmidrule{1-1}
\cmidrule{3-11}
CGR (RA) &  & -0.0009 & 0.051 & 0.050 & 0.047 & 0.047 & 0.050 & 0.998 & 0.953 & 0.948\\
\cmidrule{1-1}
\cmidrule{3-11}
MGR & \multirow{-4}{*}{\raggedright\arraybackslash 4000} & -0.0015 & 0.051 & 0.050 & 0.047 & 0.047 & 0.050 & 1.000 & 0.956 & 0.947\\
\cmidrule{1-11}
AIPCW-AIPTW &  & -0.0009 & 0.036 & 0.037 & 0.037 & 0.037 & 0.037 & 0.983 & 0.946 & 0.954\\
\cmidrule{1-1}
\cmidrule{3-11}
CGR (IPTW) &  & 0.0028 & 0.043 & 0.040 & 0.039 & 0.039 & 0.040 & 0.943 & 0.964 & 0.947\\
\cmidrule{1-1}
\cmidrule{3-11}
CGR (RA) &  & 0.0041 & 0.036 & 0.037 & 0.037 & 0.037 & 0.038 & 0.990 & 0.944 & 0.954\\
\cmidrule{1-1}
\cmidrule{3-11}
MGR & \multirow{-4}{*}{\raggedright\arraybackslash 8000} & 0.0040 & 0.036 & 0.037 & 0.036 & 0.036 & 0.038 & 1.000 & 0.946 & 0.951\\
\bottomrule
\end{tabular}
\begin{tablenotes}
\item Abbreviations: ATE: average treatment effect; RR: relative risk; Est: Estimator; Med: Median; ASE: median asymptotic standard error; ESE: empirical standard error; MAD: mean absolute deviation; rRMSE: robust root mean squared error (RMSE) (square root of med bias squared + MAD squared); RMSE: RMSE (mean bias squared + ESE squared); Rel. Eff.: relative efficiency (rRMSE relative to that of MGR); Cover: coverage based on the ASE; ESE Cover: coverage based on the ESE; CGR: conditional generalized raking (GR); IPTW: inverse probability of treatment weighting; RA: regression adjustment; MGR: marginal GR; AIPCW: augmented inverse probability of coarsening weighting; AIPTW: augmented IPTW. Results are based on 2500 Monte-Carlo replications. 
\end{tablenotes}
\end{threeparttable}
\end{table}

% misspecified scenarios
\begin{table}[!h]
\centering
\caption{Performance of estimators of the ATE in Scenario 3a, a continuous outcome setting with approximately 50\% missing data and a misspecified treatment propensity model.\label{tab:sim_combined_performance_tab_s5_ATE}}
\centering
\fontsize{9}{11}\selectfont
\begin{threeparttable}
\begin{tabular}[t]{>{\raggedright\arraybackslash}p{3.5em}lrrrrrrrrr}
\toprule
Est. & n & Med. bias & ASE & ESE & MAD & rRMSE & RMSE & Rel. Eff. & Cover & ESE Cover\\
\midrule
AIPCW-AIPTW &  & -0.0008 & 0.227 & 0.235 & 0.234 & 0.234 & 0.235 & 0.867 & 0.947 & 0.951\\
\cmidrule{1-1}
\cmidrule{3-11}
CGR (IPTW) &  & -0.3703 & 0.410 & 0.401 & 0.396 & 0.542 & 0.549 & 0.374 & 0.857 & 0.852\\
\cmidrule{1-1}
\cmidrule{3-11}
CGR (RA) &  & -0.0129 & 0.189 & 0.198 & 0.198 & 0.198 & 0.199 & 1.024 & 0.937 & 0.947\\
\cmidrule{1-1}
\cmidrule{3-11}
MGR & \multirow{-4}{*}{\raggedright\arraybackslash 250} & -0.0077 & 0.228 & 0.206 & 0.203 & 0.203 & 0.206 & 1.000 & 0.970 & 0.953\\
\cmidrule{1-11}
AIPCW-AIPTW &  & -0.0007 & 0.139 & 0.155 & 0.152 & 0.152 & 0.155 & 0.932 & 0.924 & 0.952\\
\cmidrule{1-1}
\cmidrule{3-11}
CGR (IPTW) &  & -0.3657 & 0.286 & 0.278 & 0.278 & 0.460 & 0.460 & 0.309 & 0.750 & 0.735\\
\cmidrule{1-1}
\cmidrule{3-11}
CGR (RA) &  & -0.0030 & 0.135 & 0.141 & 0.139 & 0.139 & 0.141 & 1.024 & 0.940 & 0.946\\
\cmidrule{1-1}
\cmidrule{3-11}
MGR & \multirow{-4}{*}{\raggedright\arraybackslash 500} & 0.0009 & 0.139 & 0.144 & 0.142 & 0.142 & 0.144 & 1.000 & 0.942 & 0.950\\
\cmidrule{1-11}
AIPCW-AIPTW &  & -0.0025 & 0.091 & 0.103 & 0.103 & 0.103 & 0.103 & 0.953 & 0.915 & 0.949\\
\cmidrule{1-1}
\cmidrule{3-11}
CGR (IPTW) &  & -0.3673 & 0.200 & 0.195 & 0.193 & 0.415 & 0.418 & 0.237 & 0.546 & 0.526\\
\cmidrule{1-1}
\cmidrule{3-11}
CGR (RA) &  & -0.0013 & 0.095 & 0.097 & 0.096 & 0.096 & 0.097 & 1.025 & 0.946 & 0.949\\
\cmidrule{1-1}
\cmidrule{3-11}
MGR & \multirow{-4}{*}{\raggedright\arraybackslash 1000} & -0.0028 & 0.091 & 0.100 & 0.098 & 0.098 & 0.100 & 1.000 & 0.924 & 0.948\\
\cmidrule{1-11}
AIPCW-AIPTW &  & -0.0004 & 0.062 & 0.070 & 0.071 & 0.071 & 0.071 & 0.960 & 0.909 & 0.950\\
\cmidrule{1-1}
\cmidrule{3-11}
CGR (IPTW) &  & -0.3658 & 0.141 & 0.136 & 0.135 & 0.390 & 0.392 & 0.176 & 0.245 & 0.218\\
\cmidrule{1-1}
\cmidrule{3-11}
CGR (RA) &  & 0.0001 & 0.067 & 0.067 & 0.067 & 0.067 & 0.067 & 1.025 & 0.950 & 0.949\\
\cmidrule{1-1}
\cmidrule{3-11}
MGR & \multirow{-4}{*}{\raggedright\arraybackslash 2000} & -0.0009 & 0.062 & 0.069 & 0.068 & 0.068 & 0.069 & 1.000 & 0.921 & 0.950\\
\bottomrule
\end{tabular}
\begin{tablenotes}
\item Abbreviations: ATE: average treatment effect; RR: relative risk; RR: relative risk; Est: Estimator; Med: Median; ASE: median asymptotic standard error; ESE: empirical standard error; MAD: mean absolute deviation; rRMSE: robust root mean squared error (RMSE) (square root of med bias squared + MAD squared); RMSE: RMSE (mean bias squared + ESE squared); Rel. Eff.: relative efficiency (rRMSE relative to that of MGR); Cover: coverage based on the ASE; ESE Cover: coverage based on the ESE; CGR: conditional generalized raking (GR); IPTW: inverse probability of treatment weighting; RA: regression adjustment; MGR: marginal GR; AIPCW: augmented inverse probability of coarsening weighting; AIPTW: augmented IPTW. Results are based on 2500 Monte-Carlo replications. 
\end{tablenotes}
\end{threeparttable}
\end{table}

\begin{table}[!h]
\centering
\caption{Performance of estimators of the RR in Scenario 3b, a binary outcome setting with approximately 50\% missing data and a misspecified treatment propensity model.\label{tab:sim_combined_performance_tab_s6_RR}}
\centering
\fontsize{9}{11}\selectfont
\begin{threeparttable}
\begin{tabular}[t]{>{\raggedright\arraybackslash}p{3.5em}lrrrrrrrrr}
\toprule
Est. & n & Med. bias & ASE & ESE & MAD & rRMSE & RMSE & Rel. Eff. & Cover & ESE Cover\\
\midrule
AIPCW-AIPTW &  & -0.0031 & 0.187 & 0.192 & 0.184 & 0.184 & 0.194 & 0.993 & 0.939 & 0.946\\
\cmidrule{1-1}
\cmidrule{3-11}
CGR (RA) &  & -0.0017 & 0.154 & 0.172 & 0.171 & 0.171 & 0.175 & 1.069 & 0.914 & 0.950\\
\cmidrule{1-1}
\cmidrule{3-11}
MGR & \multirow{-3}{*}{\raggedright\arraybackslash 1000} & -0.0038 & 0.189 & 0.188 & 0.183 & 0.183 & 0.190 & 1.000 & 0.951 & 0.946\\
\cmidrule{1-11}
AIPCW-AIPTW &  & -0.0043 & 0.111 & 0.124 & 0.122 & 0.122 & 0.125 & 0.974 & 0.929 & 0.949\\
\cmidrule{1-1}
\cmidrule{3-11}
CGR (RA) &  & -0.0063 & 0.108 & 0.113 & 0.109 & 0.109 & 0.114 & 1.085 & 0.938 & 0.947\\
\cmidrule{1-1}
\cmidrule{3-11}
MGR & \multirow{-3}{*}{\raggedright\arraybackslash 2000} & -0.0046 & 0.112 & 0.122 & 0.119 & 0.119 & 0.123 & 1.000 & 0.932 & 0.946\\
\cmidrule{1-11}
AIPCW-AIPTW &  & 0.0025 & 0.072 & 0.085 & 0.086 & 0.086 & 0.086 & 0.993 & 0.903 & 0.946\\
\cmidrule{1-1}
\cmidrule{3-11}
CGR (RA) &  & 0.0013 & 0.076 & 0.079 & 0.080 & 0.080 & 0.079 & 1.062 & 0.941 & 0.950\\
\cmidrule{1-1}
\cmidrule{3-11}
MGR & \multirow{-3}{*}{\raggedright\arraybackslash 4000} & 0.0020 & 0.072 & 0.085 & 0.085 & 0.085 & 0.085 & 1.000 & 0.906 & 0.949\\
\cmidrule{1-11}
AIPCW-AIPTW &  & 0.0018 & 0.048 & 0.060 & 0.060 & 0.060 & 0.060 & 0.986 & 0.884 & 0.950\\
\cmidrule{1-1}
\cmidrule{3-11}
CGR (RA) &  & -0.0024 & 0.053 & 0.055 & 0.055 & 0.055 & 0.055 & 1.078 & 0.943 & 0.954\\
\cmidrule{1-1}
\cmidrule{3-11}
MGR & \multirow{-3}{*}{\raggedright\arraybackslash 8000} & 0.0007 & 0.048 & 0.060 & 0.059 & 0.059 & 0.060 & 1.000 & 0.880 & 0.952\\
\bottomrule
\end{tabular}
\begin{tablenotes}
\item Abbreviations: ATE: average treatment effect; RR: relative risk; RR: relative risk; RR: relative risk; Est: Estimator; Med: Median; ASE: median asymptotic standard error; ESE: empirical standard error; MAD: mean absolute deviation; rRMSE: robust root mean squared error (RMSE) (square root of med bias squared + MAD squared); RMSE: RMSE (mean bias squared + ESE squared); Rel. Eff.: relative efficiency (rRMSE relative to that of MGR); Cover: coverage based on the ASE; ESE Cover: coverage based on the ESE; CGR: conditional generalized raking (GR); IPTW: inverse probability of treatment weighting; RA: regression adjustment; MGR: marginal GR; AIPCW: augmented inverse probability of coarsening weighting; AIPTW: augmented IPTW. Results are based on 2500 Monte-Carlo replications. 
\end{tablenotes}
\end{threeparttable}
\end{table}

\begin{table}[!h]
\centering
\caption{Performance of CGR (IPTW) estimators of the ATE and RR in Scenario 3b, a binary outcome setting with approximately 50\% missing data and a misspecified treatment propensity model.\label{tab:sim_combined_performance_tab_s6_cgr_iptw}}
\centering
\fontsize{9}{11}\selectfont
\begin{threeparttable}
\begin{tabular}[t]{>{\raggedright\arraybackslash}p{3.5em}lrrrrrrrrr}
\toprule
Estimand & n & Med. bias & ASE & ESE & MAD & rRMSE & RMSE & Rel. Eff. & Cover & ESE Cover\\
\midrule
ATE & 1000 & 0.0370 & 0.020 & 0.020 & 0.020 & 0.042 & 0.042 & 0.388 & 0.538 & 0.531\\
\cmidrule{1-11}
ATE & 2000 & 0.0378 & 0.014 & 0.014 & 0.014 & 0.040 & 0.040 & 0.262 & 0.212 & 0.220\\
\cmidrule{1-11}
ATE & 4000 & 0.0378 & 0.010 & 0.010 & 0.010 & 0.039 & 0.039 & 0.191 & 0.027 & 0.030\\
\cmidrule{1-11}
ATE & 8000 & 0.0380 & 0.007 & 0.007 & 0.007 & 0.039 & 0.039 & 0.135 & 0.000 & 0.000\\
\cmidrule{1-11}
RR & 1000 & 0.7455 & 0.217 & 0.217 & 0.214 & 0.776 & 0.846 & 0.236 & 0.553 & 0.551\\
\cmidrule{1-11}
RR & 2000 & 0.7623 & 0.151 & 0.153 & 0.153 & 0.778 & 0.811 & 0.153 & 0.222 & 0.239\\
\cmidrule{1-11}
RR & 4000 & 0.7585 & 0.106 & 0.108 & 0.107 & 0.766 & 0.785 & 0.111 & 0.028 & 0.032\\
\cmidrule{1-11}
RR & 8000 & 0.7634 & 0.075 & 0.076 & 0.075 & 0.767 & 0.780 & 0.077 & 0.000 & 0.000\\
\bottomrule
\end{tabular}
\begin{tablenotes}
\item Abbreviations: Med: Median; ASE: median asymptotic standard error; ESE: empirical standard error; MAD: mean absolute deviation; rRMSE: robust root mean squared error (RMSE) (square root of med bias squared + MAD squared); RMSE: RMSE (mean bias squared + ESE squared); Rel. Eff.: relative efficiency (rRMSE relative to that of MGR); Cover: coverage based on the ASE; ESE Cover: coverage based on the ESE; CGR: conditional generalized raking (GR); IPTW: inverse probability of treatment weighting. Results are based on 2500 Monte-Carlo replications. 
\end{tablenotes}
\end{threeparttable}
\end{table}

\begin{table}[!h]
\centering
\caption{Performance of estimators of the ATE in Scenario 4a, a continuous outcome setting with approximately 50\% missing data and a misspecified outcome regression model.\label{tab:sim_combined_performance_tab_s7_ATE}}
\centering
\fontsize{9}{11}\selectfont
\begin{threeparttable}
\begin{tabular}[t]{>{\raggedright\arraybackslash}p{3.5em}lrrrrrrrrr}
\toprule
Est. & n & Med. bias & ASE & ESE & MAD & rRMSE & RMSE & Rel. Eff. & Cover & ESE Cover\\
\midrule
AIPCW-AIPTW &  & -0.0255 & 0.411 & 0.599 & 0.442 & 0.443 & 0.599 & 0.815 & 0.933 & 0.964\\
\cmidrule{1-1}
\cmidrule{3-11}
CGR (IPTW) &  & -0.0257 & 0.491 & 0.591 & 0.504 & 0.505 & 0.591 & 0.716 & 0.939 & 0.947\\
\cmidrule{1-1}
\cmidrule{3-11}
CGR (RA) &  & -0.3607 & 0.306 & 0.313 & 0.316 & 0.479 & 0.476 & 0.753 & 0.773 & 0.793\\
\cmidrule{1-1}
\cmidrule{3-11}
MGR & \multirow{-4}{*}{\raggedright\arraybackslash 250} & -0.0241 & 0.414 & 0.519 & 0.360 & 0.361 & 0.519 & 1.000 & 0.964 & 0.963\\
\cmidrule{1-11}
AIPCW-AIPTW &  & -0.0280 & 0.256 & 0.386 & 0.289 & 0.291 & 0.386 & 0.905 & 0.912 & 0.963\\
\cmidrule{1-1}
\cmidrule{3-11}
CGR (IPTW) &  & -0.0243 & 0.353 & 0.413 & 0.360 & 0.361 & 0.414 & 0.730 & 0.948 & 0.948\\
\cmidrule{1-1}
\cmidrule{3-11}
CGR (RA) &  & -0.3650 & 0.216 & 0.221 & 0.219 & 0.426 & 0.426 & 0.618 & 0.604 & 0.621\\
\cmidrule{1-1}
\cmidrule{3-11}
MGR & \multirow{-4}{*}{\raggedright\arraybackslash 500} & -0.0223 & 0.257 & 0.365 & 0.262 & 0.263 & 0.365 & 1.000 & 0.938 & 0.968\\
\cmidrule{1-11}
AIPCW-AIPTW &  & -0.0239 & 0.167 & 0.251 & 0.207 & 0.209 & 0.251 & 0.949 & 0.875 & 0.963\\
\cmidrule{1-1}
\cmidrule{3-11}
CGR (IPTW) &  & -0.0227 & 0.256 & 0.280 & 0.263 & 0.264 & 0.280 & 0.751 & 0.955 & 0.943\\
\cmidrule{1-1}
\cmidrule{3-11}
CGR (RA) &  & -0.3694 & 0.152 & 0.155 & 0.154 & 0.400 & 0.400 & 0.495 & 0.327 & 0.336\\
\cmidrule{1-1}
\cmidrule{3-11}
MGR & \multirow{-4}{*}{\raggedright\arraybackslash 1000} & -0.0181 & 0.167 & 0.240 & 0.197 & 0.198 & 0.240 & 1.000 & 0.904 & 0.966\\
\cmidrule{1-11}
AIPCW-AIPTW &  & -0.0159 & 0.113 & 0.157 & 0.141 & 0.142 & 0.157 & 0.974 & 0.866 & 0.956\\
\cmidrule{1-1}
\cmidrule{3-11}
CGR (IPTW) &  & -0.0089 & 0.186 & 0.197 & 0.193 & 0.193 & 0.197 & 0.716 & 0.954 & 0.950\\
\cmidrule{1-1}
\cmidrule{3-11}
CGR (RA) &  & -0.3665 & 0.107 & 0.111 & 0.110 & 0.383 & 0.381 & 0.361 & 0.079 & 0.087\\
\cmidrule{1-1}
\cmidrule{3-11}
MGR & \multirow{-4}{*}{\raggedright\arraybackslash 2000} & -0.0127 & 0.113 & 0.152 & 0.137 & 0.138 & 0.152 & 1.000 & 0.878 & 0.958\\
\bottomrule
\end{tabular}
\begin{tablenotes}
\item Abbreviations: ATE: average treatment effect; RR: relative risk; RR: relative risk; RR: relative risk; Est: Estimator; Med: Median; ASE: median asymptotic standard error; ESE: empirical standard error; MAD: mean absolute deviation; rRMSE: robust root mean squared error (RMSE) (square root of med bias squared + MAD squared); RMSE: RMSE (mean bias squared + ESE squared); Rel. Eff.: relative efficiency (rRMSE relative to that of MGR); Cover: coverage based on the ASE; ESE Cover: coverage based on the ESE; CGR: conditional generalized raking (GR); IPTW: inverse probability of treatment weighting; RA: regression adjustment; MGR: marginal GR; AIPCW: augmented inverse probability of coarsening weighting; AIPTW: augmented IPTW. Results are based on 2500 Monte-Carlo replications. 
\end{tablenotes}
\end{threeparttable}
\end{table}

\begin{table}[!h]
\centering
\caption{Performance of estimators of the RR in Scenario 4b, a binary outcome setting with approximately 50\% missing data and a misspecified outcome regression model.\label{tab:sim_combined_performance_tab_s8_RR}}
\centering
\fontsize{9}{11}\selectfont
\begin{threeparttable}
\begin{tabular}[t]{>{\raggedright\arraybackslash}p{3.5em}lrrrrrrrrr}
\toprule
Est. & n & Med. bias & ASE & ESE & MAD & rRMSE & RMSE & Rel. Eff. & Cover & ESE Cover\\
\midrule
AIPCW-AIPTW &  & 0.0530 & 0.266 & 0.378 & 0.341 & 0.345 & 0.396 & 0.957 & 0.880 & 0.959\\
\cmidrule{1-1}
\cmidrule{3-11}
CGR (RA) &  & 0.7825 & 0.220 & 0.216 & 0.211 & 0.811 & 0.869 & 0.408 & 0.526 & 0.514\\
\cmidrule{1-1}
\cmidrule{3-11}
MGR & \multirow{-3}{*}{\raggedright\arraybackslash 1000} & 0.0658 & 0.270 & 0.360 & 0.324 & 0.330 & 0.380 & 1.000 & 0.892 & 0.951\\
\cmidrule{1-11}
AIPCW-AIPTW &  & 0.0256 & 0.171 & 0.252 & 0.237 & 0.239 & 0.259 & 1.013 & 0.834 & 0.950\\
\cmidrule{1-1}
\cmidrule{3-11}
CGR (RA) &  & 0.7578 & 0.151 & 0.151 & 0.148 & 0.772 & 0.811 & 0.313 & 0.218 & 0.222\\
\cmidrule{1-1}
\cmidrule{3-11}
MGR & \multirow{-3}{*}{\raggedright\arraybackslash 2000} & 0.0345 & 0.172 & 0.249 & 0.239 & 0.242 & 0.257 & 1.000 & 0.847 & 0.951\\
\cmidrule{1-11}
AIPCW-AIPTW &  & 0.0092 & 0.114 & 0.180 & 0.175 & 0.175 & 0.182 & 1.005 & 0.806 & 0.952\\
\cmidrule{1-1}
\cmidrule{3-11}
CGR (RA) &  & 0.7664 & 0.106 & 0.105 & 0.104 & 0.773 & 0.783 & 0.228 & 0.028 & 0.031\\
\cmidrule{1-1}
\cmidrule{3-11}
MGR & \multirow{-3}{*}{\raggedright\arraybackslash 4000} & 0.0185 & 0.114 & 0.180 & 0.175 & 0.176 & 0.183 & 1.000 & 0.804 & 0.949\\
\cmidrule{1-11}
AIPCW-AIPTW &  & 0.0031 & 0.078 & 0.133 & 0.126 & 0.126 & 0.133 & 0.986 & 0.778 & 0.957\\
\cmidrule{1-1}
\cmidrule{3-11}
CGR (RA) &  & 0.7559 & 0.074 & 0.072 & 0.073 & 0.759 & 0.769 & 0.163 & 0.000 & 0.000\\
\cmidrule{1-1}
\cmidrule{3-11}
MGR & \multirow{-3}{*}{\raggedright\arraybackslash 8000} & 0.0095 & 0.078 & 0.132 & 0.123 & 0.124 & 0.133 & 1.000 & 0.778 & 0.956\\
\bottomrule
\end{tabular}
\begin{tablenotes}
\item Abbreviations: ATE: average treatment effect; RR: relative risk; RR: relative risk; RR: relative risk; RR: relative risk; Est: Estimator; Med: Median; ASE: median asymptotic standard error; ESE: empirical standard error; MAD: mean absolute deviation; rRMSE: robust root mean squared error (RMSE) (square root of med bias squared + MAD squared); RMSE: RMSE (mean bias squared + ESE squared); Rel. Eff.: relative efficiency (rRMSE relative to that of MGR); Cover: coverage based on the ASE; ESE Cover: coverage based on the ESE; CGR: conditional generalized raking (GR); IPTW: inverse probability of treatment weighting; RA: regression adjustment; MGR: marginal GR; AIPCW: augmented inverse probability of coarsening weighting; AIPTW: augmented IPTW. Results are based on 2500 Monte-Carlo replications. 
\end{tablenotes}
\end{threeparttable}
\end{table}

\begin{table}[!h]
\centering
\caption{Performance of CGR (IPTW) estimators of the ATE and RR in Scenario 4b, a binary outcome setting with approximately 50\% missing data and a misspecified outcome regression model.\label{tab:sim_combined_performance_tab_s8_cgr_iptw}}
\centering
\fontsize{9}{11}\selectfont
\begin{threeparttable}
\begin{tabular}[t]{>{\raggedright\arraybackslash}p{3.5em}lrrrrrrrrr}
\toprule
Estimand & n & Med. bias & ASE & ESE & MAD & rRMSE & RMSE & Rel. Eff. & Cover & ESE Cover\\
\midrule
ATE & 1000 & 0.0004 & 0.023 & 0.027 & 0.025 & 0.025 & 0.027 & 1.069 & 0.950 & 0.950\\
\cmidrule{1-11}
ATE & 2000 & 0.0002 & 0.017 & 0.019 & 0.019 & 0.019 & 0.019 & 1.090 & 0.944 & 0.952\\
\cmidrule{1-11}
ATE & 4000 & $<$ 0.0001 & 0.012 & 0.013 & 0.013 & 0.013 & 0.013 & 1.174 & 0.948 & 0.949\\
\cmidrule{1-11}
ATE & 8000 & 0.0002 & 0.009 & 0.010 & 0.009 & 0.009 & 0.010 & 1.144 & 0.945 & 0.952\\
\cmidrule{1-11}
RR & 1000 & 0.0217 & 0.265 & 0.306 & 0.308 & 0.308 & 0.318 & 1.072 & 0.930 & 0.947\\
\cmidrule{1-11}
RR & 2000 & 0.0050 & 0.192 & 0.215 & 0.213 & 0.213 & 0.219 & 1.134 & 0.927 & 0.953\\
\cmidrule{1-11}
RR & 4000 & 0.0008 & 0.137 & 0.149 & 0.149 & 0.149 & 0.151 & 1.185 & 0.937 & 0.948\\
\cmidrule{1-11}
RR & 8000 & 0.0011 & 0.099 & 0.110 & 0.109 & 0.109 & 0.111 & 1.135 & 0.938 & 0.956\\
\bottomrule
\end{tabular}
\begin{tablenotes}
\item Abbreviations: Med: Median; ASE: median asymptotic standard error; ESE: empirical standard error; MAD: mean absolute deviation; rRMSE: robust root mean squared error (RMSE) (square root of med bias squared + MAD squared); RMSE: RMSE (mean bias squared + ESE squared); Rel. Eff.: relative efficiency (rRMSE relative to that of MGR); Cover: coverage based on the ASE; ESE Cover: coverage based on the ESE; CGR: conditional generalized raking (GR); IPTW: inverse probability of treatment weighting. Results are based on 2500 Monte-Carlo replications. 
\end{tablenotes}
\end{threeparttable}
\end{table}

\begin{table}[!h]
\centering
\caption{Performance of estimators of the ATE in Scenario 5a, a continuous outcome setting with approximately 50\% missing data and a misspecified missing-data model.\label{tab:sim_combined_performance_tab_s9_ATE}}
\centering
\fontsize{9}{11}\selectfont
\begin{threeparttable}
\begin{tabular}[t]{>{\raggedright\arraybackslash}p{3.5em}lrrrrrrrrr}
\toprule
Est. & n & Med. bias & ASE & ESE & MAD & rRMSE & RMSE & Rel. Eff. & Cover & ESE Cover\\
\midrule
AIPCW-AIPTW &  & -0.0073 & 0.210 & 0.204 & 0.208 & 0.208 & 0.204 & 0.899 & 0.959 & 0.949\\
\cmidrule{1-1}
\cmidrule{3-11}
CGR (IPTW) &  & -0.0228 & 0.342 & 0.349 & 0.335 & 0.336 & 0.349 & 0.557 & 0.964 & 0.948\\
\cmidrule{1-1}
\cmidrule{3-11}
CGR (RA) &  & -0.0020 & 0.171 & 0.182 & 0.182 & 0.182 & 0.182 & 1.027 & 0.934 & 0.952\\
\cmidrule{1-1}
\cmidrule{3-11}
MGR & \multirow{-4}{*}{\raggedright\arraybackslash 250} & -0.0042 & 0.210 & 0.187 & 0.187 & 0.187 & 0.188 & 1.000 & 0.977 & 0.957\\
\cmidrule{1-11}
AIPCW-AIPTW &  & -0.0061 & 0.137 & 0.128 & 0.130 & 0.130 & 0.128 & 0.945 & 0.963 & 0.952\\
\cmidrule{1-1}
\cmidrule{3-11}
CGR (IPTW) &  & 0.0008 & 0.237 & 0.233 & 0.234 & 0.234 & 0.233 & 0.526 & 0.966 & 0.950\\
\cmidrule{1-1}
\cmidrule{3-11}
CGR (RA) &  & -0.0049 & 0.121 & 0.121 & 0.125 & 0.125 & 0.121 & 0.986 & 0.949 & 0.950\\
\cmidrule{1-1}
\cmidrule{3-11}
MGR & \multirow{-4}{*}{\raggedright\arraybackslash 500} & -0.0053 & 0.137 & 0.123 & 0.123 & 0.123 & 0.123 & 1.000 & 0.970 & 0.952\\
\cmidrule{1-11}
AIPCW-AIPTW &  & -0.0003 & 0.092 & 0.089 & 0.088 & 0.088 & 0.089 & 0.960 & 0.957 & 0.948\\
\cmidrule{1-1}
\cmidrule{3-11}
CGR (IPTW) &  & 0.0053 & 0.166 & 0.163 & 0.160 & 0.161 & 0.163 & 0.526 & 0.955 & 0.948\\
\cmidrule{1-1}
\cmidrule{3-11}
CGR (RA) &  & -0.0010 & 0.086 & 0.085 & 0.085 & 0.085 & 0.085 & 0.990 & 0.949 & 0.948\\
\cmidrule{1-1}
\cmidrule{3-11}
MGR & \multirow{-4}{*}{\raggedright\arraybackslash 1000} & -0.0028 & 0.091 & 0.086 & 0.084 & 0.084 & 0.086 & 1.000 & 0.963 & 0.949\\
\cmidrule{1-11}
AIPCW-AIPTW &  & 0.0008 & 0.063 & 0.061 & 0.063 & 0.063 & 0.061 & 0.950 & 0.956 & 0.948\\
\cmidrule{1-1}
\cmidrule{3-11}
CGR (IPTW) &  & 0.0041 & 0.117 & 0.112 & 0.115 & 0.115 & 0.112 & 0.520 & 0.964 & 0.955\\
\cmidrule{1-1}
\cmidrule{3-11}
CGR (RA) &  & 0.0002 & 0.061 & 0.061 & 0.061 & 0.061 & 0.061 & 0.975 & 0.950 & 0.947\\
\cmidrule{1-1}
\cmidrule{3-11}
MGR & \multirow{-4}{*}{\raggedright\arraybackslash 2000} & 0.0002 & 0.063 & 0.060 & 0.060 & 0.060 & 0.060 & 1.000 & 0.961 & 0.950\\
\bottomrule
\end{tabular}
\begin{tablenotes}
\item Abbreviations: ATE: average treatment effect; RR: relative risk; RR: relative risk; RR: relative risk; RR: relative risk; Est: Estimator; Med: Median; ASE: median asymptotic standard error; ESE: empirical standard error; MAD: mean absolute deviation; rRMSE: robust root mean squared error (RMSE) (square root of med bias squared + MAD squared); RMSE: RMSE (mean bias squared + ESE squared); Rel. Eff.: relative efficiency (rRMSE relative to that of MGR); Cover: coverage based on the ASE; ESE Cover: coverage based on the ESE; CGR: conditional generalized raking (GR); IPTW: inverse probability of treatment weighting; RA: regression adjustment; MGR: marginal GR; AIPCW: augmented inverse probability of coarsening weighting; AIPTW: augmented IPTW. Results are based on 2500 Monte-Carlo replications. 
\end{tablenotes}
\end{threeparttable}
\end{table}

\begin{table}[!h]
\centering
\caption{Performance of estimators of the ATE in Scenario 5b, a binary outcome setting with approximately 50\% missing data and a misspecified missing-data model.\label{tab:sim_combined_performance_tab_s10_ATE}}
\centering
\fontsize{9}{11}\selectfont
\begin{threeparttable}
\begin{tabular}[t]{>{\raggedright\arraybackslash}p{3.5em}lrrrrrrrrr}
\toprule
Est. & n & Med. bias & ASE & ESE & MAD & rRMSE & RMSE & Rel. Eff. & Cover & ESE Cover\\
\midrule
AIPCW-AIPTW &  & -5 $\times 10^{-04}$ & 0.021 & 0.020 & 0.021 & 0.021 & 0.020 & 0.976 & 0.964 & 0.955\\
\cmidrule{1-1}
\cmidrule{3-11}
CGR (IPTW) &  & -2 $\times 10^{-04}$ & 0.021 & 0.021 & 0.021 & 0.021 & 0.021 & 0.937 & 0.951 & 0.951\\
\cmidrule{1-1}
\cmidrule{3-11}
CGR (RA) &  & -5 $\times 10^{-04}$ & 0.019 & 0.019 & 0.020 & 0.020 & 0.019 & 1.001 & 0.948 & 0.958\\
\cmidrule{1-1}
\cmidrule{3-11}
MGR & \multirow{-4}{*}{\raggedright\arraybackslash 1000} & -1 $\times 10^{-03}$ & 0.021 & 0.020 & 0.020 & 0.020 & 0.020 & 1.000 & 0.969 & 0.952\\
\cmidrule{1-11}
AIPCW-AIPTW &  & 7 $\times 10^{-04}$ & 0.014 & 0.014 & 0.014 & 0.015 & 0.014 & 0.953 & 0.941 & 0.945\\
\cmidrule{1-1}
\cmidrule{3-11}
CGR (IPTW) &  & -3 $\times 10^{-04}$ & 0.015 & 0.015 & 0.014 & 0.014 & 0.015 & 0.967 & 0.946 & 0.946\\
\cmidrule{1-1}
\cmidrule{3-11}
CGR (RA) &  & 4 $\times 10^{-04}$ & 0.013 & 0.014 & 0.014 & 0.014 & 0.014 & 0.997 & 0.948 & 0.954\\
\cmidrule{1-1}
\cmidrule{3-11}
MGR & \multirow{-4}{*}{\raggedright\arraybackslash 2000} & 3 $\times 10^{-04}$ & 0.014 & 0.014 & 0.014 & 0.014 & 0.014 & 1.000 & 0.955 & 0.954\\
\cmidrule{1-11}
AIPCW-AIPTW &  & 9 $\times 10^{-04}$ & 0.010 & 0.010 & 0.010 & 0.010 & 0.010 & 0.949 & 0.938 & 0.944\\
\cmidrule{1-1}
\cmidrule{3-11}
CGR (IPTW) &  & $<$ 0.0001 & 0.010 & 0.010 & 0.010 & 0.010 & 0.010 & 0.920 & 0.948 & 0.944\\
\cmidrule{1-1}
\cmidrule{3-11}
CGR (RA) &  & 2 $\times 10^{-04}$ & 0.009 & 0.009 & 0.009 & 0.009 & 0.009 & 1.025 & 0.950 & 0.950\\
\cmidrule{1-1}
\cmidrule{3-11}
MGR & \multirow{-4}{*}{\raggedright\arraybackslash 4000} & $<$ 0.0001 & 0.010 & 0.009 & 0.009 & 0.009 & 0.009 & 1.000 & 0.951 & 0.948\\
\cmidrule{1-11}
AIPCW-AIPTW &  & 5 $\times 10^{-04}$ & 0.007 & 0.007 & 0.007 & 0.007 & 0.007 & 0.978 & 0.937 & 0.952\\
\cmidrule{1-1}
\cmidrule{3-11}
CGR (IPTW) &  & -3 $\times 10^{-04}$ & 0.007 & 0.007 & 0.007 & 0.007 & 0.007 & 0.929 & 0.948 & 0.949\\
\cmidrule{1-1}
\cmidrule{3-11}
CGR (RA) &  & -1 $\times 10^{-04}$ & 0.007 & 0.007 & 0.007 & 0.007 & 0.007 & 1.012 & 0.940 & 0.950\\
\cmidrule{1-1}
\cmidrule{3-11}
MGR & \multirow{-4}{*}{\raggedright\arraybackslash 8000} & -1 $\times 10^{-04}$ & 0.007 & 0.007 & 0.007 & 0.007 & 0.007 & 1.000 & 0.943 & 0.952\\
\bottomrule
\end{tabular}
\begin{tablenotes}
\item Abbreviations: ATE: average treatment effect; RR: relative risk; RR: relative risk; RR: relative risk; RR: relative risk; RR: relative risk; Est: Estimator; Med: Median; ASE: median asymptotic standard error; ESE: empirical standard error; MAD: mean absolute deviation; rRMSE: robust root mean squared error (RMSE) (square root of med bias squared + MAD squared); RMSE: RMSE (mean bias squared + ESE squared); Rel. Eff.: relative efficiency (rRMSE relative to that of MGR); Cover: coverage based on the ASE; ESE Cover: coverage based on the ESE; CGR: conditional generalized raking (GR); IPTW: inverse probability of treatment weighting; RA: regression adjustment; MGR: marginal GR; AIPCW: augmented inverse probability of coarsening weighting; AIPTW: augmented IPTW. Results are based on 2500 Monte-Carlo replications. 
\end{tablenotes}
\end{threeparttable}
\end{table}

\begin{table}[!h]
\centering
\caption{Performance of estimators of the RR in Scenario 5b, a binary outcome setting with approximately 50\% missing data and a misspecified missing-data model.\label{tab:sim_combined_performance_tab_s10_RR}}
\centering
\fontsize{9}{11}\selectfont
\begin{threeparttable}
\begin{tabular}[t]{>{\raggedright\arraybackslash}p{3.5em}lrrrrrrrrr}
\toprule
Est. & n & Med. bias & ASE & ESE & MAD & rRMSE & RMSE & Rel. Eff. & Cover & ESE Cover\\
\midrule
AIPCW-AIPTW &  & -0.0125 & 0.228 & 0.209 & 0.208 & 0.208 & 0.217 & 0.972 & 0.971 & 0.952\\
\cmidrule{1-1}
\cmidrule{3-11}
CGR (IPTW) &  & 0.0140 & 0.213 & 0.217 & 0.217 & 0.217 & 0.232 & 0.933 & 0.951 & 0.952\\
\cmidrule{1-1}
\cmidrule{3-11}
CGR (RA) &  & 0.0180 & 0.196 & 0.200 & 0.205 & 0.205 & 0.214 & 0.986 & 0.949 & 0.950\\
\cmidrule{1-1}
\cmidrule{3-11}
MGR & \multirow{-4}{*}{\raggedright\arraybackslash 1000} & -0.0010 & 0.230 & 0.202 & 0.202 & 0.202 & 0.214 & 1.000 & 0.976 & 0.950\\
\cmidrule{1-11}
AIPCW-AIPTW &  & 0.0053 & 0.149 & 0.145 & 0.143 & 0.143 & 0.148 & 0.985 & 0.958 & 0.947\\
\cmidrule{1-1}
\cmidrule{3-11}
CGR (IPTW) &  & 0.0036 & 0.149 & 0.149 & 0.144 & 0.144 & 0.154 & 0.977 & 0.944 & 0.943\\
\cmidrule{1-1}
\cmidrule{3-11}
CGR (RA) &  & 0.0167 & 0.137 & 0.139 & 0.136 & 0.137 & 0.145 & 1.026 & 0.946 & 0.948\\
\cmidrule{1-1}
\cmidrule{3-11}
MGR & \multirow{-4}{*}{\raggedright\arraybackslash 2000} & 0.0152 & 0.149 & 0.142 & 0.140 & 0.141 & 0.147 & 1.000 & 0.965 & 0.948\\
\cmidrule{1-11}
AIPCW-AIPTW &  & -0.0071 & 0.101 & 0.096 & 0.094 & 0.095 & 0.097 & 0.982 & 0.957 & 0.946\\
\cmidrule{1-1}
\cmidrule{3-11}
CGR (IPTW) &  & 0.0066 & 0.105 & 0.101 & 0.095 & 0.096 & 0.104 & 0.971 & 0.958 & 0.946\\
\cmidrule{1-1}
\cmidrule{3-11}
CGR (RA) &  & 0.0125 & 0.096 & 0.093 & 0.091 & 0.092 & 0.096 & 1.008 & 0.956 & 0.946\\
\cmidrule{1-1}
\cmidrule{3-11}
MGR & \multirow{-4}{*}{\raggedright\arraybackslash 4000} & 0.0069 & 0.101 & 0.094 & 0.093 & 0.093 & 0.096 & 1.000 & 0.965 & 0.948\\
\cmidrule{1-11}
AIPCW-AIPTW &  & -0.0175 & 0.070 & 0.071 & 0.072 & 0.074 & 0.071 & 0.943 & 0.949 & 0.952\\
\cmidrule{1-1}
\cmidrule{3-11}
CGR (IPTW) &  & -0.0025 & 0.074 & 0.074 & 0.073 & 0.074 & 0.074 & 0.949 & 0.949 & 0.951\\
\cmidrule{1-1}
\cmidrule{3-11}
CGR (RA) &  & -0.0045 & 0.067 & 0.069 & 0.069 & 0.070 & 0.069 & 1.002 & 0.948 & 0.952\\
\cmidrule{1-1}
\cmidrule{3-11}
MGR & \multirow{-4}{*}{\raggedright\arraybackslash 8000} & -0.0049 & 0.070 & 0.069 & 0.070 & 0.070 & 0.069 & 1.000 & 0.958 & 0.952\\
\bottomrule
\end{tabular}
\begin{tablenotes}
\item Abbreviations: ATE: average treatment effect; RR: relative risk; RR: relative risk; RR: relative risk; RR: relative risk; RR: relative risk; Est: Estimator; Med: Median; ASE: median asymptotic standard error; ESE: empirical standard error; MAD: mean absolute deviation; rRMSE: robust root mean squared error (RMSE) (square root of med bias squared + MAD squared); RMSE: RMSE (mean bias squared + ESE squared); Rel. Eff.: relative efficiency (rRMSE relative to that of MGR); Cover: coverage based on the ASE; ESE Cover: coverage based on the ESE; CGR: conditional generalized raking (GR); IPTW: inverse probability of treatment weighting; RA: regression adjustment; MGR: marginal GR; AIPCW: augmented inverse probability of coarsening weighting; AIPTW: augmented IPTW. Results are based on 2500 Monte-Carlo replications. 
\end{tablenotes}
\end{threeparttable}
\end{table}

\section{Additional results from the VCCC analysis}

In Table~\ref{tab:vccc_num_prop_errors}, we present the number and proportion of records that had an error for each error-prone variable and overall. In Table~\ref{tab:vccc_unvalidated_class_perf}, we display the number of true and false positives and true and false negatives when using the unvalidated variable (either the outcome, AIDS-defining event or death within 5 years of treatment initiation, or the exposure, starting a PI-containing regimen) as a predictor of the validated variable. We also present sensitivity, specificity, positive predictive value (PPV), and negative predictive value (NPV).

\begin{table}[!h]
\centering
\caption{Number and proportion of errors in variables in the VCCC data. \label{tab:vccc_num_prop_errors}}
\centering
\fontsize{9}{11}\selectfont
\begin{threeparttable}
\begin{tabular}[t]{>{\raggedright\arraybackslash}p{30em}r}
\toprule
Variable & N (percent)\\
\midrule
AIDS-defining event or death within 5 years & 57 (9\%)\\
CD4 count & 39 (6\%)\\
Year of ART initiation & 20 (3\%)\\
Initiated PI-containing ART regimen & 9 (1\%)\\
Error in any variable & 97 (15\%)\\
\bottomrule
\end{tabular}
\begin{tablenotes}
\item Abbreviations: ART: antiretroviral treatment; PI: protease inhibitor.
\end{tablenotes}
\end{threeparttable}
\end{table}

\begin{table}[!h]
\centering
\caption{Classification performance of the unvalidated variables in the VCCC data. \label{tab:vccc_unvalidated_class_perf}}
\centering
\fontsize{9}{11}\selectfont
\begin{threeparttable}
\begin{tabular}[t]{>{\raggedright\arraybackslash}p{15em}rrrrrrrr}
\toprule
Variable & TP & TN & FP & FN & Sensitivity & Specificity & PPV & NPV\\
\midrule
AIDS-defining event or death within 5 years & 58 & 513 & 54 & 3 & 0.951 & 0.905 & 0.518 & 0.994\\
PI-containing ART regiment & 262 & 357 & 5 & 4 & 0.985 & 0.986 & 0.981 & 0.989\\
\bottomrule
\end{tabular}
\begin{tablenotes}
\item Abbreviations: TP: true positive; TN: true negative; FP: false positive; FN: false negative; PPV: positive predictive value; NPV: negative predictive value; ART: antiretroviral treatment; PI: protease inhibitor.
\end{tablenotes}
\end{threeparttable}
\end{table}

In Table~\ref{tab:vccc_descriptive}, we present descriptive statistics for the Vanderbilt Comprehensive Care Clinic (VCCC) data analysis, by source dataset (unvalidated or validated data).

\begin{table}

\caption{\label{tab:vccc_descriptive}Descriptive statistics for the VCCC data, by source (unvalidated vs validated)}
\centering
\begin{tabular}[t]{lcc}
\toprule
\multicolumn{1}{c}{ } & \multicolumn{2}{c}{\textbf{Source}} \\
\cmidrule(l{3pt}r{3pt}){2-3}
\textbf{Characteristic} & \makecell[c]{\textbf{Unvalidated}\\N = 628} & \makecell[c]{\textbf{Validated}\\N = 628}\\
\midrule
Age at treatment initiation & 37.3 (10.1) & 37.3 (10.1)\\
Self-reported Black race & 273 (43\%) & 273 (43\%)\\
Self-reported injection drug use & 50 (8.0\%) & 50 (8.0\%)\\
Female & 164 (26\%) & 164 (26\%)\\
Year of treatment initiation &  & \\
\hspace{1em}1998 & 29 (4.6\%) & 25 (4.0\%)\\
\hspace{1em}1999 & 27 (4.3\%) & 29 (4.6\%)\\
\hspace{1em}2000 & 40 (6.4\%) & 40 (6.4\%)\\
\hspace{1em}2001 & 31 (4.9\%) & 32 (5.1\%)\\
\hspace{1em}2002 & 37 (5.9\%) & 39 (6.2\%)\\
\hspace{1em}2003 & 31 (4.9\%) & 29 (4.6\%)\\
\hspace{1em}2004 & 45 (7.2\%) & 45 (7.2\%)\\
\hspace{1em}2005 & 31 (4.9\%) & 30 (4.8\%)\\
\hspace{1em}2006 & 52 (8.3\%) & 52 (8.3\%)\\
\hspace{1em}2007 & 39 (6.2\%) & 40 (6.4\%)\\
\hspace{1em}2008 & 63 (10\%) & 62 (9.9\%)\\
\hspace{1em}2009 & 92 (15\%) & 89 (14\%)\\
\hspace{1em}2010 & 49 (7.8\%) & 53 (8.4\%)\\
\hspace{1em}2011 & 32 (5.1\%) & 34 (5.4\%)\\
\hspace{1em}2012 & 19 (3.0\%) & 18 (2.9\%)\\
\hspace{1em}2013 & 11 (1.8\%) & 11 (1.8\%)\\
CD4 count at treatment initiation (square root) & 16.1 (5.9) & 16.0 (5.9)\\
Protease inhibitor-containing regimen & 267 (43\%) & 266 (42\%)\\
ADE or death within 5 years of treatment initiation & 112 (18\%) & 61 (9.7\%)\\
\bottomrule
\multicolumn{3}{l}{\rule{0pt}{1em}\textsuperscript{1} For categorical variables, we report n (\%); for continuous variables, we report mean (SD).}\\
\end{tabular}
\end{table}

\clearpage
\fi

{\small
\bibliographystyle{chicago}
\bibliography{papers}
}

\end{document}